\def\llncs{0}
\def\fullpage{1}
\def\anonymous{0}
\def\authnote{0}
\def\notxfont{0}
\def\submission{0}

\ifnum\submission=1
\def\llncs{1}
\fi

\ifnum\llncs=1 	\documentclass[envcountsect,a4paper,runningheads,10pt]{llncs}
\else
	\documentclass[letterpaper,hmargin=1.0in,vmargin=1.0in,11pt]{article}
			\ifnum\fullpage=1
		\usepackage{fullpage}
		\fi
\fi

\usepackage[%
  colorlinks=true,
  citecolor=blue,
  pagebackref=true
]{hyperref}

\usepackage{amsmath, amsfonts, amssymb, mathtools,amscd}

\usepackage{amsthm}

\usepackage{lmodern}
\usepackage[T1]{fontenc}
\usepackage[utf8]{inputenc}

\usepackage{arydshln} 
\usepackage{url}
\usepackage{ifthen}
\usepackage{bm}
\usepackage{multirow}
\usepackage[dvips]{graphicx}
\usepackage[usenames]{color}
\usepackage{xcolor,colortbl} 
\usepackage{threeparttable}
\usepackage{comment}
\usepackage{paralist,verbatim}
\usepackage{cases}
\usepackage{booktabs}
\usepackage{braket}
\usepackage{cancel} 
\usepackage{ascmac} 
\usepackage{framed}
\usepackage{authblk}
\usepackage{pifont}
\usepackage{qcircuit}
\usepackage{tikz}
\definecolor{darkblue}{rgb}{0,0,0.6}
\definecolor{darkgreen}{rgb}{0,0.5,0}
\definecolor{maroon}{rgb}{0.5,0.1,0.1}
\definecolor{dpurple}{rgb}{0.2,0,0.65}

\usepackage[capitalise,noabbrev]{cleveref}
\usepackage[absolute]{textpos}
\usepackage[final]{microtype}
\usepackage[absolute]{textpos}
\usepackage{everypage}
\DeclareMathAlphabet{\mathpzc}{OT1}{pzc}{m}{it}

\usepackage{algorithmic}
\usepackage{algorithm}
\usepackage{here}

\newtheoremstyle{thicktheorem}%
{\topsep}
{\topsep}
{\itshape}{}%
{\bfseries}%
{.}
{ }%
{\thmname{#1}\thmnumber{ #2}%
		\thmnote{ (#3)}%
}

\newtheoremstyle{remark}
{\topsep}
{\topsep}
	{}
	{}
	{}
	{.}
	{ }
	{\textit{\thmname{#1}}\thmnumber{ #2}
			\thmnote{ (#3)}%
	}

\ifnum\llncs=0
	\theoremstyle{thicktheorem}
	\newtheorem{theorem}{Theorem}[section]
	\newtheorem{lemma}[theorem]{Lemma}
	\newtheorem{corollary}[theorem]{Corollary}
	
	\newtheorem{definition}[theorem]{Definition}

	\theoremstyle{remark}
	
	\newtheorem{remark}[theorem]{Remark}

\else
\fi

\Crefname{MyClaim}{Claim}{Claims}

	\crefname{theorem}{Theorem}{Theorems}
	\crefname{assumption}{Assumption}{Assumptions}
	\crefname{construction}{Construction}{Constructions}
	\crefname{corollary}{Corollary}{Corollaries}
	\crefname{conjecture}{Conjecture}{Conjectures}
	\crefname{definition}{Definition}{Definitions}
	\crefname{example}{Example}{Examples}
	\crefname{experiment}{Experiment}{Experiments}
	\crefname{counterexample}{Counterexample}{Counterexamples}
	\crefname{lemma}{Lemma}{Lemmata}
	\crefname{observation}{Observation}{Observations}
	\crefname{proposition}{Proposition}{Propositions}
	\crefname{remark}{Remark}{Remarks}
	\crefname{claim}{Claim}{Claims}
	\crefname{fact}{Fact}{Facts}
	\crefname{note}{Note}{Notes}

\ifnum\llncs=1
 \crefname{appendix}{App.}{Appendices}
 \crefname{section}{Sec.}{Sections}
\else
\fi

\ifnum\llncs=1
\pagestyle{plain}
\renewcommand*{\backref}[1]{}
\else
	\renewcommand*{\backref}[1]{(Cited on page~#1.)}
	\ifnum\notxfont=1
	\else
		\usepackage{newtxtext}
	\fi
\fi
\ifnum\authnote=0  
\newcommand{\mor}[1]{}
\newcommand{\minki}[1]{}
\newcommand{\takashi}[1]{}

\else
\newcommand{\mor}[1]{$\ll$\textsf{\color{red} Tomoyuki: { #1}}$\gg$}
\newcommand{\takashi}[1]{$\ll$\textsf{\color{orange} Takashi: { #1}}$\gg$}

\fi

\newcommand{\Tr}{\mathrm{Tr}}
\newcommand{\com}{\mathsf{com}}
\newcommand{\SD}{\mathsf{SD}} 
\newcommand{\sigk}{\mathsf{sigk}} 
\newcommand{\mvk}{\mathsf{mvk}} 


\newcommand{\sR}{\mathsf{R}}
\newcommand{\sS}{\mathsf{S}}








\newcommand{\puzz}{\mathsf{puzz}}
\newcommand{\ans}{\mathsf{ans}}





\newcommand{\Samp}{\algo{Samp}}

\newcommand{\Bad}{\mathtt{Bad}}





\newcommand{\floor}[1]{\left\lfloor{#1}\right\rfloor}

\newcommand{\cA}{\mathcal{A}}
\newcommand{\cB}{\mathcal{B}}
\newcommand{\cC}{\mathcal{C}}
\newcommand{\cD}{\mathcal{D}}
\newcommand{\cE}{\mathcal{E}}
\newcommand{\cF}{\mathcal{F}}

\newcommand{\cH}{\mathcal{H}}

\newcommand{\cQ}{\mathcal{Q}}
\newcommand{\cR}{\mathcal{R}}



\def\makeuppercase#1{
\expandafter\newcommand\csname tl#1\endcsname{\widetilde{#1}}
}

\def\makelowercase#1{
\expandafter\newcommand\csname tl#1\endcsname{\widetilde{#1}}
}

\newcommand{\N}{\mathbb{N}}


\newcommand{\regC}{\mathbf{C}}

\newcommand{\regD}{\mathbf{D}}

\newcommand{\regB}{\mathbf{B}}
\newcommand{\regA}{\mathbf{A}}
\newcommand{\regE}{\mathbf{E}}


\newcommand{\secp}{\lambda}






\newcommand*{\msk}{\keys{msk}}

\newcommand*{\pp}{\keys{pp}}

\newcommand*{\vk}{\keys{vk}}

\newcommand*{\keys}[1]{\mathsf{#1}}

\newcommand*{\algo}[1]{\ensuremath{\mathsf{#1}}}




\newenvironment{boxfig}[2]{\begin{figure}[#1]\fbox{\begin{minipage}{0.97\linewidth}
                        \vspace{0.2em}
                        \makebox[0.025\linewidth]{}
                        \begin{minipage}{0.95\linewidth}
            {{
                        #2 }}
                        \end{minipage}
                        \vspace{0.2em}
                        \end{minipage}}}{\end{figure}}



\newcommand{\bit}{\{0,1\}}






\newcommand{\Setup}{\algo{Setup}}

\newcommand{\Gen}{\algo{Gen}}

\newcommand{\Sign}{\algo{Sign}}

\newcommand{\Ver}{\algo{Ver}}

\newcommand{\TD}{\algo{TD}}




\newcommand{\negl}{{\mathsf{negl}}}






\newcommand{\poly}{{\mathrm{poly}}}

\makeatletter
\DeclareRobustCommand
  \myvdots{\vbox{\baselineskip4\p@ \lineskiplimit\z@
    \hbox{.}\hbox{.}\hbox{.}}}
\makeatother
 
\title{Quantum Cryptography and Hardness of Non-Collapsing Measurements}

\ifnum\anonymous=1
\ifnum\llncs=1
\author{\empty}\institute{\empty}
\else
\author{}
\fi
\else
%
%
\ifnum\llncs=1
\author{
Tomoyuki Morimae\inst{1} \and Yuki Shirakawa\inst{1} \and Takashi Yamakawa\inst{2,3,1}
}
\institute{
 Yukawa Institute for Theoretical Physics, Kyoto University, Kyoto, Japan \and NTT Social Informatics Laboratories, Tokyo, Japan \and NTT Research Center for Theoretical Quantum Information, Atsugi, Japan
}
\else
%
%
\author[1]{Tomoyuki Morimae}
\author[1]{ Yuki Shirakawa}
\author[2,3,1]{ Takashi Yamakawa}
\affil[1]{{\small Yukawa Institute for Theoretical Physics, Kyoto University, Kyoto, Japan}\authorcr{\small tomoyuki.morimae@yukawa.kyoto-u.ac.jp} \authorcr{\small yuki.shirakawa@yukawa.kyoto-u.ac.jp}}
\affil[2]{{\small NTT Social Informatics Laboratories, Tokyo, Japan}\authorcr{\small takashi.yamakawa@ntt.com}}
\affil[3]{{\small NTT Research Center for Theoretical Quantum Information, Atsugi, Japan}}
\fi 
\fi

\date{}

\begin{document}

\maketitle

\begin{abstract}
One-way puzzles (OWPuzzs) introduced by Khurana and Tomer [STOC 2024] are a natural quantum analogue of one-way functions (OWFs),
and one of the most fundamental primitives in ``Microcrypt'' where OWFs do not exist but quantum cryptography is possible.
OWPuzzs are implied by almost all quantum cryptographic primitives, and imply several important
applications such as non-interactive commitments and multi-party computations.
A significant goal in the field of quantum cryptography is to base OWPuzzs on plausible assumptions that will not imply OWFs.
In this paper, we base OWPuzzs on hardness of non-collapsing measurements.
To that end, we introduce a new complexity class, $\mathbf{SampPDQP}$,
which is a sampling version of the decision class $\mathbf{PDQP}$
introduced in [Aaronson, Bouland, Fitzsimons, and Lee, ITCS 2016].
We show that 
if $\mathbf{SampPDQP}$ is hard on average for quantum polynomial time,
then OWPuzzs exist. We also show that
if $\mathbf{SampPDQP}\not\subseteq\mathbf{SampBQP}$, then auxiliary-input OWPuzzs exist.
$\mathbf{SampPDQP}$ is 
the class of sampling problems that can be solved by a classical polynomial-time deterministic
algorithm that can make a single query to a non-collapsing measurement oracle,
which is a ``magical'' oracle that can sample measurement results on quantum states
without collapsing the states.
Such non-collapsing measurements are highly unphysical operations
that should be hard to realize
in quantum polynomial-time,
and therefore our assumptions on which OWPuzzs are based seem extremely plausible.
Moreover, our assumptions do not seem to imply OWFs,
because the possibility of inverting classical functions would not
be helpful to realize quantum non-collapsing measurements.\mor{koko kaeru?}
We also study upperbounds of the hardness of $\mathbf{SampPDQP}$.
We introduce a new primitive, {\it distributional collision-resistant puzzles (dCRPuzzs)},
which are a natural quantum analogue of distributional collision-resistant hashing [Dubrov and Ishai, STOC 2006].
We show that dCRPuzzs imply average-case hardness of $\mathbf{SampPDQP}$ (and therefore OWPuzzs as well).
We also show that two-message honest-statistically-hiding commitments with classical communication
and one-shot message authentication codes (MACs), which are a privately-verifiable version of one-shot signatures [Amos, Georgiou, Kiayias, Zhandry, STOC 2020],
imply dCRPuzzs.
\end{abstract}

\newpage

 \setcounter{tocdepth}{2}
 \tableofcontents
 \newpage

\section{Introduction}
It is widely accepted that the existence of one-way functions (OWFs) is the minimal assumption 
in classical cryptography~\cite{FOCS:ImpLub89}.
On the other hand, in quantum cryptography, OWFs would not necessarily be the minimal assumption:
there could exist a new world, ``Microcrypt'', where quantum cryptography is possible without OWFs~\cite{Kre21,C:MorYam22,C:AnaQiaYue22}.
Many fundamental quantum cryptographic primitives and applications have been found in Microcrypt~\cite{C:JiLiuSon18,TCC:AGQY22,ITCS:BraCanQia23,TQC:MorYam24,STOC:KhuTom24,TCC:BGHMSV23,ITCS:AnaLinYue24,AC:MorYamYam24,TCC:BBSS23,cryptoeprint:2024/1578,STOC:MaHua25,C:ChuGolGra24,FOCS:BatJai24}
and separated from OWFs~\cite{Kre21,STOC:KQST23,STOC:LomMaWri24,STOC:KreQiaTal25}. 

Among these many fundamental primitives,
one-way puzzles (OWPuzzs) introduced by Khurana and Tomer~\cite{STOC:KhuTom24} are a natural quantum analogue of OWFs.
A OWPuzz is a pair $(\Samp,\Ver)$ of a quantum polynomial-time (QPT) sampling algorithm $\Samp$ and a (not-necessarily-efficient) verification algorithm $\Ver$.
$\Samp$ outputs two bit strings, $\ans$ and $\puzz$, that pass the verification $\Ver(\ans,\puzz)$ with high probability.
The security of OWPuzzs requires that no QPT adversary $\cA$ given $\puzz$ can output $\ans'$ that passes the verification $\Ver(\puzz,\ans')$ with high probability.
Almost all primitives in Microcrypt imply OWPuzzs and several important primitives and applications are implied by OWPuzzs including 
EFI pairs~\cite{ITCS:BraCanQia23,STOC:KhuTom24}, non-interactive commitments~\cite{AC:Yan22,C:MorYam22}, 
and multi-party computations~\cite{C:MorYam22,C:AnaQiaYue22,C:BCKM21b,EC:GLSV21}. 

OWPuzzs can be trivially constructed from (quantumly-secure) OWFs, but their construction based on other plausible assumptions, 
particularly those that do not imply OWFs, has not been extensively studied. (There are only three results. See \cref{sec:related}.)
One of the most significant goals in the field of quantum cryptography is to base OWPuzz on some plausible assumptions that will not imply OWFs.

\subsection{Our Results}
According to the laws of quantum physics, quantum states are generally collapsed by measurements.
However, we could imagine some magical {\it non-collapsing measurements} that can sample measurement results without collapsing quantum states.
The assumption that such highly-unphysical measurements are impossible in QPT seems extremely plausible.\footnote{If a single copy of unknown quantum state is given, non-collapsing measurements on the state
are statistically impossible (as long as we believe the standard quantum physics). On the other hand, as we will explain later, 
in our setup, a classical description of a quantum circuit that generates states 
is given, and therefore non-collapsing measurements are possible in unbounded time. Our assumptions are therefore computational ones.}
The contribution of this paper is to base quantum cryptography on such a ``physically reasonable'' assumption, directly motivated by a fundamental law of quantum physics.
Specifically, we show that OWPuzzs can be based on the computational hardness of simulating non-collapsing measurements.

\paragraph{Construction of OWPuzzs.}
To that end, we first introduce a new complexity class, $\mathbf{SampPDQP}$,
that characterizes a computational power of non-collapsing measurements.
We then show the following result.
\begin{theorem}
\label{thm:1}
If $\mathbf{SampPDQP}$ is hard on average,\footnote{
    We say that a complexity class $\mathbf{C}$ of sampling problems is hard on average   
    if there exist a sampling problem $\{\cD_x\}_{x}\in\mathbf{C}$, a polynomial $p$, and a QPT samplable distribution $\cE(1^\secp)\to x \in\bit^\secp$
    such that for any QPT algorithm $\cF$ and for all sufficiently large $\secp\in\N$,
    $\SD(\{x,\cF(x)\}_{x\gets\cE(1^\secp)},\{x,\cD_x\}_{x\gets\cE(1^\secp)}) > \frac{1}{p(\secp)}$.
    Here, $\SD$ is the statistical distance.
} then OWPuzzs exist.    
\end{theorem}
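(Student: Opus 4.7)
The plan is to construct a one-way puzzle directly from the average-case hardness of $\mathbf{SampPDQP}$. I will fix a hard sampling problem $\{\cD_x\}_x \in \mathbf{SampPDQP}$ together with a QPT-samplable instance distribution $\cE$ witnessing the hypothesis, and a $\mathbf{SampPDQP}$ solver $\cA$. Unpacking the definition, on input $x$ the algorithm $\cA$ classically pre-processes $x$ into a quantum circuit description $C_x$, issues a single non-collapsing measurement query on $C_x$ receiving a trajectory $\vec y$ of measurement outcomes, and classically post-processes $(x, \vec y)$ into an output $a \sim \cD_x$. The hypothesis says that no QPT $\cF$ can approximate $\{x,\cD_x\}_{x \gets \cE}$ to within $1/\poly$ statistical distance.

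The candidate sampler $\Samp(1^\secp)$ will draw $x \gets \cE(1^\secp)$, prepare $|\psi_x\rangle = C_x|0\rangle$ on internal quantum registers, simulate the non-collapsing trajectory as best a QPT can (performing the measurements sequentially, with collapse), and run the classical post-processor of $\cA$ to obtain a candidate answer $a$. It then sets $\puzz \setval x$ and $\ans \setval a$. The (unbounded) verifier $\Ver(x,a)$ accepts iff $a$ has non-negligible probability mass under $\cD_x$. For security, I will argue that if a QPT adversary $\cF(x)$ produces a valid answer with non-negligible probability, then via a flattening/smoothing step one obtains a QPT sampler whose output distribution is statistically close to $\cD_x$ on average over $x \gets \cE$, directly contradicting the hypothesis.

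The main obstacle I anticipate is the \emph{correctness} of $\Samp$: because a QPT cannot in general reproduce the non-collapsing joint statistics, the answer $a$ that $\Samp$ produces comes from a distribution $\cD'_x$ which may differ substantially from $\cD_x$, and so $\Ver$ could reject $\Samp$'s output too often. My plan is to sidestep this difficulty by first constructing a \emph{distributional} one-way puzzle, whose correctness is with respect to $\Samp$'s own output distribution and whose security requires only that the conditional distribution of $\ans \mid \puzz$ induced by $\Samp$ be hard to sample in QPT; I will then invoke the (folklore) distributional-to-standard OWPuzz equivalence. The heart of the reduction will be to show that if a QPT conditional sampler for $\ans \mid \puzz$ existed, then composing it with the classical post-processor of $\cA$ and applying a suitable coupling argument between the QPT and non-collapsing trajectories would yield a QPT sampler violating the hypothesized $1/\poly$ distance bound. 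I expect the technically delicate point to be controlling this coupling so that the polynomial gap guaranteed by the hypothesis propagates through the reduction without being absorbed by the losses introduced by flattening and by the mismatch between the QPT and NC joint distributions.
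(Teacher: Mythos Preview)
Your proposal has a fatal gap: with $\puzz = x$ and $\ans = a$, the distributional one-way puzzle you build is \emph{trivially} broken by the adversary that simply re-runs $\Samp$'s own procedure. Concretely, your $\Samp$ takes $x$ and produces $a$ via a QPT computation (run $C_x$ with collapsing measurements, then post-process). An adversary given $\puzz = x$ can execute exactly the same QPT steps and sample $a$ from the identical distribution $\cD'_x$, achieving $\SD = 0$. The hypothesis that $\cD_x$ is hard to sample does not help here, because your $\ans$ is drawn from $\cD'_x$, not $\cD_x$, and $\cD'_x$ is by construction QPT-samplable from $x$. Your anticipated ``coupling argument between the QPT and non-collapsing trajectories'' cannot rescue this: there is no reduction from a break of this puzzle to sampling $\cD_x$, because a perfect break exists unconditionally.

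The missing idea, which is the crux of the paper's proof, is that the puzzle must encode \emph{intermediate collapsing-measurement outcomes} rather than just $x$. Writing $C_x = (U_1,M_1,\ldots,U_T,M_T)$, the paper's $\Samp$ picks a uniformly random step $t \in [T]$, runs the circuit through step $t$ obtaining the collapsing outcomes $\tau_t = (u_1,\ldots,u_t)$ and the post-measurement state $|\psi_t^{\tau_t}\rangle$, then measures that state once in the computational basis to get $w_t$. It outputs $\puzz = (x,t,\tau_t)$ and $\ans = w_t$. This is legitimate for $\Samp$ because it holds the actual state $|\psi_t^{\tau_t}\rangle$; but an adversary given only the classical transcript $\tau_t$ cannot easily recreate that state (the trajectory $\tau_t$ may have exponentially small probability). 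Security is then proved by a hybrid over $t$: if a QPT adversary $\cA$ could, for every $t$, sample $w_t$ close to its true conditional distribution given $\tau_t$, then one can simulate the entire non-collapsing oracle in QPT by running $C_x$ once to obtain all collapsing outcomes $(u_1,\ldots,u_T)$ and invoking $\cA$ at each step to fill in the non-collapsing parts $w_1,\ldots,w_T$. This contradicts the average-case hardness of $\mathbf{SampPDQP}$. The random choice of $t$ is what makes the telescoping hybrid go through with only a factor-$T$ loss.
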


$\mathbf{SampPDQP}$
is the sampling version of
$\mathbf{PDQP}$.
Let us first explain $\mathbf{PDQP}$.
$\mathbf{PDQP}$ was introduced by Aaronson, Bouland, Fitzsimons, and Lee~\cite{ITCS:ABFL16}. 
$\mathbf{PDQP}$ is the class of decision problems that can be solved with a classical deterministic polynomial-time
algorithm that can make a single query to a magical oracle $\cQ$ that can perform non-collapsing measurements.
More precisely, $\cQ$ takes a classical description of a quantum circuit $(U_1,M_1,...,U_T,M_T)$ as input.
Here, for each $i\in[T]$, $U_i$ is an $\ell$-qubit unitary and 
$M_i\coloneqq\{P_j^i\}_{j}$ is a {\it collapsing} projection measurement on $m_i$ qubits, where $0\le m_i\le\ell$.
(When $m_i=0$, this means that $M_i$ does not do any measurement.)
The oracle $\cQ$ first generates $U_1|0^\ell\rangle$ and performs the {\it collapsing} projection measurement $M_1$ on the state.
Assume that the result $j_1$ is obtained. Then the post-measurement state is
$|\psi_1\rangle\coloneqq P_{j_1}^1U_1|0^\ell\rangle/\sqrt{\|P_{j_1}^1U_1|0^\ell\rangle\|^2}$.
Then the oracle $\cQ$ does a {\it non-collapsing} measurement on all qubits of $|\psi_1\rangle$ in the computational basis, and gets the measurement result $v_1\in\bit^\ell$.
Because this is a non-collapsing measurement,
this measurement does not collapse $|\psi_1\rangle$ to $|v_1\rangle$:
even after obtaining $v_1$, the state is still $|\psi_1\rangle$.
The oracle $\cQ$ then applies $U_2$ on $|\psi_1\rangle$, and performs the {\it collapsing} projection measurement
$M_2$ on $U_2|\psi_1\rangle$.
Assume that the result $j_2$ is obtained. Then the post-measurement state is
$|\psi_2\rangle\coloneqq P_{j_2}^2U_2|\psi_1\rangle/\sqrt{\|P_{j_2}^2U_2|\psi_1\rangle\|^2}$.
The oracle $\cQ$ again performs the non-collapsing computational-basis measurement on the all qubits of
$|\psi_2\rangle$ to get the result $v_2\in\bit^\ell$.
The oracle $\cQ$ then applies $U_3$ on $|\psi_2\rangle$, measures it with $M_3$, and 
performs the non-collapsing measurement on the post-measurement state of $M_3$ to get $v_3\in\bit^\ell$, and so on.
In this way, the oracle obtains $(v_1,...,v_T)$. The oracle finally outputs $(v_1,...,v_T)$.

$\mathbf{SampPDQP}$ is the sampling version of $\mathbf{PDQP}$.
In other words, $\mathbf{SampPDQP}$ is the class of sampling problems\footnote{A sampling problem is a family $\{\cD_x\}_x$ of distributions over bit strings. We say that a sampling problem $\{\cD_x\}_x$
is solved if all $\cD_x$ can be sampled.} that are solved by a classical deterministic polynomial-time algorithm
that can make a single query to the non-collapsing measurement oracle $\cQ$.
We can show the following lemma:
\begin{lemma}
If $\mathbf{PDQP}$ is hard on average,\footnote{
 We say that a complexity class $\mathbf{C}$ of decision problems is hard on average   
    if there exist a language $L\in\mathbf{C}$, a polynomial $p$, and a QPT samplable distribution $\cE(1^\secp)\to x \in\bit^\secp$
    such that for any QPT algorithm $\cF$ and for all sufficiently large $\secp\in\N$,
    $\Pr_{x\gets\cE(1^\secp)}[\cF(x)\neq L(x)]>1/p(\secp)$. 
} then $\mathbf{SampPDQP}$ is hard on average.    
\end{lemma}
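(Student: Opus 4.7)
The plan is to prove the statement by contraposition: take any language $L\in\mathbf{PDQP}$ witnessing the average-case hardness of $\mathbf{PDQP}$, convert its $\mathbf{PDQP}$-decider into a sampling problem in $\mathbf{SampPDQP}$, and observe that an average-case sampler for this problem immediately yields an average-case decider for $L$.

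First I would invoke a standard error-reduction for $\mathbf{PDQP}$. A single query to $\cQ$ can simulate $k$ independent executions of any given $\mathbf{PDQP}$-algorithm by passing the tensor-product circuit $(U_i^{\otimes k},M_i^{\otimes k})_{i=1}^T$: because the initial state $|0^\ell\rangle^{\otimes k}$ factors, the collapsing measurements $M_i^{\otimes k}$ preserve the product structure across the $k$ blocks, and a non-collapsing computational-basis measurement on a product state produces independent samples per block. Classically majority-voting over these $k$ outcomes reduces the default constant error to $\negl(\secp)$ for $k=\omega(\log\secp)$, all within a single query to $\cQ$ and a classical polynomial-time wrapper. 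Hence every $L\in\mathbf{PDQP}$ admits a $\mathbf{PDQP}$-decider $A$ with negligible error on every input.

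Next, given $(L,\cE,p)$ witnessing the average-case hardness of $\mathbf{PDQP}$, define $\cD_x$ to be the output distribution of $A^{\cQ}(x)$. By construction $\{\cD_x\}_x\in\mathbf{SampPDQP}$, and $\Pr_{b\gets\cD_x}[b\neq L(x)]\leq \negl(|x|)$ for every $x$. Suppose, toward contradiction, that some QPT $\cF$ achieves $\SD(\{x,\cF(x)\}_{x\gets\cE},\{x,\cD_x\}_{x\gets\cE})\leq 1/q(\secp)$ for a polynomial $q$ to be fixed. Using the identity $\SD(\{x,\cF(x)\},\{x,\cD_x\})=\Exp_x[\SD(\cF(x),\cD_x)]$ together with the triangle inequality, we obtain
\[
\Pr_{x\gets\cE,\,b\gets\cF(x)}[b\neq L(x)]\leq\negl(\secp)+1/q(\secp).
\]
Choosing $q=2p$ makes the right-hand side at most $1/p(\secp)$ for all sufficiently large $\secp$, contradicting the average-case hardness of $L$ with respect to $(\cE,p)$. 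Hence no such $\cF$ exists, and the triple $(\{\cD_x\}_x,\cE,2p)$ witnesses the average-case hardness of $\mathbf{SampPDQP}$.

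The main subtlety I anticipate is rigorously justifying the amplification step, namely checking that $\cQ$ applied to the tensor-product circuit really produces $k$ independent executions of the base algorithm. This is a pure product-state argument (the initial state factors, collapsing projectors act blockwise, and the non-collapsing basis measurement on a product state factors into independent marginals), so it reduces to bookkeeping, but it is the only place where the specific semantics of the oracle $\cQ$ enter the proof. The rest is the textbook ``sampling-implies-deciding'' reduction, where the single bit produced by $A^{\cQ}$ is exactly the sampling problem whose approximation suffices to decide $L$.
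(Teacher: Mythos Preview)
Your proposal is correct and follows essentially the same route as the paper: define the sampling problem $\cD_x$ to be the output distribution of the amplified $\mathbf{PDQP}$-decider on input $x$, observe $\{\cD_x\}_x\in\mathbf{SampPDQP}$, and note that any QPT sampler close to $\cD_x$ on average over $\cE$ is immediately a QPT decider for $L$ with error at most $\negl(\secp)+1/(2p(\secp))$, contradicting hardness. The paper does exactly this, with the single difference that it does not re-derive amplification but simply invokes the remark (citing \cite{ITCS:ABFL16}) that the $\mathbf{PDQP}$ error bound can be made negligible; your tensor-product argument is a fine way to justify that remark.

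One small quantifier slip to tighten: when you write ``some QPT $\cF$ achieves $\SD(\ldots)\le 1/q(\secp)$'' and later ``for all sufficiently large $\secp$,'' be explicit that the negation of average-case hardness only gives you such an $\cF$ for \emph{infinitely many} $\secp$, and that this already contradicts the hardness hypothesis (which asserts error $>1/p(\secp)$ for \emph{all} sufficiently large $\secp$). The paper handles this carefully; your argument goes through once the quantifiers are stated precisely.
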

We therefore obtain the following result as a corollary of \cref{thm:1}.
\begin{corollary}
\label{coro:1}
If $\mathbf{PDQP}$ is hard on average, then OWPuzzs exist.    
\end{corollary}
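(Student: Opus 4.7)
The corollary is, by design, the chain of the immediately preceding Lemma with Theorem~\ref{thm:1}, both of which I may invoke as given. My plan is therefore to present the proof as two successive applications: first, assuming $\mathbf{PDQP}$ is hard on average, the Lemma yields that $\mathbf{SampPDQP}$ is hard on average; second, Theorem~\ref{thm:1} then produces OWPuzzs from that hypothesis. This is the entirety of the formal argument.

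For completeness I would also record the short reduction underlying the Lemma, since it is the only non-trivial step. Let $L \in \mathbf{PDQP}$ witness average-case hardness under a QPT-samplable distribution $\cE$, and define the sampling problem $\{\cD_x\}_x$ in which $\cD_x$ is the point distribution supported on $L(x) \in \{0,1\}$. A $\mathbf{PDQP}$ decider for $L$ is already a deterministic classical polynomial-time algorithm with a single non-collapsing measurement query whose output exactly realizes $\cD_x$, so $\{\cD_x\}_x \in \mathbf{SampPDQP}$. Suppose for contradiction that some QPT $\cF$ sampled $\{\cD_x\}_x$ to within statistical distance $1/p(\secp)$ on average. A direct expansion of the definition of statistical distance gives
\[
\SD\bigl(\{x,\cF(x)\}_{x\gets\cE(1^\secp)},\ \{x,L(x)\}_{x\gets\cE(1^\secp)}\bigr) \;=\; \Pr_{x\gets\cE(1^\secp),\,\cF}\!\bigl[\cF(x)\neq L(x)\bigr],
\]
so $\cF$ would decide $L$ with error at most $1/p(\secp)$ on average under $\cE$, contradicting the hypothesized average-case hardness of $L$. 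Hence $\mathbf{SampPDQP}$ is hard on average, and Theorem~\ref{thm:1} supplies the OWPuzz.

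There is no real obstacle in proving the corollary itself: all the nontrivial content is absorbed into Theorem~\ref{thm:1}, whose construction of OWPuzzs from average-case hardness of $\mathbf{SampPDQP}$ is the actual technical centerpiece of the section. The one thing to watch in the write-up is that the two ``hard on average'' definitions stated in the footnotes—one for decision problems, one for sampling problems—are matched correctly, which the identity displayed above does cleanly by identifying the joint statistical distance with the average decision error.
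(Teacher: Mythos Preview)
Your argument for the corollary itself is exactly the paper's: it is simply the composition of the preceding Lemma with Theorem~\ref{thm:1}, and the paper treats it as such without further comment.

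In your supplementary sketch of the Lemma there is one slip: a $\mathbf{PDQP}$ decider for $L$ does not \emph{exactly} realize the point distribution on $L(x)$, since even after amplification it errs with negligible (not zero) probability, so the membership $\{L(x)\}_x\in\mathbf{SampPDQP}$ is not immediate from what you wrote. The paper sidesteps this by instead taking the sampling problem to be the output distribution $\{\cR^\cQ(x)\}_x$ of the $\mathbf{PDQP}$ machine $\cR$ itself, which lies in $\mathbf{SampPDQP}$ with zero error by construction; a QPT $\cF^*$ achieving $\SD\le 1/(2p)$ against this problem then decides $L$ with average error at most $\negl(\secp)+1/(2p(\secp))<1/p(\secp)$. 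Your route can be salvaged by noting that amplification to accuracy $\epsilon$ (using the $1^{\lfloor 1/\epsilon\rfloor}$ input and packing the repetitions into the single oracle query) does place $\{L(x)\}_x$ in $\mathbf{SampPDQP}$, after which your displayed identity applies cleanly; the paper's choice just avoids that extra step.
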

\if0
\begin{corollary}
\label{coro:2}
If $\mathbf{PDQP}\not\subseteq\mathbf{BQP}$, 
then auxiliary-input OWPuzzs exist.
\end{corollary}
\fi

\paragraph{Construction of auxiliary-input OWPuzzs.}
We have seen that average-case hardness of $\mathbf{SampPDQP}$ implies OWPuzzs.
What happens for the worst-case hardness, $\mathbf{SampPDQP}\not\subseteq\mathbf{SampBQP}$?
We can actually show that such a worst-case hardness implies auxiliary-input OWPuzzs.\footnote{
    Roughly, an auxiliary-input OWPuzz is a pair $(\Samp,\Ver)$ of algorithms such that for any QPT algorithm $\cA$, there exists $x$ such that under the sampling $(\puzz,\ans)\gets\Samp(x)$, $\cA(x,\puzz)$ fails to find $\ans'$ that is accepted by $\Ver(x,\puzz,\ans')$.
} 
\begin{theorem}
\label{thm:2}
If $\mathbf{SampPDQP}\not\subseteq\mathbf{SampBQP}$, 
then auxiliary-input OWPuzzs exist.
\end{theorem}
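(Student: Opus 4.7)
The plan is to prove \cref{thm:2} as a worst-case, auxiliary-input analogue of \cref{thm:1}. The hypothesis $\mathbf{SampPDQP}\not\subseteq\mathbf{SampBQP}$ yields a sampling problem $\{\cD_x\}_x\in\mathbf{SampPDQP}$ together with a polynomial $p$ such that for every QPT algorithm $\cF$ there are infinitely many inputs $x$ with $\SD(\cF(x),\cD_x)>1/p(|x|)$. This is the worst-case strengthening of the assumption used in \cref{thm:1}, where an auxiliary QPT sampler $\cE$ of hard instances is additionally required to witness average-case hardness.

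I would reuse the construction from \cref{thm:1} verbatim, with the sole modification that, instead of $\Samp$ first invoking $\cE(1^\secp)$ to draw an instance $x$, we provide $x$ as auxiliary input to both $\Samp$ and $\Ver$. Concretely, let $\cM$ be the classical deterministic polynomial-time machine that witnesses $\{\cD_x\}_x\in\mathbf{SampPDQP}$, so that $\cM(x)$ specifies a circuit $C_x=(U_1,M_1,\dots,U_T,M_T)$, makes a single query to the non-collapsing oracle $\cQ$ on $C_x$, and post-processes the outcomes $(v_1,\dots,v_T)$ into a sample of $\cD_x$. Then $\Samp(x)$ applies the same construction from \cref{thm:1} to $C_x$; because $\cM$ already depends on $x$ alone, this transformation is local in $x$ and never appeals to any distribution over inputs. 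The verifier $\Ver(x,\puzz,\ans')$ is inherited from \cref{thm:1}.

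For security I would argue by contraposition. Suppose that some QPT adversary $\cA(x,\puzz)$ succeeds with non-negligible probability on every sufficiently long $x$. Then I would feed $\cA$ into the very reduction used in \cref{thm:1}: that reduction converts adversarial success against the puzzle into a QPT procedure that, on input $x$, samples a tuple statistically close to the non-collapsing outcomes $(v_1,\dots,v_T)$ that $\cQ$ would return on $C_x$. Running $\cM$'s classical post-processing on this tuple produces a QPT algorithm $\cF$ whose output approximates $\cD_x$ on every $x$, contradicting the worst-case hardness above and completing the argument.

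The hard part will be verifying that the reduction underlying \cref{thm:1} is truly input-oblivious: its simulation of $\cQ$ must rely only on the puzzle at hand and on calls to $\cA$, and not on any averaging over $\cE(1^\secp)$. Since the $\mathbf{SampPDQP}$ hardness is defined per-input and the puzzle is constructed from $C_x=\cM(x)$, the per-input reduction should fall out by tracking through the proof of \cref{thm:1} with only cosmetic changes, essentially replacing every probability ``over $x\gets\cE(1^\secp)$'' with a statement ``for this particular $x$''.
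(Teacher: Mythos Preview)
Your high-level plan---turn the instance into the auxiliary input and rerun the reduction of \cref{thm:PDQP_OWPuzz} pointwise---is exactly the paper's approach, but your execution has a genuine gap around the precision parameter $\epsilon$.

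First, your restatement of the hypothesis is not what $\mathbf{SampPDQP}\not\subseteq\mathbf{SampBQP}$ says. By \cref{def:SampBQP}, a $\mathbf{SampBQP}$ solver receives both $x$ and a precision target $1^{\lfloor 1/\epsilon\rfloor}$ and must achieve $\SD\le\epsilon$ for \emph{every} pair $(x,\epsilon)$. Its negation therefore does not give you a single polynomial $p$, uniform over all QPT $\cF$, with $\SD(\cF(x),\cD_x)>1/p(|x|)$ on infinitely many $x$; the precision at which a given $\cF$ fails may depend on $\cF$. Relatedly, the $\mathbf{SampPDQP}$ machine itself takes $(x,1^{\lfloor 1/\epsilon\rfloor})$, so the queried circuit is $C_{x,\epsilon}$, not $C_x$.

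This is why the paper uses $z=(x,1^{\lfloor 1/\epsilon\rfloor})$ as the auxiliary input rather than just $x$. With your choice, even if the per-input reduction from \cref{thm:PDQP_OWPuzz} goes through verbatim, what you obtain is a QPT $\cF$ with $\SD(\cF(x),\cD_x)\le 1/q(|x|)$ for some fixed inverse polynomial $q$ and all but finitely many $x$. That does \emph{not} place $\{\cD_x\}_x$ in $\mathbf{SampBQP}$, so you get no contradiction. Taking $z=(x,1^{\lfloor 1/\epsilon\rfloor})$ instead, the contrapositive of auxiliary-input security says the adversary succeeds on all but finitely many $z$; the reduction then yields a QPT $\cF(x,1^{\lfloor 1/\epsilon\rfloor})$ with $\SD\le\epsilon$ for all but finitely many $(x,\epsilon)$, and the remaining finitely many pairs are dispatched by hardcoding (since only a constant amount of information is needed), giving $\{\cD_x\}_x\in\mathbf{SampBQP}$ as required.
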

We can consider a generalizaiton of
$\mathbf{SampPDQP}$, which we call $\mathbf{SampAdPDQP}$. 
We show that the worst-case hardness of $\mathbf{SampAdPDQP}$ is equivalent to that of $\mathbf{SampPDQP}$.
\begin{theorem}
    \label{thm:worst}
    $\mathbf{SampAdPDQP}\nsubseteq\mathbf{SampBQP}$ if and only if $\mathbf{SampPDQP}\nsubseteq\mathbf{SampBQP}$.
\end{theorem}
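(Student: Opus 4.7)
The \emph{only if} direction is immediate: any single-query $\mathbf{SampPDQP}$ algorithm is a special case of an adaptive multi-query algorithm, so $\mathbf{SampPDQP}\subseteq\mathbf{SampAdPDQP}$, and the worst-case hardness transfers. I will therefore focus on the nontrivial \emph{if} direction, which I plan to prove by contrapositive: assuming $\mathbf{SampPDQP}\subseteq\mathbf{SampBQP}$, I will deduce $\mathbf{SampAdPDQP}\subseteq\mathbf{SampBQP}$.

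The first step is to isolate a ``universal'' sampling problem $\{\cD_y\}_y$, where the input $y$ encodes a circuit description $(U_1,M_1,\ldots,U_T,M_T)$ and the target distribution $\cD_y$ is precisely the output distribution of $\cQ(y)$. This problem is trivially in $\mathbf{SampPDQP}$, since the $\mathbf{SampPDQP}$ algorithm simply forwards its input $y$ to the oracle. By the hypothesis $\mathbf{SampPDQP}\subseteq\mathbf{SampBQP}$, there must therefore exist a QPT sampler $S$ such that $S(y,1^{1/\eps})$ outputs a sample within TV distance $\eps$ of $\cD_y$ in time $\poly(|y|,1/\eps)$. In other words, $S$ is a generic BQP emulator of the non-collapsing oracle.

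Next, I would take any $\{\cD_x\}_x\in\mathbf{SampAdPDQP}$ solved by a classical polynomial-time algorithm $A$ making $q=q(\secp)$ adaptive queries, and construct a QPT simulator $A'$ that executes $A$ but replaces each oracle call $\cQ(y_i)$ with an invocation of $S(y_i,1^{q/\eps})$. A standard hybrid argument---swapping one oracle call at a time for its BQP emulator, and using that deterministic classical post-processing is contractive in TV distance---then yields total error at most $q\cdot(\eps/q)=\eps$, so $A'$ samples $\cD_x$ within TV distance $\eps$ in QPT. This places $\{\cD_x\}_x$ in $\mathbf{SampBQP}$, as desired.

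The main obstacle will be bookkeeping the hybrid error across adaptively-chosen queries: because the $(i{+}1)$-st query $y_{i+1}$ depends on the previous responses $y_1,\ldots,y_i$, the rounds cannot be treated as independent. The clean way is to couple real and simulated transcripts round-by-round, invoking at each step the coupling inequality together with the fact that any deterministic function of an $(\eps/q)$-close sample remains $(\eps/q)$-close. A secondary subtlety is to verify that the simulator furnished by $\mathbf{SampPDQP}\subseteq\mathbf{SampBQP}$ has runtime polynomial in $1/\eps$ as well as $|y|$---the standard convention for $\mathbf{SampBQP}$---so that the choice of per-query error $\eps/q$ keeps the entire simulation within polynomial time.
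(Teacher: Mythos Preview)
Your proposal is correct and follows essentially the same route as the paper: define the universal sampling problem whose instances are circuit descriptions $C$ and whose target distribution is the output of $\cQ(C)$, invoke the hypothesis $\mathbf{SampPDQP}\subseteq\mathbf{SampBQP}$ to obtain a QPT emulator for $\cQ$, and then replace the adaptive oracle calls one at a time by a hybrid argument with per-query precision $\eps/q$. One minor slip worth fixing: you have the labels ``if'' and ``only if'' swapped---the inclusion $\mathbf{SampPDQP}\subseteq\mathbf{SampAdPDQP}$ gives the \emph{if} direction of the equivalence, and the contrapositive you actually work out establishes the \emph{only if} direction.
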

$\mathbf{SampAdPDQP}$ is a generalization of
$\mathbf{SampPDQP}$. In $\mathbf{SampPDQP}$, the base algorithm can query the non-collapsing measurement oracle $\cQ$ only once, but
in $\mathbf{SampAdPDQP}$, it can query many times adaptively.\footnote{The non-collapsing measurement oracle is stateless, which means that it does not
keep its internal quantum state between queries.}
By combining \cref{thm:2,,thm:worst}, we obtain the following corollary.
\begin{corollary}
\label{cor:4}
If $\mathbf{SampAdPDQP}\not\subseteq\mathbf{SampBQP}$, 
then auxiliary-input OWPuzzs exist.
\end{corollary}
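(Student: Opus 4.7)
The plan is to prove \cref{cor:4} by a direct chaining of the two preceding results, \cref{thm:2} and \cref{thm:worst}, which together yield the claim immediately; there is no real combinatorial or cryptographic work to do beyond observing the implication. Concretely, I would assume the hypothesis $\mathbf{SampAdPDQP}\not\subseteq\mathbf{SampBQP}$, then apply the ``only if'' direction of \cref{thm:worst} to conclude $\mathbf{SampPDQP}\not\subseteq\mathbf{SampBQP}$, and finally invoke \cref{thm:2} to obtain the existence of auxiliary-input OWPuzzs.

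In slightly more detail, the proof is a two-line implication chain:
\begin{equation*}
\mathbf{SampAdPDQP}\not\subseteq\mathbf{SampBQP}
\;\stackrel{\text{\cref{thm:worst}}}{\Longrightarrow}\;
\mathbf{SampPDQP}\not\subseteq\mathbf{SampBQP}
\;\stackrel{\text{\cref{thm:2}}}{\Longrightarrow}\;
\text{auxiliary-input OWPuzzs exist.}
\end{equation*}
I would write the proof by first noting the trivial containment $\mathbf{SampPDQP}\subseteq\mathbf{SampAdPDQP}$ only as a sanity check (the adaptive class subsumes the single-query class), so that the ``only if'' direction of \cref{thm:worst} is the substantive content being used here. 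Since \cref{thm:worst} gives both directions, we may quote whichever direction we need; here it is the implication ``$\mathbf{SampAdPDQP}\nsubseteq\mathbf{SampBQP} \Rightarrow \mathbf{SampPDQP}\nsubseteq\mathbf{SampBQP}$'' that is essential, which is the nontrivial direction since going from many adaptive non-collapsing queries down to one query is what requires work.

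Because the proof is entirely a composition of previously stated theorems, there is no main obstacle to overcome at this point; any technical difficulty lives inside the proofs of \cref{thm:2} and \cref{thm:worst}. The only minor care I would take is to make sure the quantifier structure of ``worst-case hardness'' used in both statements matches (same definition of $\mathbf{SampBQP}$-simulability and same notion of instance family), so that the conclusion of \cref{thm:worst} can indeed be fed as the hypothesis of \cref{thm:2} without any intermediate translation. Given that both theorems are stated in the same paper with the same conventions, this compatibility is immediate, and the corollary follows in one line.
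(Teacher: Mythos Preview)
Your proposal is correct and matches the paper's own approach exactly: the paper derives this corollary in one line by combining \cref{thm:worst} (stated there as \cref{lem:adapt_equivalence}) with \cref{thm:2} (stated there as \cref{thm:AI-OWPuzz}). There is nothing more to add.
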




\paragraph{Relations to OWFs.}
\mor{It seems that Eli-Saachi commitment (where sending $y$ is done at the beginning of the commit phase) imply dCRPuzz. 
Because such commitment exists even if BQP=QCMA, this means that dCRPuzzs and therefore
our assumption will not imply OWFs.}
Although there is no formal proof, our assumptions do not seem to imply OWFs, 
because the ability of inverting classical functions does not seem to be useful to
realize quantum non-collapsing measurements.
Moreover, the following argument also suggests that
the average-case hardness of $\mathbf{SampPDQP}$ will not imply auxiliary-input OWFs (and therefore OWFs as well):
\cite{STOC:KreQiaTal25} left an open problem to separate quantum-evaluation collision-resistant hashing (CRH)
from $\mathbf{P}=\mathbf{NP}$ 
and gave a concrete candidate construction for it relative to an oracle. 
If this open problem is resolved, average-case hardness of $\mathbf{SampPDQP}$ does not imply auxiliary-input OWFs,
because (as we will see later) quantum-evaluation CRH implies average-case hardness 
of $\mathbf{SampPDQP}$, and auxiliary-input OWFs imply $\mathbf{P}\neq\mathbf{NP}$.

\paragraph{Distributional collision-resistant puzzles.}
We also study upperbounds of the hardness of $\mathbf{SampPDQP}$.
We introduce a new primitive, {\it distributional collision-resistant puzzles} (dCRPuzzs).
They are a natural quantum analogue of 
distributional collision-resistant hashing (dCRH)~\cite{STOC:DubIsh06}.\footnote{
We could also explore a quantum analogue of CRH.
However, its definition is not so straightforward.
For example, we could define a ``collision-resistant puzzle'' 
$(\Setup,\Samp,\Ver)$ 
as follows:
$\Setup$ is a QPT algorithm that, on input the security parameter $\secp$, outputs a public parameter $\pp$.
$\Samp$ is a QPT algorithm that, on input $\pp$, outputs two classical bit strings, $\puzz$ and $\ans$.
$\Ver$ is a (not-necessarily-efficient) algorithm that, on input $\pp$, $\puzz$, and $\ans$, outputs $\top/\bot$.
The ``collision-resistance'' requires that
no QPT adversary $\cA$ that receives $\pp$ as input can output $(\puzz,\ans,\ans')$ such that
$\ans\neq\ans'$ and
both $(\puzz,\ans)$ and $(\puzz,\ans')$ are accepted by $\Ver$ with high probability.
However, even if we require that the length of $\puzz$ is shorter than that of $\ans$, there is a trivial statistically-secure construction:
$\Samp$ always outputs $\puzz=0$ and $\ans=00$.
$\Ver$ accepts only $(\puzz=0,\ans=00)$.
}

A dCRPuzz is a set $(\Setup,\Samp)$ of algorithms. 
$\Setup$ is a QPT algorithm that, on input the security parameter $\secp$, outputs a public parameter $\pp$.
$\Samp$ is a QPT algorithm that, on input $\pp$, outputs two classical bit strings, $\puzz$ and $\ans$.
The security requirement is that for any QPT adversary $\cA$, the statistical distance between two distributions,
$(\pp,\cA(\pp))_{\pp\gets\Setup(1^\secp)}$
and
$(\pp,\mathsf{Col}(\pp))_{\pp\gets\Setup(1^\secp)}$,
is large. Here, $\mathsf{Col}(\pp)\to(\puzz,\ans,\ans')$ is the following distribution:
It first samples $(\puzz,\ans)\gets\Samp(\pp)$, and then samples $\ans'$ with the conditional probability
$\Pr[\ans'|\puzz]=\Pr[(\ans',\puzz)\gets\Samp(\pp)]/\Pr[\puzz\gets\Samp(\pp)]$.

It is trivial that (quantumly-secure) dCRH (and therefore collision-resistant hashing (CRH) as well) imply dCRPuzzs.
Moreover, because an average-case hardness of $\mathbf{SZK}$ for QPT implies quantumly-secure dCRH~\cite{C:KomYog18}, 
the average-case hardness of $\mathbf{SZK}$ for QPT also implies dCRPuzzs.

We show that dCRPuzzs are an upperbound of the hardness of $\mathbf{SampPDQP}$:
\begin{theorem}
If dCRPuzzs exist, then $\mathbf{SampPDQP}$ is hard on average.    
\end{theorem}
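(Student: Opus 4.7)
The plan is to exhibit a single sampling problem in $\mathbf{SampPDQP}$ that becomes average-case hard once a dCRPuzz $(\Setup,\Samp)$ exists. Take as the input distribution $\cE(1^\secp)\coloneqq\Setup(1^\secp)$, which is QPT samplable by definition, and for each $\pp$ in its support define the target distribution $\cD_\pp\coloneqq\mathsf{Col}(\pp)$ over triples $(\puzz,\ans,\ans')$.

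First, I would verify that $\{\cD_\pp\}_\pp\in\mathbf{SampPDQP}$. Let $U_\pp$ be a purification of the circuit computing $\Samp(\pp)$, so that $U_\pp|0\cdots 0\rangle=\sum_{p,a,r}\alpha_{p,a,r}|p\rangle_{\puzz}|a\rangle_{\ans}|r\rangle_{\mathrm{junk}}$ and the computational-basis measurement of $(\puzz,\ans)$ reproduces the output distribution of $\Samp(\pp)$. Given $\pp$, a classical deterministic polynomial-time algorithm writes down and submits a single query to the non-collapsing measurement oracle $\cQ$ specifying $T=2$ rounds $(U_1,M_1,U_2,M_2)$, where $U_1=U_\pp$, $M_1$ is the collapsing computational-basis measurement on the $\puzz$ register only, $U_2=I$, and $M_2$ is trivial ($m_2=0$). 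When $M_1$ returns a value $p$, the post-measurement state is $|\psi_1\rangle=|p\rangle_{\puzz}\otimes\tfrac{1}{\sqrt{\Pr[\puzz=p]}}\sum_{a,r}\alpha_{p,a,r}|a\rangle_{\ans}|r\rangle_{\mathrm{junk}}$. The first non-collapsing computational-basis measurement of $|\psi_1\rangle$ returns a bit string whose $\puzz$-coordinate is $p$ deterministically and whose $\ans$-coordinate $a'$ is distributed exactly as $\Pr[\ans=\cdot\mid\puzz=p]$ after marginalising over the junk register. Because $U_2=I$ and $M_2$ is trivial, $|\psi_2\rangle=|\psi_1\rangle$, so the second non-collapsing measurement independently returns $(p,a)$ with $a$ an i.i.d.\ sample from the same conditional. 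Extracting $(p,a,a')$ from $(v_1,v_2)$ gives exactly a sample from $\mathsf{Col}(\pp)$.

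The average-case hardness is then a direct restatement of dCRPuzz security: for every QPT $\cF$, the statistical distance $\SD(\{\pp,\cF(\pp)\}_{\pp\gets\Setup(1^\secp)},\{\pp,\mathsf{Col}(\pp)\}_{\pp\gets\Setup(1^\secp)})$ is non-negligible, which is precisely the condition that $\{\cD_\pp\}_\pp$ is hard on average under the QPT samplable input distribution $\cE$.

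I expect the only real subtlety to lie in carefully bookkeeping the post-measurement state after $M_1$ and in justifying that the non-collapsing computational-basis measurement on a product-structured state $|p\rangle\otimes|\phi_p\rangle$ samples the $\ans$-register exactly from the conditional $\Pr[\cdot\mid p]$; this is immediate from the definition of the $\cQ$ oracle but is the pivotal step where dCRPuzz security plugs into the non-collapsing oracle model. A minor formal matter is aligning the `non-negligible advantage' form of dCRPuzz security with the uniform-polynomial form of the average-case hardness definition, which is handled in the standard way.
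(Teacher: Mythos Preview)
Your proposal is correct and follows essentially the same route as the paper: both take $\cE=\Setup$, set $\cD_\pp=\mathsf{Col}(\pp)$, realize it in $\mathbf{SampPDQP}$ via the single query $(U_\pp,M_1,I,M_2)$ with $M_1$ collapsing only the $\puzz$ register and $M_2$ trivial so that the two non-collapsing readouts give two independent conditional samples of $\ans$, and then read off average-case hardness directly from the dCRPuzz security bound. Your closing remark about aligning the quantifiers is in fact moot here, since the dCRPuzz definition already has the ``$\exists$ polynomial $p$, $\forall$ QPT $\cA$, for all sufficiently large $\secp$'' shape that matches \cref{def:AHSampPDQP} verbatim.
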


\if0
\paragraph{Plausibility of our assumptions.}
The non-collapsing measurements explained above are highly-unphysical operations that seem extremely impossible to realize in QPT.
This supports a plausibility of our assumptions. 
In fact, a single query to the non-collapsing measurement oracle $\cQ$ can solve several problems that are believed to be hard for QPT.
For example, collision-resistant hashing (CRH) is broken.
Let $f$ be a collision resistant hash function. We define 
the query $(U_1,M_1,U_2,M_2)$ to $\cQ$ as follows:
$U_1$ is a unitary such that $U_1|0...0\rangle=\sum_x|x\rangle|f(x)\rangle$. 
$M_1$ is the computational-basis measurement on the second register.
$U_2$ is the identity, and $M_2$ does not do any measurement.
Then, the measurement $M_1$ gets $y$, and the post-measurement state is
$\sum_{x:f(x)=y}|x\rangle|y\rangle$.
For each $i\in[2]$, the measurement result $v_i$ of the $i$th non-collapsing measurement 
is written as $v_i=x_i\|y$, and $x_i\in f^{-1}(y)$.
With probability $1-\frac{1}{|f^{-1}(y)|}$, we have $x_1\neq x_2$ and therefore
a collision is obtained.
\fi

\paragraph{Applications that imply dCRPuzzs.}
We show that several applications imply dCRPuzzs.
We first show that one-shot message authentication codes (MACs) 
imply dCRPuzzs.
\begin{theorem}
If one-shot MACs exist, then dCRPuzzs exist.  
\end{theorem}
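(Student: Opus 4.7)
The plan is to construct a dCRPuzz $(\Setup, \Samp)$ directly from a one-shot MAC $(\KG, \Sign, \Vrfy)$. Let $\Setup(1^\secp)$ output $\pp \coloneqq 1^\secp$, and let $\Samp(\pp)$ run $(\sk, \qsk) \gets \KG(1^\secp)$, sample a uniformly random message $m \gets \bit^\secp$, compute $\sigma \gets \Sign(\qsk, m)$, and output $\puzz \coloneqq \sk$ together with $\ans \coloneqq (m, \sigma)$.

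Next I would unwind $\mathsf{Col}(\pp)$ explicitly. By definition it first samples $(\sk, (m_0, \sigma_0)) \gets \Samp(\pp)$ and then resamples $\ans' = (m_1, \sigma_1)$ from the conditional distribution of $\Samp(\pp)$ given $\puzz = \sk$; concretely this draws a fresh $\qsk_1$ with $(\sk, \qsk_1) \in \Supp(\KG(1^\secp))$, an independent uniform $m_1 \gets \bit^\secp$, and sets $\sigma_1 \gets \Sign(\qsk_1, m_1)$. Because $m_0$ and $m_1$ are independent uniform strings of length $\secp$, we have $m_0 \neq m_1$ with overwhelming probability, and both $\sigma_i$ verify under $\sk$ by MAC correctness.

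The reduction is then essentially syntactic. Suppose $\cA$ is a QPT dCRPuzz attacker whose output is within some small constant statistical distance of $\mathsf{Col}(\pp)$. Writing its output as $(\sk^*, (m_0^*, \sigma_0^*), (m_1^*, \sigma_1^*))$, closeness to $\mathsf{Col}$ forces, with noticeable probability, $\sk^* \in \Supp(\KG(1^\secp))$, $m_0^* \neq m_1^*$, and $\Vrfy(\sk^*, m_i^*, \sigma_i^*) = \top$ for both $i$. This is exactly a one-shot MAC forgery: a single QPT algorithm has produced two valid signatures on distinct messages under a single honestly-distributed secret key, contradicting one-shot MAC security.

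The main obstacle is reconciling the reduction's format with the security game: $\cA$ generates its own key $\sk^*$ rather than receiving $\qsk$ from an external challenger. I would handle this by appealing to the natural ``chosen-key'' formulation of one-shot MAC security, in which any QPT that outputs on its own a tuple $(\sk^*, m_0, \sigma_0, m_1, \sigma_1)$ with $\sk^* \in \Supp(\KG)$, $m_0 \neq m_1$, and both signatures verifying under $\sk^*$ must fail. This is the most natural one-shot notion --- the signer is the adversary themselves --- and is implicit in the paper's use of one-shot MACs, so once this correspondence is in place the reduction is immediate.
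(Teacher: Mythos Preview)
Your construction and reduction are essentially the paper's: set $\puzz$ to be the classical verification key and $\ans=(m,\sigma)$ with $m$ uniform, then observe that any QPT approximator of $\mathsf{Col}$ produces two valid signatures on distinct messages under one key, contradicting one-shot security. Two corrections are in order, though neither is fatal to the idea.

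First, your syntax for one-shot MACs drops the paper's $\Setup$ algorithm, which outputs a public parameter $\pp$ together with a \emph{secret} master verification key $\mvk$; verification requires $\mvk$, and in the security game the adversary receives only $\pp$. Accordingly the paper's dCRPuzz has $\mathsf{d}.\Setup$ output the MAC's $\pp$ (not $1^\secp$), so that the dCRPuzz adversary $\cA(\pp)$ sits in exactly the position of a MAC adversary. The paper then needs a short averaging argument over $(\pp,\mvk)$ to transfer correctness (which is stated over the joint distribution) to the specific $\mvk$ paired with the $\pp$ handed to $\cA$; under your simplified syntax this step would disappear, but for the paper's definition it is necessary.

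Second, your concern about a ``chosen-key'' obstacle is misplaced. In the paper's Definition~\ref{def:OSSMAC} (and in any standard formulation of one-shot signatures/MACs) the adversary \emph{does} generate its own $\vk$ and outputs $(\vk,m_0,m_1,\sigma_0,\sigma_1)$; that is precisely the content of the one-shot property---even the party who created the key cannot sign twice. So no appeal to an alternate security notion is required: the reduction is literally ``receive $\pp$, run $\cA(\pp)$, forward its output.'' Relatedly, the condition $\sk^*\in\Supp(\KG)$ you impose is not part of the definition and is not needed.
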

A one-shot signature~\cite{STOC:AGKZ20} is a digital signature scheme with a quantum signing key.
Signing a message with the key can be done only once.
One-shot MACs~\cite{untele} are a privately-verifiable version of one-shot signatures.
One-shot MACs are also a relaxation of two-tier one-shot signatures~\cite{cryptoeprint:2023/1937},
where the verification is partially public.
Because two-tier one-shot signatures can be constructed from the LWE assumption~\cite{cryptoeprint:2023/1937},
one-shot MACs can also be constructed from the LWE assumption~\cite{untele}.

We also show that two-message honest-statistically-hiding commitments with classical communication
imply dCRPuzzs.\footnote{Here, honest statistical-hiding means that the adversary behaves honestly in the commit phase.}
\begin{theorem}
If two-message honest-statistically-hiding commitments with classical communication exist, then dCRPuzzs exist.
\end{theorem}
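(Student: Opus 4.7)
The plan is to construct a dCRPuzz directly from the commitment scheme. Write the commitment as $(\Comm_1,\Comm_2)$ where $\Comm_1(1^\secp)\to y$ is the receiver's first message and $\Comm_2(y,m;r)\to c$ is the sender's committing string. I would define $\Setup(1^\secp)$ to run $\Comm_1$ and output $\pp\coloneqq y$, and define $\Samp(\pp)$ to draw $m\gets\bit$ and fresh randomness $r$, compute $c\coloneqq\Comm_2(y,m;r)$, and output $\puzz\coloneqq c$ and $\ans\coloneqq(m,r)$. Correctness/efficiency of $(\Setup,\Samp)$ is immediate from that of the commitment scheme.

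Next I would analyze the collision distribution $\mathsf{Col}(\pp)$. Given $\pp=y$, it outputs $(c,(m,r),(m',r'))$ where $c$ is first sampled as in $\Samp$ and then $(m,r),(m',r')$ are two independent draws from the conditional distribution of openings given $c$. Let $C_0,C_1$ denote the distributions of $c$ when $m=0,1$ respectively (with uniform $r$); honest statistical hiding gives $\SD(C_0,C_1)=\negl(\secp)$. A standard computation shows $\mathbb{E}_{c}\bigl|\Pr[m=0\mid c]-1/2\bigr|=\SD(C_0,C_1)/2=\negl(\secp)$, so by Markov's inequality $\Pr[m=0\mid c]\in(1/2-\negl,1/2+\negl)$ for all but a negligible fraction of $c$. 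It follows that when $m,m'$ are drawn independently from the conditional, $\Pr[m\neq m']\ge 1/2-\negl(\secp)$, and on that event $(m,r)$ and $(m',r')$ are valid openings of $c$ to distinct messages.

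I would then finish by contraposition. Suppose a QPT adversary $\cA$ makes $(\pp,\cA(\pp))_{\pp\gets\Setup(1^\secp)}$ only $\nu(\secp)$-far in statistical distance from $(\pp,\mathsf{Col}(\pp))_{\pp\gets\Setup(1^\secp)}$ for some $\nu(\secp)<1/4$ (and note that breaking dCRPuzz security amounts to making this distance inverse-polynomially small). Then with probability at least $1/2-\nu(\secp)-\negl(\secp)$ the adversary outputs $(c,(m,r),(m',r'))$ satisfying $\Comm_2(y,m;r)=c=\Comm_2(y,m';r')$ and $m\neq m'$. Since $y$ is sampled honestly by $\Comm_1$, this directly contradicts the computational binding of the commitment scheme, concluding the proof.

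The main obstacle I anticipate is the statistical hiding step: honest statistical hiding only bounds $\SD(C_0,C_1)$ on average, so I cannot simply assert pointwise that $\Pr[m=0\mid c]\approx 1/2$. The Markov-type argument above handles this, but it must be executed carefully so that the constant-probability lower bound on $\{m\neq m'\}$ survives after subtracting $\cA$'s statistical-distance slack; beyond this, the rest of the reduction is routine.
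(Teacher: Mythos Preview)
Your proposal is correct but takes a different construction from the paper. The paper's $\Samp$ always runs the commit phase on bit $0$ and then computes the reveal message $s_2\gets\mathsf{S}_2(b,\psi_S)$ for a fresh random $b$ from that same internal state, setting $\puzz\coloneqq s_1$ and $\ans\coloneqq(b,s_2)$. In their argument, $b\neq b'$ holds with probability exactly $1/2$ in $\mathsf{Col}$ (since $b$ is chosen independently of $s_1$), but statistical hiding together with correctness is then needed to argue that the reveal to $1$ produced from a commit-to-$0$ state is accepted by $\mathsf{R}_2$. You instead commit honestly to a random $m$ and set $\ans=(m,r)$; here both openings drawn in $\mathsf{Col}$ are honest and hence valid by construction, and hiding is used only to show $\Pr[m\neq m']\ge 1/2-\negl$ (precisely the obstacle you anticipated, and your identity $\mathbb{E}_c\bigl|\Pr[m{=}0\mid c]-\tfrac12\bigr|=\SD(C_0,C_1)/2$ is correct). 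One modeling caveat: the paper's definition allows quantum internal states for both parties (only the messages are classical), so there are no classical coins $r$ to output; to match that generality, set $\ans\coloneqq(m,s_2)$ where $s_2$ is the honest reveal message, and invoke correctness (rather than recomputation) to conclude that the two openings produced by $\mathsf{Col}$ are accepted with probability $1-\negl$. With that adjustment your argument goes through unchanged.
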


\if0
\takashi{
two-round oblivious state preparation may imply \url{https://eprint.iacr.org/2024/1820.pdf}. 
Though I don't think it's hard to prove this directly, we may go through distCROWPuzz. 
}
\mor{$\mathbf{SampPDQP}\not\subseteq\mathbf{SampBQP}$ imply untelegraphability.}
\fi

\if0
\paragraph{Toward impossibility of black-box reductions.}
\mor{mousukoshi informal ni suru}
\cite{ITCS:ABFL16} showed that,
relative to an oracle,
no QPT algorithm that makes a single classical query to the non-collapsing measurement oracle $\cQ$
can solve $\mathbf{NP}$.
Given that, it is reasonable to assume that
no QPT algorithm that makes a single classical query to the non-collapsing measurement oracle $\cQ$
can break OWFs as well.
If this is true, we observe that
a primitive that is broken by a QPT algorithm that makes a single classical query to $\cQ$
will not be constructed from OWFs in a black-box way:
\begin{theorem}
Assume that
OWFs are not broken by
any QPT algorithm that makes a single classical query to the non-collapsing measurement oracle $\cQ$.
Then, a primitive that is broken by a QPT algorithm that makes a single classical query to $\cQ$
will not be constructed from OWFs in a black-box way.
\end{theorem}
The proof is as follows: If the primitive is constructed from OWFs in a black-box way,
OWFs are broken by $\cR^{\cA}$ where $\cR$ is the QPT reduction algorithm and $\cA$ is a QPT adversary
that breaks the primitive.\mor{The proof does not work if $\cA$ is queried by $\cR$ more than twice.}
Then 
OWFs are broken by a QPT algorithm that makes a single classical query to $\cQ$, which is the contradiction.

As we have seen, one-shot signatures, one-shot MACs, two-message honest-statistically-hiding commitments with classical communication,
and dCRPs are broken by a QPT algorithm that makes a single query to $\cQ$.
We therefore have the following corollary as well.
\begin{corollary}
Assume that
OWFs are not broken by
any QPT algorithm that makes a single query to the non-collapsing measurement oracle $\cQ$.
Then, one-shot signatures, one-shot MACs, two-message honest-statistically-hiding commitments with classical communication,
and dCRPs
cannot be constructed from OWFs in a black-box way.
\end{corollary}
\mor{\cite{AC:AnaHuYue23} showed that (some restricted) public-key quantum money schemes cannot be constructed from OWFs
in a black-box way\footnote{They actually separated public-key quantum money schemes from CRH, but it also implies the separation from OWFs.}, and 
it is an open to extend the no-go result for general public-key quantum money schemes.
The above corollary suggests that a stronger primitive, one-shot signatures, will not be constructed from 
OWFs in a black-box way.
}
\mor{Classically, it is believed that two-round statistically-hiding commitments will not be constructed from OWFs in a black-box way~\cite{C:MahPas12}.
The above corollary suggests that its quantum version, which could be weaker, will not be constructed from OWFs as well.}
In summary, non-collapsing measurement oracle opens up a new path to proving black-box impossibility results among cryptographic primitives.
\fi

\if0
We could also ask whether OWPuzzs can be based on 
$\mathbf{PP}\not\subseteq\mathbf{BQP}$ or its average-case version.
Because there is an oracle relative to which $\mathbf{SZK}$ is not in $\mathbf{PP}$~\cite{BoulandChen},
$\mathbf{PDQP}$ will not be in $\mathbf{PP}$.
\fi

\if0
\subsection{Open Problems}
In this paper, we have initiated the study of quantum cryptography from the viewpoint of the hardness of non-collapsing measurements.
We leave several interesting open problems.
\begin{enumerate}
\item 
On page 7 of arXiv:1805.08577, it says that PDQP contains QCCCSZK. 
\takashi{I believe it's at least true that the hardness of the classical-output quantum-circuit version of the statistical difference problem implies the hardness of PDQP. This may be worth mentioning though I don't know if that problem is QCCCSZK-complete.}
\end{enumerate}
\fi

\paragraph{Summary of our results.}
Finally, all our results obtained in this paper and known results are summarized in \cref{fig:graph}.

\usetikzlibrary{positioning} 
\usetikzlibrary{calc} 
\usetikzlibrary {quotes}
\tikzset{>=latex} 

\tikzstyle{mysmallarrow}=[->,black,line width=1.6]
\tikzstyle{mydbarrow}=[<->,black,line width=1.6]
\tikzstyle{newarrow}=[->,red,line width=1.6]
\tikzstyle{newsinglearrow}=[->,red,line width=1.6]
\tikzstyle{carrow}=[->,red,line width=1.6]
\begin{figure}
\begin{center}
    \begin{tikzpicture}[scale=0.9,every edge quotes/.style = {font=\footnotesize,fill=white}]
      \def\h{-2.0} 
      \def\w{2.6} 

        \node[] (PP) at (3.3*\w,5.8*\h) {$\mathbf{P}^\mathbf{PP}\not\subseteq \mathbf{BQP}$};
       \node[] (SZK) at (5*\w,0.1*\h) {AH of $\mathbf{SZK}$};
       \node[] (SampPDQP) at (5*\w,2.8*\h) {AH of $\mathbf{SampPDQP}$};
       \node[] (worstPDQP) at (3.3*\w,2.6*\h) {$\mathbf{PDQP}\not\subseteq\mathbf{BQP}$};
       \node[] (worstSampAdPDQP) at (3.3*\w,3.5*\h) {$\mathbf{Samp(Ad)PDQP}\not\subseteq\mathbf{SampBQP}$};
       \node[] (aiOWPuzzs) at (3.3*\w,4.8*\h) {aiOWPuzzs};
       \node[] (AHPDQP) at (5*\w,1.3*\h) {AH of $\mathbf{PDQP}$};
       \node[] (OWFs) at (6.2*\w,2.8*\h) {OWFs};
       \node[] (OWPuzzs) at (5*\w,4.2*\h) {OWPuzzs};
        \node[] (CZK) at (6.2*\w,1.6*\h) {AH of $\mathbf{CZK}$};
       \node[] (CRH) at (3.8*\w,0.5*\h) {CRH};
       \node[] (dCRH) at (3.8*\w,1.2*\h) {dCRH};
       \node[] (dCRPs) at (3.8*\w,1.8*\h) {dCRPuzzs};
       \node[] (one-shot signatures) at (2.9*\w,0.4*\h) {one-shot signatures};
       \node[] (one-shot MACs) at (2.9*\w,1*\h) {one-shot MACs};
       \node[] (commitments) at (2.2*\w,1.4*\h) {2 HSH Comm CC};


        \draw[newarrow] (commitments) edge[] (dCRPs);
        \draw[mysmallarrow=black] (AHPDQP) edge[] (worstPDQP);
        \draw[mysmallarrow=black] (worstPDQP) edge[] (worstSampAdPDQP);
        \draw[newarrow] (worstSampAdPDQP) edge[] (aiOWPuzzs);
        \draw[mysmallarrow=black] (SampPDQP) edge[] (worstSampAdPDQP);
        \draw[mysmallarrow=black] (SZK) edge["\cite{ITCS:ABFL16}"] (AHPDQP);
        \draw[mysmallarrow=black] (AHPDQP) edge[] (SampPDQP);
        \draw[mysmallarrow=black] (one-shot signatures) edge[] (one-shot MACs);
        \draw[newarrow] (one-shot MACs) edge[] (dCRPs);
        \draw[mysmallarrow=black] (dCRH) edge[] (OWFs);
        \draw[mysmallarrow=black] (CRH) edge[] (dCRH);
        \draw[mysmallarrow=black] (dCRH) edge[] (dCRPs);
        \draw[newarrow=red] (dCRPs) edge[] (SampPDQP);
        \draw[mysmallarrow=black] (SZK) edge["\cite{C:KomYog18}"] (dCRH);
        \draw[newarrow] (SampPDQP) edge[] (OWPuzzs);
        \draw[mysmallarrow=black] (OWFs) edge[] (OWPuzzs);
        \draw[mysmallarrow=black] (SZK) edge[] (CZK);
        \draw[mysmallarrow=black] (CZK) edge["\cite{ISTCS:OstWig93}"] (OWFs);
        \draw[mysmallarrow=black] (OWPuzzs) edge[] (aiOWPuzzs);
        \draw[mysmallarrow=black] (aiOWPuzzs) edge["\cite{cavalar2023}"] (PP);
    \end{tikzpicture}
\end{center}
\caption{A summary of results. 
Black lines are known results or trivial implications.
Red lines are new in our work. 
``AH'' stands for the average-case hardness.
``ai'' stands for auxiliary-input.
``2 HSH Comm CC'' is two-message honest-statistically-hiding commitments with classical communication.
}
\label{fig:graph}
\end{figure}
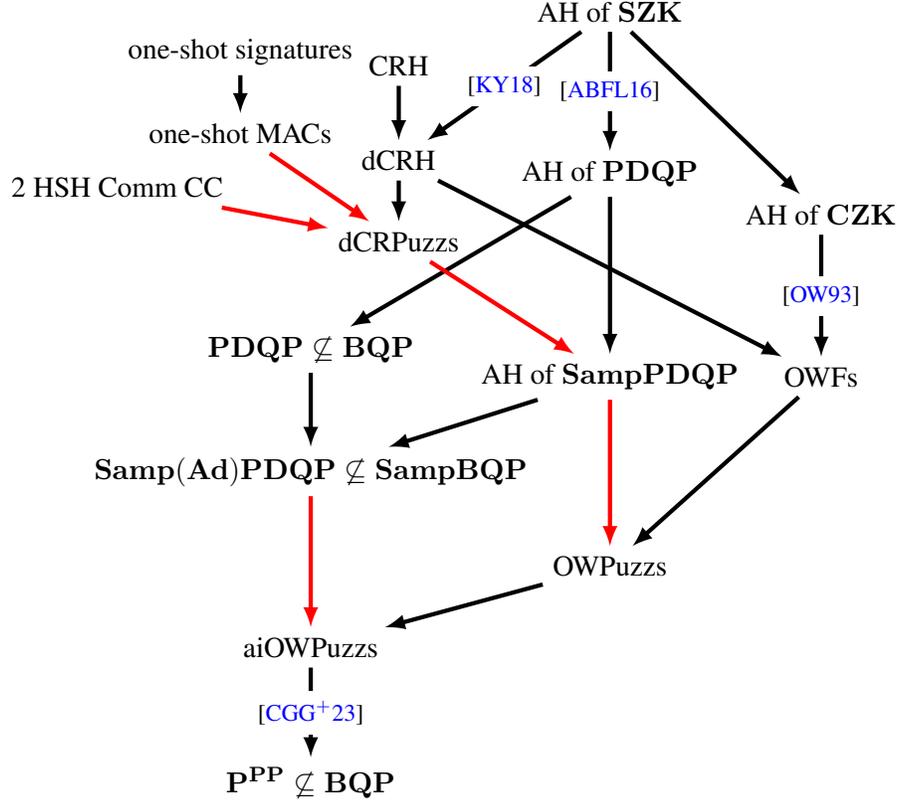

\if0
\mor{
As is shown in \cite{ITCS:ABFL16}, a single query to the non-collapsing measurement oracle $\cQ$
can solve the statistical difference problem, which is $\mathbf{SZK}$-complete~\cite{FOCS:SahVad97}.\footnote{We can also consider a quantum version of the statistical difference problem
where two inputs are quantum circuits that output classical bit strings. It is easy to see that such a quantum version is also solved by querying $\cQ$.\mor{Do we write Aaronson's statement that such problem is QCCC-SZK complete?}\takashi{I don't think he explicitly claimed that the problem is QCCC-SZK complete.}}
This in particular means that $\mathbf{SZK}\subseteq\mathbf{PDQP}$.
Because $\mathbf{PDQP}$ trivially contains $\mathbf{BQP}$ and there is an oracle relative to which $\mathbf{SZK}\not\subseteq\mathbf{BQP}$~\cite{STOC:Aaronson02},
we can expect that
$\mathbf{PDQP}\not\subseteq\mathbf{BQP}$
(and therefore $\mathbf{SampPDQP}\not\subseteq\mathbf{SampBQP}$ as well).
}
\fi

\subsection{Related Works}
\label{sec:related}


\paragraph{Basing OWPuzzs on plausible assumptions.}
Recently, several results have been obtained that base OWPuzzs on some plausible assumptions that do not seem to imply OWFs.
Khurana and Tomer~\cite{STOC:KhuTom25} showed that OWPuzzs can be constructed
from the assumption of $\mathbf{P}^{\mathbf{PP}}\not\subseteq\mathbf{BQP}$ plus
certain hardness assumptions that imply sampling-based quantum advantage. 
Hiroka and Morimae~\cite{C:HirMor25}, and 
Cavalar, Goldin, Gray and Hall~\cite{EC:CGGH25}
independently showed that the existence of OWPuzzs is equivalent to certain average-case hardness of estimating Kolmogorov complexity.
We do not know if their assumptions are related to our generic assumption of average-case hardness of $\mathbf{(Samp)PDQP}$. 
It would be interesting if there are some relations.

\paragraph{Relations to $\mathbf{SZK}$ and $\mathbf{CZK}$.}
It is well known in classical cryptography
that (classically-secure) OWFs exist if $\mathbf{SZK}$ is hard on average for probabilistic polynomial-time (PPT),
and that (classically-secure) auxiliary-input OWFs
 exist if $\mathbf{SZK}\not\subseteq\mathbf{BPP}$~\cite{CCC:Ost91}.
Their proofs can be easily extended to show that
quantumly-secure OWFs exist if $\mathbf{SZK}$ is hard on average for QPT
and that
quantumly-secure auxiliary-input OWFs exist if $\mathbf{SZK}\not\subseteq\mathbf{BQP}$.
Because quantumly-secure (auxiliary-input) OWFs imply (auxiliary-input) OWPuzzs, this means that
\begin{corollary}[\cite{CCC:Ost91}]
\label{coro:trivial1}
If $\mathbf{SZK}$ is hard on average for QPT, then OWPuzzs exist. 
\end{corollary}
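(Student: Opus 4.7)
The plan is to follow the two-step route already sketched in the paragraph preceding the corollary: first establish the quantum analogue of Ostrovsky's classical theorem that average-case hardness of $\mathbf{SZK}$ yields OWFs, and then invoke the (trivial) reduction from quantumly-secure OWFs to OWPuzzs. The heart of the argument is the first step, where I would go through the $\mathbf{SZK}$-complete Statistical Difference problem $\mathsf{SD}$ of Sahai and Vadhan: an instance is a pair of classical circuits $(C_0, C_1)$, and the promise is that either $\SD(C_0(U), C_1(U)) \geq 2/3$ (Yes) or $\leq 1/3$ (No). Assuming $\mathbf{SZK}$ is hard on average for QPT, completeness of $\mathsf{SD}$ gives a QPT-samplable distribution $\cE$ over $\mathsf{SD}$ instances on which no QPT solver succeeds with probability better than $1 - 1/\poly(\secp)$.

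Next, I would construct a distributional OWF $f$ by setting $f(b, r, (C_0, C_1)) \coloneqq (C_b(r), (C_0, C_1))$, with $b \gets \bit$, $r$ uniform, and $(C_0, C_1) \gets \cE$. The standard Ostrovsky argument then says: in the Yes case most images $y$ admit a preimage in only one of the two branches $b \in \bit$, while in the No case images are roughly equally likely under both branches; hence a QPT inverter that produces a preimage whose distribution is statistically close to the true conditional distribution can be turned into a QPT distinguisher between the two cases of $\mathsf{SD}$, contradicting average-case hardness. Because this reduction is entirely black-box in the adversary, it lifts verbatim to QPT adversaries. Applying the Impagliazzo--Luby amplification (which is likewise a black-box reduction using universal hashing and Goldreich--Levin, both sound against quantum adversaries) converts $f$ into a quantumly-secure OWF $g$.

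For the second step, given a quantumly-secure OWF $g$, I would define the OWPuzz $(\Samp, \Ver)$ as follows: $\Samp(1^\secp)$ samples uniform $\ans \in \bit^\secp$ and outputs $(\puzz, \ans) = (g(\ans), \ans)$; $\Ver(\puzz, \ans')$ accepts iff $g(\ans') = \puzz$. Correctness is perfect, and any QPT adversary $\cA$ breaking OWPuzz security on $\puzz$ produces $\ans'$ with $g(\ans') = \puzz = g(\ans)$ with non-negligible probability, which is exactly inversion of $g$ on a uniformly sampled image.

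The only genuinely delicate point is checking that Ostrovsky's construction and Impagliazzo--Luby are quantum-sound; but both reductions are black-box with purely classical transformations of the adversary's output and rely only on generic tools (statistical distance bounds, universal hashing, Goldreich--Levin list decoding) that already have quantum-secure analyses in the literature, so no new ideas are needed. The main obstacle is therefore not conceptual but bookkeeping: verifying that each step of the quantum-analogue chain preserves non-negligible advantage against QPT distinguishers/inverters uniformly in $\secp$.
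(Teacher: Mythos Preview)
Your proposal is correct and follows essentially the same route the paper indicates: the paper does not give a detailed proof of this corollary but simply observes that Ostrovsky's classical argument (average-case hardness of $\mathbf{SZK}$ $\Rightarrow$ OWFs, via the $\mathsf{SD}$-complete problem) extends verbatim to QPT adversaries, and then notes that quantumly-secure OWFs trivially yield OWPuzzs. Your write-up supplies more of the underlying detail (the explicit distributional-OWF construction, Impagliazzo--Luby amplification, and the OWF-to-OWPuzz sampler), but the logical structure is identical to the paper's sketch.
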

\begin{corollary}[\cite{CCC:Ost91}]
\label{coro:trivial2}
If $\mathbf{SZK}\not\subseteq\mathbf{BQP}$,
then auxiliary-input OWPuzzs exist.
\end{corollary}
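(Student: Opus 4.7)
The plan is to adapt Ostrovsky's classical argument that $\mathbf{SZK} \not\subseteq \mathbf{BPP}$ implies auxiliary-input OWFs to the quantum setting, and then to observe that auxiliary-input quantumly-secure OWFs trivially yield auxiliary-input OWPuzzs. The first step is to invoke the Sahai–Vadhan Statistical Difference problem, which is $\mathbf{SZK}$-complete: given two circuits $C_0, C_1$ inducing distributions $D_0, D_1$ over $\bit^n$, decide whether $\SD(D_0, D_1) \ge 2/3$ or $\SD(D_0, D_1) \le 1/3$. The assumption $\mathbf{SZK} \not\subseteq \mathbf{BQP}$ gives an infinite family of instances $\{(C_0^\secp, C_1^\secp)\}_\secp$ that no QPT algorithm decides correctly.

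Next, I would use these instances as auxiliary inputs and define $f_{C_0, C_1}(b, r) \coloneqq C_b(r)$, with $(b, r) \in \bit \times \bit^\ell$. The security proof proceeds by contraposition: given a QPT inverter $\cA$ that, on auxiliary input $(C_0, C_1)$ and image $y$, finds $(b', r')$ with $C_{b'}(r') = y$, I would build a QPT distinguisher $\cD$ for Statistical Difference as follows. Sample $b \sample \bit$ and $r \sample \bit^\ell$, compute $y \gets C_b(r)$, run $\cA(C_0, C_1, y)$ to obtain $(b', r')$, and output "$\SD$ large" iff $b' = b$. After applying the Sahai–Vadhan polarization lemma (a purely classical preprocessing on $C_0, C_1$ that amplifies the SD gap), on YES instances the preimage $b$ is essentially determined by $y$, so $\cA$ recovers $b' = b$ with probability close to $1$, while on NO instances the posterior over $b$ given $y$ is nearly uniform, so the success probability is close to $1/2$. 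Thus $\cD$ decides the polarized SD instance in QPT, contradicting $\mathbf{SZK} \not\subseteq \mathbf{BQP}$.

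Finally, I would convert the resulting auxiliary-input OWF $\{f_x\}_x$ into an auxiliary-input OWPuzz by setting $\Samp(x)$ to draw a uniform $w$, output $\ans \coloneqq w$ and $\puzz \coloneqq f_x(w)$, and $\Ver(x, \puzz, \ans')$ to accept iff $f_x(\ans') = \puzz$. Correctness is immediate, and any QPT adversary that breaks the auxiliary-input OWPuzz security on input $(x, \puzz)$ directly inverts $f_x$, contradicting the OWF step.

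The main obstacle is ensuring the quantum lift of Ostrovsky's reduction goes through cleanly: specifically, verifying that the polarization lemma, which is a syntactic transformation on the circuits $C_0, C_1$, interacts correctly with a QPT (rather than PPT) inverter, and that the worst-case-to-auxiliary-input mapping produces instances of the polynomially-bounded length required by the OWPuzz definition. Both issues are essentially bookkeeping — the reduction is black-box in $\cA$ and uses no measurement of $\cA$'s internal state beyond its classical output — so no genuinely quantum difficulty arises, and \cref{coro:trivial2} follows.
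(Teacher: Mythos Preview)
Your proposal is correct and matches the paper's approach: the paper does not give a standalone proof but simply observes that Ostrovsky's classical argument (that $\mathbf{SZK}\not\subseteq\mathbf{BPP}$ implies auxiliary-input OWFs) extends to the quantum setting to yield quantumly-secure auxiliary-input OWFs, which in turn trivially give auxiliary-input OWPuzzs. Your write-up via the Sahai--Vadhan Statistical Difference problem and the function $f_{C_0,C_1}(b,r)=C_b(r)$ is the standard modern presentation of Ostrovsky's result, and the observation that the reduction is black-box in the inverter's classical output (so nothing quantum-specific obstructs it) is exactly the point the paper is alluding to when it says ``their proofs can be easily extended.''
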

Because $\mathbf{SZK}\subseteq\mathbf{PDQP}$~\cite{ITCS:ABFL16},
our results, \cref{thm:1,cor:4}, are improvements of \Cref{coro:trivial1,coro:trivial2}.

\cite{ISTCS:OstWig93} improved the results of \cite{CCC:Ost91} to $\mathbf{CZK}$, and their proofs can be easily extended to
the quantum case as well. We therefore obtain
\begin{corollary}[\cite{ISTCS:OstWig93}]
\label{coro:trivial3}
If $\mathbf{CZK}$ is hard on average for QPT, then OWPuzzs exist. 
\end{corollary}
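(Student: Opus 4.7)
Since (quantumly-secure) OWFs trivially imply OWPuzzs, it suffices to show that if $\mathbf{CZK}$ is hard on average for QPT, then quantumly-secure OWFs exist. The plan is to adapt the classical Ostrovsky--Wigderson argument~\cite{ISTCS:OstWig93} to the quantum setting. I would argue by contrapositive: assuming quantumly-secure OWFs do not exist, I would construct a QPT algorithm that decides any $L \in \mathbf{CZK}$ correctly with non-negligible advantage on every QPT-samplable input distribution.

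First, I would invoke a quantum analogue of the Impagliazzo--Luby theorem: if quantumly-secure OWFs do not exist, then quantumly-secure distributional OWFs do not exist either. Concretely, for every QPT sampling procedure that produces pairs $(r, y = f(r))$, there is a QPT ``inverter'' that, on input $y$, outputs $r'$ such that the joint distribution $(r', y)$ is computationally indistinguishable from the honest distribution of $(r, y)$ by any QPT distinguisher. I would check that the standard proof of this folklore equivalence (via a hybrid over bits of $r$ plus a Goldreich--Levin style predictor) carries over against QPT adversaries, since each hybrid step reduces to inverting an efficiently computable function on efficiently samplable inputs.

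Second, fix $L \in \mathbf{CZK}$ and let $(P, V)$ be its interactive proof with a QPT simulator $\algo{Sim}$ producing simulated transcripts indistinguishable, against every QPT verifier, from the honest transcript on YES instances. I would use the non-existence of distributional OWFs to construct, from $\algo{Sim}$, an ``honest-looking'' QPT transcript generator whose output on NO instances is almost distinguishable from the real protocol: on a YES instance $x$, the simulator's output is indistinguishable from a real transcript by computational zero-knowledge; on a NO instance $x$, no accepting transcript can be produced by any party following the protocol (up to soundness error), so a verifier that inverts the simulator's randomness and re-runs the protocol checks can detect inconsistency. Running the QPT verifier on the simulator's output then separates YES and NO with noticeable advantage, contradicting average-case hardness of $L$ on the chosen QPT-samplable distribution.

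The main obstacle is verifying that each ingredient of the classical Ostrovsky--Wigderson proof is quantum-robust: (i) the Impagliazzo--Luby equivalence of OWFs and distributional OWFs against QPT adversaries, where one must be careful that the inverter receives only the classical output $y$ and that all hybrid arguments are over classical distributions so that standard quantum-indistinguishability reasoning applies; and (ii) the zero-knowledge simulation must remain indistinguishable even when its output is post-processed by the inverter, which follows because the inverter is itself QPT and hence any distinguishing advantage it enables would contradict computational zero-knowledge against QPT verifiers. Once both points are established, the chain ``no quantum OWFs $\Rightarrow$ $\mathbf{CZK}$ easy on average for QPT'' contraposes to the corollary, and combining with OWFs $\Rightarrow$ OWPuzzs completes the proof.
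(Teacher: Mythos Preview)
Your approach matches the paper's: the paper does not give a detailed proof of this corollary but simply observes that the classical Ostrovsky--Wigderson argument~\cite{ISTCS:OstWig93} extends to QPT adversaries, yielding quantumly-secure OWFs from average-case hardness of $\mathbf{CZK}$, and then invokes OWFs $\Rightarrow$ OWPuzzs. Your proposal fleshes out exactly this chain of reasoning.
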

\begin{corollary}[\cite{ISTCS:OstWig93}]
\label{coro:trivial4}
If $\mathbf{CZK}\not\subseteq\mathbf{BQP}$,
then auxiliary-input OWPuzzs exist.
\end{corollary}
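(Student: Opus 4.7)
The plan is to quantize the classical Ostrovsky--Wigderson (OW) argument~\cite{ISTCS:OstWig93}, which shows $\mathbf{CZK}\not\subseteq\mathbf{BPP}$ implies auxiliary-input OWFs, and then invoke the trivial implication from (quantumly-secure) auxiliary-input OWFs to auxiliary-input OWPuzzs. So the real task is to prove: $\mathbf{CZK}\not\subseteq\mathbf{BQP}$ implies auxiliary-input OWFs secure against QPT adversaries.

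Concretely, I would fix $L\in\mathbf{CZK}\setminus\mathbf{BQP}$ together with a computational ZK protocol $(P,V)$ for $L$ and its QPT simulator $S$; the standard HVZK-to-CZK reduction lets us assume honest-verifier ZK without loss of generality. For each instance $x$, the candidate auxiliary-input OWF is the deterministic function $f_x(r)\defeq S(x;r)$ sending the simulator's coins to its transcript. Suppose some QPT inverter breaks one-wayness of $f_x$ for infinitely many hard instances $x$. I would then follow OW's recipe to derive $L\in\mathbf{BQP}$: run a live interaction with the honest verifier $V$, feed the observed partial view into the inverter to obtain simulator coins consistent with $V$'s messages, and use these coins to produce an accepting continuation of the transcript. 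On yes-instances, the ZK property guarantees the inverter succeeds on an essentially real verifier view, so the reconstructed interaction is accepted with non-negligible probability; on no-instances, soundness forbids any QPT strategy from being accepted except with negligible probability. The resulting statistical gap yields a QPT decision procedure for $L$, contradicting $L\notin\mathbf{BQP}$.

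The main obstacle---and the reason the authors call the extension ``easy''---is just to verify that OW's reduction survives the move from PPT to QPT adversaries. Fortunately, the reduction is fully black-box in the inverter and never rewinds it, and all objects it manipulates (transcripts, verifier coins, inverter inputs/outputs) are classical, so quantum rewinding and no-cloning are not issues. What must be explicitly checked is that every indistinguishability invoked by OW (between real interactions and simulator outputs on yes-instances, and between the inverter's behavior on real and simulated verifier views) holds against QPT distinguishers; this is delivered by the definition of quantumly-secure computational ZK. The auxiliary-input flavor, as in OW, comes from choosing for each length $\secp$ a nonuniformly hard instance $x_\secp\in L$, which gives auxiliary-input OWFs and hence, trivially, auxiliary-input OWPuzzs as required.
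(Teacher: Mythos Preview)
The paper gives no proof here---it simply cites~\cite{ISTCS:OstWig93} and says the argument ``can be easily extended to the quantum case''---so your plan (quantize OW to obtain quantumly-secure auxiliary-input OWFs, then pass trivially to auxiliary-input OWPuzzs) is exactly what the paper intends.

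One subtlety in your sketch deserves attention. You close the yes-instance analysis by invoking ``quantumly-secure computational ZK,'' but the class $\mathbf{CZK}$ in the statement is the standard one, whose zero-knowledge guarantee is only against \emph{PPT} distinguishers. In your quantized reduction the inverter---and therefore the implicit distinguisher behind ``the inverter behaves the same on real and simulated verifier views''---is QPT, and PPT-indistinguishability does not directly control a QPT algorithm's behavior. This is a non-issue in the classical OW proof because everything there is PPT. For the quantum extension to go through with the standard $\mathbf{CZK}$, one must check that OW's hybrid argument only ever feeds the ZK guarantee through PPT-computable predicates (such as ``the simulator's transcript is accepting''), rather than through the QPT inverter itself; alternatively, the corollary should be read with the post-quantum variant of $\mathbf{CZK}$ (ZK against QPT distinguishers). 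Your sketch silently takes the latter reading, which the paper neither states nor rules out.
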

To our knowledge, there is no known relation between $\mathbf{PDQP}$ and $\mathbf{CZK}$, and therefore our 
results, \cref{thm:1,cor:4}, are incomparable to \cref{coro:trivial3,coro:trivial4}.

\paragraph{Relations to $\mathbf{PP}$.}
Khurana and Tomer~\cite{STOC:KhuTom25} 
constructed OWPuzzs from
the assumption of $\mathbf{P}^{\mathbf{PP}}\not\subseteq\mathbf{BQP}$
plus some assumptions on which sampling-based quantum advantage are based.
They left the
question of
whether OWPuzzs can be based solely on 
$\mathbf{P}^{\mathbf{PP}}\not\subseteq\mathbf{BQP}$ or its average-case version.
Because $\mathbf{PDQP}\subseteq \mathbf{P}^{\mathbf{PP}}$~\cite{ITCS:ABFL16}, and $\mathbf{SampAdPDQP}\nsubseteq\mathbf{SampBQP}$ implies $\mathbf{P}^{\mathbf{PP}} \nsubseteq \mathbf{BQP}$, 
our results, \cref{thm:1,cor:4}, 
solve weaker versions of their open problem.

\subsection{Technical Overview}
In this subsection, we provide high-level overviews of our proofs.

\paragraph{OWPuzzs from the average-case hardness of $\mathbf{SampPDQP}$.}
We first explain our construction of OWPuzzs from average-case hardness of $\mathbf{SampPDQP}$.
Our proof technique is inspired by the notion of universal extrapolation \cite{FOCS:ImpLev90,CCC:Ost91} and its application in the quantum setting \cite{STOC:KhuTom25,C:HirMor25,EC:CGGH25}.
Because of the equivalence between OWPuzzs and distributional OWPuzzs (DistOWPuzzs) \cite{C:ChuGolGra24}, it suffices to construct DistOWPuzzs.
Here a DistOWPuzz is a QPT sampling algorithm $\Samp(1^\secp)\to(\puzz,\ans)$ that satisfies the following:
There exists a polynomial $q$ such that for any QPT algorithm $\cA$,
\begin{align}
\SD(\{\puzz,\ans\}_{(\puzz,\ans)\gets\Samp(1^\secp)}
,\{\puzz,\cA(\ans)\}_{(\puzz,\ans)\gets\Samp(1^\secp)}
)>\frac{1}{q(\secp)}
\end{align}
for all sufficiently large $\secp\in\mathbb{N}$. 

\if0
Remember that $\mathbf{SampPDQP}$ is a class of sampling problems that are solved with a classical deterministic polynomial-time
algorithm that can make a single query to the non-collapsing measurement oracle $\cQ$.
Therefore, roughly speaking, our goal is to show that $\cQ$ can be simulated in QPT if DistOWPuzzs do not exist.
\fi

Assume that $\mathbf{SampPDQP}$ is hard on average.
This means that there exist a sampling problem $\{\cD_x\}_x\in\mathbf{SampPDQP}$, a polynomial $p$ and a QPT algorithm $\cE(1^\secp)\to x\in\bit^\secp$ 
such that for any QPT algorithm $\cF$,
\begin{align}\label{eq:overview_1}
\SD(\{x,\cF(x)\}_{x\gets\cE(1^\secp)},\{x,\cD_x\}_{x\gets\cE(1^\secp)})>\frac{1}{p(\secp)}    
\end{align}
for all sufficiently large $\secp\in\mathbb{N}$.
From this $\cE$, we construct a DistOWPuzzs $\Samp$ as follows.
\begin{enumerate}
\item
Sample $x\gets\cE(1^\secp)$.
\item 
Let 
$C_x:=(U_1,M_1,...,U_T,M_T)$ 
be (a classical description of) a quantum circuit 
that is queried to $\cQ$
corresponding to the instance $x$.
    \item 
    Choose $i\gets[T]$.
    \item 
    Run $(U_1,M_1,...,U_i,M_i)$ to get the measurement results $(u_1,...,u_i)$,
    where $u_i$ is the measurement result of the measurement $M_i$.
    \item 
    Measure all qubits of the resulting state to get the result $v_i=u_i\|w_i$.
   \item 
   Output $\puzz\coloneqq(x,i,u_1,...,u_i)$ and $\ans\coloneqq w_i$.
\end{enumerate}
For the sake of contradiction, assume that $\Samp$ is not a DistOWPuzz.
Then, for any polynomial $q$, there exists a QPT $\cA$ such that 
\begin{align}\label{eq:overview_2}
    \SD (\{x,i,u_1,...,u_i,w_i\} , \{x,i,u_1,...,u_i,\cA(x,i,u_1,...,u_i)\}) \le \frac{1}{q(\secp)}
\end{align}
for infinitely many $\secp\in\mathbb{N}$, where $(x,i,u_1,...,u_i,w_i)\gets\Samp(1^\secp)$.
Our goal is to construct a QPT algorithm that contradicts \cref{eq:overview_1}.
Define the QPT algorithm $\cB$ that simulates the output distribution of $\cQ$ as follows:
\begin{enumerate}
    \item Take $x$ and $C_x=(U_1,M_1,...,U_T,M_T)$ as input.
    \item Run $C_x$ and obtain $(u_1,...,u_T)$, where $u_i$ is the outcome of the measurement $M_i$.
    \item For all $i\in[T]$, run $w_i\gets\cA(x,i,u_1,...,u_i)$.
    \item Output $(u_1\|w_1,...,u_T\|w_T)$.
\end{enumerate}
Roughly speaking, because of \cref{eq:overview_2}, the distribution $w_i\gets\cA(x,i,u_1,...,u_i)$
in the third step of $\cB$ is close to the distribution $\Pr[w_i\gets\cQ(C_x)|x\gets\cE(1^\secp),i\gets[T],(u_1,...,u_i)\gets\cQ(C_x)]$.
Therefore, the output distribution of $\cB$ is close to that of $\cQ$.
Hence a QPT algorithm that runs the base machine of $\mathbf{SampPDQP}$, which is a polynomial-time deterministic machine, 
and runs the QPT algorithm $\cB$ instead of the query to $\cQ$ breaks \cref{eq:overview_1}.

\paragraph{Auxiliary-input OWPuzzs from the worst-case hardness of $\mathbf{SampAdPDQP}$.}
Our second result is the construction of auxiliary-input OWPuzzs from the worst-case assumption.\footnote{Here,  we introduce an idea that directly proves \cref{cor:4}. The proof of \cref{thm:2} follows as a special case of the same technique.}
The basic idea of the proof is similar to that of the first result,
but there are two crucial issues, and we need more careful investigations.
The first issue is that the assumption is now the worst-case hardness.
The second issue is that now adaptive queries are allowed.
Primitives constructed from worst-case assumptions often have to be auxiliary-input ones, and in fact, this is also the case here:
what we construct is an auxiliary-input OWPuzzs.
A slightly non-trivial and an interesting point is that the second issue is also resolved
by considering only the auxiliary-input situation!

The first issue is easily resolved by giving the instance $x$ to the OWPuzz as input. 
Let us explain more details about the second issue.
\if0
First, because of the worst-case hardness assumption, the construction of OWPuzzs requires auxiliary-inputs.
Due to the equivalence of auxiliary-input OWPuzzs and auxiliary-input DistOWPuzzs \cite{C:ChuGolGra24}, it suffices to construct auxiliary-input DistOWPuzzs.
Here, an auxiliary-input DistOWPuzzs is a QPT sampling algorithm $\Samp(x)\to(a,b)$ that satisfies the following:
There exists a polynomial $p$, and an infinite subset $I\subseteq\bit^*$ such that for any QPT $\cA$, 
$\SD(\{a,b,\},\{a,\cA(x,a)\})>1/p(\secp)$ for all $x\in I$, where $(a,b)\gets\Samp(x)$.
The first attempt to construct an auxiliary-input DistOWPuzz $\Samp$ is as follows:
\begin{enumerate}
    \item Take $x\in\bit^*$ as input.
    \item Let $C:=(U_1,M_1,...,U_T,M_T)$ be (a classical description of) a quantum circuit 
    that is queried to $\cQ$ corresponding to the instance $x$.
    \item Choose $i\gets[T]$.
    \item Run $(U_1,M_1,...,U_i,M_i)$ to get the measurement results $(u_1,...,u_i)$,
    where $u_i$ is the measurement result of the measurement $M_i$.
    \item Measure all qubits of the resulting state to get the result $v_i=u_i\|w_i$.
   \item Output $\puzz\coloneqq(i,u_1,...,u_i)$ and $\ans\coloneqq w_i$.
\end{enumerate}
Compared with the construction of DistOWPuzzs in the previous paragraph, the instance $x$ is given as an auxiliary-input rather than the sampling $x\gets\cE(1^\secp)$.
Thus, it seems possible to simulate the output distribution of $\cQ$ in the worst-case using a QPT adversary that breaks auxiliary-input DistOWPuzzs.
This observation in the case where only a single query to $\cQ$ is allowed.
\fi
When adaptive queries are allowed,
the second query to $\cQ$ can be a bit string that is not necessarily QPT generatable,
because the second query can depend on the non-collapsing measurement results done in
the first query to $\cQ$. 
We resolve the issue by providing the answers of previous queries as auxiliary-input.
Our construction of auxiliary-input DistOWPuzzs $\Samp$ is as follows:
\begin{enumerate}
    \item Take $x\in\bit^*$, $k\in\N$, and a collection $(V_1,...,V_k)$ of $k$ bit strings as input.
    \item Let $C_{x,k+1}:=(U_1,M_1,...,U_T,M_T)$ be (a classical description of) a quantum circuit 
    that is the $(k+1)$-th query to $\cQ$ corresponding to the instance $x$. 
    This is generated in polynomial-time by running the base machine of $\mathbf{SampAdPDQP}$ and 
    using $(V_1,...,V_k)$ as answers of the previous $k$ queries.
    \item Choose $i\gets[T]$.
    \item Run $(U_1,M_1,...,U_i,M_i)$ to get the measurement results $(u_1,...,u_i)$,
    where $u_i$ is the measurement result of the measurement $M_i$.
    \item Measure all qubits of the resulting state to get the result $v_i=u_i\|w_i$.
   \item Output $\puzz\coloneqq(i,u_1,...,u_i)$ and $\ans\coloneqq w_i$.
\end{enumerate}
Then, we can show the result in a similar way as the average case.

\paragraph{dCRPuzzs imply average-case hardness of $\mathbf{SampPDQP}$.}
Remember that a dCRPuzz is a pair $(\Setup,\Samp)$ of QPT algorithms such that
for any QPT adversary $\cA$, the statistical distance between two distributions,
$(\pp,\cA(\pp))_{\pp\gets\Setup(1^\secp)}$
and
$(\pp,\mathsf{Col}(\pp))_{\pp\gets\Setup(1^\secp)}$,
is large. Here, $\mathsf{Col}(\pp)\to(\puzz,\ans,\ans')$ is the following distribution:
It first samples $(\puzz,\ans)\gets\Samp(\pp)$, and then samples $\ans'$ with the conditional probability
$\Pr[\ans'|\puzz]=\Pr[(\ans',\puzz)\gets\Samp(\pp)]/\Pr[\puzz\gets\Samp(\pp)]$.
Without loss of generality, we can assume that $\Samp$ runs as follows:
\begin{enumerate}
    \item 
    Apply a unitary $V_\pp$ on $|0...0\rangle$ to generate $V_\pp|0...0\rangle=\sum_{\puzz,\ans} c_{\puzz,\ans} |\puzz\rangle|\ans\rangle$.
    \item 
    Measure the first register to get $\puzz$.
    \item 
    Measure the second register to get $\ans$.
    \item 
    Output $(\puzz,\ans)$.
\end{enumerate}

It is easy to see that the dCRPuzz is broken by querying the following $C=(U_1,M_1,U_2,M_2)$ to $\cQ$:
\begin{enumerate}
\item 
$U_1=V_\pp$ 
    \item $M_1$ is the measurement on the first register.
    \item $U_2=I$.
    \item $M_2$ does not do any measurement.
\end{enumerate}

\paragraph{One-shot MACs imply dCRPuzzs.}
\if0
Without loss of generality, we can assume that $\Gen$ works as follows:
\begin{enumerate}
    \item 
    Apply a unitary $V_\pp$ on $|0...0\rangle$ to generate 
    $V_\pp|0...0\rangle=\sum_{\vk} c_{\vk} |\vk\rangle|\sigk\rangle$.
    \item 
    Measure the first register to get $\vk$.
    \item 
    Output $\vk$ and the second register.
\end{enumerate}
Assume that $\vk$ is obtained in the measurement of the 2nd step.
Then the post-measurement state is $|\vk\rangle\sigk\rangle$.
\fi
Let $m_0$ and $m_1$ be any two different classical messages.
Let $\vk$ be a verification key and $|\sigk\rangle$ be a quantum signing key.
If we apply the signing algorithm $\Sign$ coherently
on
$(|m_0\rangle+|m_1\rangle)|\sigk\rangle|0...0\rangle$
and measure the last register, we get
$(m_0,\sigma_0)$ or $(m_1,\sigma_1)$,
where, for each $b\in\bit$, $\sigma_b$ is a valid signature for $m_b$.
If we consider $\vk$ as $\puzz$ and $(m_b,\sigma_b)$ as $\ans$, non-existence of dCRPuzzs means that
we can sample $(m_b,\sigma_b)$ twice with the same $\vk$.
Then, with probability at least 1/2, we get
both
$(m_0,\sigma_0)$ and
$(m_1,\sigma_1)$.
Hence the one-shot MAC is broken.

\if0
\begin{itemize}
    \item $\Setup(1^\secp)\to(\pp,\mvk)$ is a QPT algorithm on input $1^\secp$, outputs a public parameter $\pp$ and a master verification key $\mvk$.
    \item $\Gen(\pp)\to(\vk,\sigk)$ is a QPT algorithm on input $pp$, outputs a classical verification key $\vk$ and a quantum signing key $\sigk$.
    \item $\Sign(\sigk,m)\to\sigma$ is a QPT algorithm on input $\sigk$ and a message $m$, outputs a classical signature $\sigma$.
    \item $\Ver(\pp,\mvk,\vk,\sigma,m)\to\top/\bot$ is a QPT algorithm on input $(\pp,\mvk,\vk,\sigma,m)$, outputs $\top/\bot$.
\end{itemize}
We require the following properties.
\begin{itemize}
    \item Correctness:
    For any $m$,
    \begin{align}
    \Pr\left[
    \top\gets\Ver(\pp,\mvk,\vk,\sigma,m):
    \begin{array}{rr}
    (\pp,\mvk)\gets\Setup(1^\secp)\\
    (\sigk,\vk)\gets\Gen(\pp)\\
    \sigma\gets\Sign(\sigk,m)
    \end{array}
    \right]\ge1-\negl(\secp).
    \end{align}
    \item Security:
    For any QPT adversary $\cA$,
\begin{align}
\Pr\left[
\begin{array}{c}
m_0\neq m_1\\
\wedge\\
\top\gets\Ver(\pp,\mvk,\vk,\sigma_0,m_0)\\
\wedge\\
\top\gets\Ver(\pp,\mvk,\vk,\sigma_1,m_1)\\
\end{array}
:
\begin{array}{rr}
(\pp,\mvk)\gets\Setup(1^\secp)\\
(\vk,m_0,m_1,\sigma_0,\sigma_1)\gets\cA(\pp)
\end{array}
\right]\le\negl(\secp).
\end{align}
\end{itemize}
We construct a dCRP $(\mathsf{d}.\Setup,\mathsf{d}.\Samp)$
as follows.
\begin{itemize}
    \item 
    $\mathsf{d}.\Setup(1^\secp)\to\mathsf{d}.\pp:$
    Run $(\pp,\mvk)\gets\Setup(1^\secp)$.
    Output $\mathsf{d}.\pp\coloneqq \pp$.
    \item 
    $\mathsf{d}.\Samp(\mathsf{d}.\pp)\to(\puzz,\ans):$
    Parse $\mathsf{d}.\pp=\pp$.
    Run $(\vk,\sigk)\gets\Gen(\pp)$.
    Choose $m\gets\bit^\ell$.
    Run $\sigma\gets\Sign(\sigk,m)$.
    Output $\puzz\coloneqq \vk$ and $\ans\coloneqq(m,\sigma)$.
\end{itemize}
For the sake of contradiction, assume that $(\mathsf{d}.\Setup,\mathsf{d}.\Samp)$ is not a dCRP.
Then, for any polynomial $p$, there exists a QPT algorithm $\cA$ such that
\begin{align}
    \SD (\{\pp,\vk,(m_0,\sigma_0),(m_1,\sigma_1)\}, \{\pp,\cA(\pp)\}) \le \frac{1}{p(\secp)}
\end{align}
for infinitely many $\secp$, where $\pp\gets\Setup(1^\secp)$, $(\vk,\sigk)\gets\Gen(\pp)$, $m_0\gets\bit^\ell$, $\sigma_0\gets\Sign(\sigk,m_0)$, $m_1\gets\bit^\ell$, and $\sigma_1\gets\Sign(\sigk,m_1)$.
Thus, $\cA$ breaks the security of one-shot MACs.
\fi

\paragraph{Commitments imply dCRPuzzs.}
We can also show that
two-message honest-statistically-hiding bit commitments with classical communication imply dCRPuzzs.
In the two-message commitment, the sender receives a message $r$
from the receiver, and then returns $\com$, which is the commitment.
Assume that sender commits both 0 and 1 in superposition.
This means that the sender applies the commitment algorithm coherently on $(|0\rangle+|1\rangle)|0...0\rangle$
and measures the second register to get the commitment $\com$.
Because of the statistical hiding, the state after the measurement is close to
$|0\rangle|\mathsf{decom}_0\rangle|junk_0\rangle+|1\rangle|\mathsf{decom}_1\rangle|junk_1\rangle$,
where $\mathsf{decom}_b$ is the decommitment for bit $b$.
Consider $\com$ as $\puzz$ and $(b,\mathsf{decom}_b)$ as $\ans$.
Then if dCRPuzzs do not exist, we can sample both $(0,\mathsf{decom}_0)$
and $(1,\mathsf{decom}_1)$ with the same $\com$, which breaks the binding.

\if0
is a set $(\sS_1,\sS_2,\sR_1,\sR_2)$ of algorithms such that
\begin{enumerate}
    \item $\sR_1(1^\secp)\to(r_1,\psi_R)$ is a QPT algorithm that, on input $1^\secp$, outputs a classical bit string $r_1$ and an internal quantum state $\psi_R$.
    \item $\sS_1(r_1,b)\to(s_1,\psi_S)$ is a QPT algorithm that, on input $r_1$ and a bit $b\in\bit$, outputs a bit string $s_1$ and an internal state $\psi_S$.
    \item $\sS_2(b,\psi_S)\to s_2$ is a QPT algorithm that, on input $b$ and $\psi_S$, outputs a bit string $s_2$.
    \item $\sR_2(\psi_R,s_1,s_2,b)\to \top/\bot$ is a QPT algorithm that, on input $(\psi_R,s_1,s_2,b)$, outputs $\top/\bot$.
\end{enumerate}
We require the following properties.
\begin{itemize}
    \item \textbf{Correctness.}
    For all $b\in\bit$,
    \begin{align}
    \Pr[\top\gets\mathsf{R}_2(\psi_R,s_1,s_2,b):(r_1,\psi_R)\gets\mathsf{R}_1(1^\secp),(s_1,\psi_S)\gets\mathsf{S}_1(r_1,b),s_2\gets\mathsf{S}_2(b,\psi_S)] \ge1-\negl(\secp).
    \end{align}
    \item \textbf{Honest statistical hiding.}
    For all $b\in\bit$ and for any (not-necessarily-efficient) algorithm $\cA$,
    \begin{align}
    \Pr[b\gets\cA(\psi_R,s_1):(r_1,\psi_R)\gets\cR_1(1^\secp),s_1\gets\mathsf{S}_1(r_1,b)] \le\frac{1}{2}+\negl(\secp).
    \end{align}
    \item \textbf{Computational binding.}
    For any QPT algorithm $\cA$,
    \begin{align}
    \Pr\left[
    \begin{array}{c}
    \top\gets\mathsf{R}_2(\psi_R,s_1,s_2,0)\\
    \wedge\\
    \top\gets\mathsf{R}_2(\psi_R,s_1,s_2',1)\\
    \end{array}
    :
    \begin{array}{r}
    (r_1,\psi_R)\gets\mathsf{R}_1(1^\secp)\\
    (s_1,s_2,s_2')\gets\cA(r_1)
    \end{array}
    \right]    
    \le
    \negl(\secp).
    \end{align}
\end{itemize}
We construct a dCRP $(\Setup,\Samp)$ as follows:
\begin{itemize}
    \item $\Setup(1^\secp)\to\pp$: Run $(r_1,\psi_R)\gets\sR_1(1^\secp)$. Output $\pp:=r_1$.
    \item $\Samp(\pp)\to(\puzz,\ans)$:
    \begin{enumerate}
        \item Parse $\pp=r_1$.
        \item Run $(s_1,\psi_S)\gets\sS_1(r_1,0)$.
        \item Choose $b\gets\bit$.
        \item Run $s_2\gets\sS_2(b,\psi_S)$.
        \item Output $\puzz:=s_1$ and $\ans:=(b,s_2)$.
    \end{enumerate}
\end{itemize}
For the sake of contradiction, assume that $(\Setup,\Samp)$ is not a dCRP.
Then, for any polynomial $p$, there exists a QPT algorithm $\cA$ such that
\begin{align}
    \SD (\{\pp,\cA(\pp)\}_{\pp\gets\Setup(1^\secp)} , \{\pp,\mathsf{Col}(\pp)\}_{\pp\gets\Setup(1^\secp)}) \le \frac{1}{p(\secp)}
\end{align}
for infinitely many $\secp$, where $\mathsf{Col}(\pp)$ is the following distribution:
\begin{enumerate}
    \item Take $\pp$ as input.
    \item Sample $(\puzz,\ans)\gets\Samp(\pp)$.
    \item Sample $\ans'$ with the conditional probability $\Pr[\ans'|\puzz]=\Pr[(\puzz,\ans')\gets\Samp(\pp)]/\Pr[\puzz\gets\Samp(\pp)]$.
    \item Output $(\puzz,\ans,\ans')$
\end{enumerate}
By the construction of $(\Setup,\Samp)$, we obtain
\begin{align}\label{eq:overview_3}
    \SD ( \{r_1,\cA(r_1)\} , \{r_1,s_1,(b,s_2),(b',s_2')\} ) \le \frac{1}{p(\secp)}
\end{align}
for infinitely many $\secp$, where $(r_1,\psi_R)\gets\sR_1(1^\secp)$, $(s_1,\psi_S)\gets\sS_1(r_1,0)$, $b\gets\bit$, $s_2\gets\sS_2(b,\psi_S)$, $b'\gets\bit$, and $s_2'\gets\sS_2(b',\psi_S)$.
Consider the following algorithm $\cC$
\begin{enumerate}
    \item Take $r_1$ as input.
    \item Sample $(s_1,(b,s_2),(b',s_2))\gets\mathsf{Col}(r_1)$.
    \item Output $(s_1,s_2,s_2')$.
\end{enumerate}
Then, $\cC$ breaks the binding of commitment.
We construct a QPT algorithm $\cB$ as follows:
\begin{enumerate}
    \item Take $r_1$ as input.
    \item Sample $(s_1,(b,s_2),(b',s_2))\gets\cA(r_1)$.
    \item Output $(s_1,s_2,s_2')$.
\end{enumerate}
Then, by \cref{eq:overview_3}, $\cA$ breaks the binding of commitment and therefore $(\sS_1,\sS_2,\sR_1,\sR_2)$ is not a two-message honest-statistically-hiding commitments with classical communication.
\fi
\section{Preliminaries}\label{sec:preliminaries}
\subsection{Basic Notations} 
We use standard notations of quantum computing and cryptography.
For a bit string $x$, $|x|$ is its length.
For bit strings $x$ and $y$, $x\|y$ is their concatenation.
$\mathbb{N}$ is the set of natural numbers.
We use $\secp$ as the security parameter.
$[n]$ means the set $\{1,2,...,n\}$.
For a finite set $S$, $x\gets S$ means that an element $x$ is sampled uniformly at random from the set $S$.
$\negl$ is a negligible function, and $\poly$ is a polynomial.
All polynomials appear in this paper are positive, but for simplicity we do not explicitly mention it.
PPT stands for (classical) probabilistic polynomial-time and QPT stands for quantum polynomial-time. 
For an algorithm $\cA$, $y\gets \cA(x)$ means that the algorithm $\cA$ outputs $y$ on input $x$.
If $\cA$ is a classical probabilistic or quantum algorithm that takes $x$ as input and outputs bit strings,
we often mean $\cA(x)$ by the output probability distribution of $\cA$ on input $x$.
For two probability distributions $P\coloneqq\{p_i\}_i$ and $Q\coloneqq\{q_i\}_i$, 
$\SD(Q,P)\coloneqq\frac{1}{2}\sum_i|p_i-q_i|$ is their statistical distance.

\subsection{One-Way Puzzles and Distributional One-Way Puzzles}

We first review the definition of one-way puzzles (OWPuzzs).

\begin{definition}[OWPuzzs \cite{STOC:KhuTom24}]
    A one-way puzzle (OWPuzz) is a pair $(\Samp, \Ver)$ of algorithms such that
    \begin{itemize}
    \item 
    $\Samp(1^\secp)\to (\puzz,\ans):$
    It is a QPT algorithm that, on input the security parameter $\secp$, 
    outputs a pair 
    $(\puzz,\ans)$
    of classical strings.
    \item
    $\Ver(\puzz,\ans')\to\top/\bot:$
    It is an unbounded algorithm that, on input $(\puzz,\ans')$, outputs either $\top/\bot$.
    \end{itemize}
    They satisfy the following properties.
    \begin{itemize}
    \item {\bf Correctness:} 
    \begin{align}
        \Pr[\top\gets\Ver(\puzz,\ans):
        (\puzz,\ans)\gets\Samp(1^\secp)]
        \ge 1-\negl(\secp).
    \end{align}
    \item {\bf Security:} For any QPT adversary $\cA$,
    \begin{align}
        \Pr[\top\gets\Ver(\puzz,\cA(1^\secp,\puzz)):
        (\puzz,\ans)\gets \Samp(1^\secp)] \le \negl(\secp).
    \end{align}
    \end{itemize}
\end{definition}

We also review the definition of distributional one-way puzzles (DistOWPuzzs).

\begin{definition}[DistOWPuzzs \cite{C:ChuGolGra24}]
    A uniform QPT algorithm $\Samp$ that takes the security parameter $1^\secp$ as input and outputs a pair $(\puzz,\ans)$ of bit strings
    is called an $\alpha$-distributional one-way puzzle ($\alpha$-DistOWPuzz) 
    if there exists a function $\alpha:\N\to[0,1]$ such that for any QPT adversary $\cA$, 
    and for all sufficiently large $\secp\in\N$,
    \begin{align}
        \SD\left( \{\puzz,\ans\}_{(\puzz,\ans)\gets\Samp(1^\secp)} , \{\puzz,\cA(1^\secp,\puzz)\}_{(\puzz,\ans)\gets\Samp(1^\secp)} \right) \ge \alpha(\secp).
    \end{align}
    If $\Samp$ is a $\secp^{-c}$-DistOWPuzz for some constant $c>0$, we simply say $\Samp$ is a DistOWPuzz.
\end{definition}

Clearly, if $(\Samp,\Ver)$ is a OWPuzz, then $\Samp$ is a DistOWPuzz.
Chung, Goldin, and Gray \cite{C:ChuGolGra24} showed that DistOWPuzzs imply OWPuzzs.
Combining them, the following equivalence is known.

\begin{lemma}[\cite{C:ChuGolGra24}]
    \label{lem:DistOWPuzz_OWPuzz}
    OWPuzzs exist if and only if DistOWPuzzs exist.
\end{lemma}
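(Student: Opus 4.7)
The plan is to handle the two directions separately, since the forward implication is essentially a soundness-by-closeness calculation while the reverse implication follows the Impagliazzo--Luby template adapted to quantum adversaries.

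For the forward direction, given a OWPuzz $(\Samp,\Ver)$, I would argue that $\Samp$ itself is already a DistOWPuzz. Suppose, toward a contradiction, that some QPT $\cA$ achieves
$\SD(\{\puzz,\ans\},\{\puzz,\cA(1^\secp,\puzz)\}) < 1/p(\secp)$
for some polynomial $p$ and infinitely many $\secp$. Since $\Ver(\puzz,\cdot)$ defines a $\{0,1\}$-valued predicate on the second coordinate, any statistical closeness on the joint distribution transfers to the acceptance probability. Correctness gives $\Pr[\Ver(\puzz,\ans)=\top]\ge 1-\negl(\secp)$, so the same probability for $\cA(1^\secp,\puzz)$ is at least $1-\negl(\secp)-1/p(\secp)$, which is non-negligible. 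This contradicts the OWPuzz security of $(\Samp,\Ver)$.

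For the reverse direction, given a DistOWPuzz $\Samp$ with output $(\puzz,\ans)$, the plan is to build a OWPuzz $(\Samp',\Ver')$ as follows. $\Samp'(1^\secp)$ runs $(\puzz,\ans)\gets\Samp(1^\secp)$, samples a universal hash function $h$ from a family with output length $\ell$, and outputs the puzzle $(\puzz,\ell,h,h(\ans))$ with answer $\ans$. The unbounded verifier $\Ver'$ accepts $\ans'$ exactly when $h(\ans')=h(\ans)$ and $\Pr[(\puzz,\ans')\gets\Samp(1^\secp)]>0$. Because the correct hash length $\ell^\star\approx\log$ of the support size of $\ans$ conditioned on $\puzz$ is unknown, the construction would range over all polynomially many candidate lengths in parallel and rely on an averaging argument to isolate a $\puzz$-dependent $\ell^\star$ that works for a non-negligible fraction of puzzles. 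The point is that for the right $\ell^\star$ and a universal $h$, a leftover-hash-lemma style argument shows that any QPT inverter $\cB$ which, given $(\puzz,\ell^\star,h,h(\ans))$, produces some valid preimage, must in fact be producing a sample whose distribution is statistically close to the conditional distribution of $\ans$ given $\puzz$. Plugging $\cB$ back into $\Samp$ then yields a QPT distinguisher that contradicts the DistOWPuzz security of $\Samp$.

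The main obstacle I anticipate is ensuring that the reduction is genuinely quantum-safe, i.e., that we never need to rewind or clone $\cB$'s internal state: $\cB$ must be invoked once per sample and its output distribution alone must be shown to be close to the conditional distribution on $\ans\mid\puzz$. A cleaner alternative, which I would fall back on if the direct hashing reduction is awkward, is a two-step route: first show that a DistOWPuzz yields a \emph{weak} OWPuzz (one where the adversary fails with noticeable probability) by using the hashing construction to certify inversion rather than to extract the exact preimage, and then invoke a quantum hardness-amplification lemma for (weak) OWPuzzs to boost to the standard notion. Either way, the crux is to convert a statistical gap between a real and adversarial joint distribution into a concrete verifier that detects forgeries.
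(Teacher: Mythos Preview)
The paper does not actually prove this lemma: the sentence preceding it notes that the forward implication is ``clear'' and attributes the reverse implication to \cite{C:ChuGolGra24}, and the lemma is then stated with that citation and no proof. So there is no proof in the paper to compare against beyond that one-line remark.

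Your forward direction matches the paper's remark and is correct: any QPT sampler close in statistical distance to $(\puzz,\ans)$ inherits, via the data-processing inequality applied to $\Ver$, a non-negligible acceptance probability, contradicting OWPuzz security.

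For the reverse direction, your Impagliazzo--Luby hashing outline is the right template and is, in spirit, what \cite{C:ChuGolGra24} does: hash $\ans$ down to roughly its conditional entropy given $\puzz$ so that any inverter of the hash is forced to output a near-uniform sample from the conditional distribution, handling the unknown entropy level by iterating over candidate output lengths. Your worry about quantum-safety is reasonable to flag but is not a genuine obstacle: the reduction calls the adversary once as a black box on a classical input and uses only the resulting classical output, so no rewinding or cloning is involved. The fallback you describe---first obtain a weak OWPuzz and then amplify---is in fact closer to how the argument in \cite{C:ChuGolGra24} is structured, so that route is also fine.
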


We define auxiliary-input variants of OWPuzzs and DistOWPuzzs.
\begin{definition}[Auxiliary-Input OWPuzzs]
    An auxiliary-input one-way puzzle (auxiliary-input OWPuzz) is a pair $(\Samp, \Ver)$ of algorithms such that
    \begin{itemize}
    \item 
    $\Samp(x)\to (\puzz,\ans):$
    It is a QPT algorithm that, on input a bit string $x$, 
    outputs a pair 
    $(\puzz,\ans)$
    of classical bit strings.
    \item
    $\Ver(x,\puzz,\ans')\to\top/\bot:$
    It is an unbounded algorithm that, on input $(x,\puzz,\ans')$, outputs either $\top/\bot$.
    \end{itemize}
    They satisfy the following properties.
    \begin{itemize}
    \item {\bf Correctness:} 
    \begin{align}
        \Pr[\top\gets\Ver(x,\puzz,\ans):
        (\puzz,\ans)\gets\Samp(x)]
        \ge 1-\negl(|x|).
    \end{align}
    \item {\bf Security:} For any QPT adversary $\cA$, there exists an infinite subset $I\subseteq\bit^*$ such that for all $x\in I$,
    \begin{align}
        \Pr[\top\gets\Ver(x,\puzz,\cA(x,\puzz)):
        (\puzz,\ans)\gets \Samp(x)] \le \negl(|x|).
    \end{align} 
    \end{itemize}
\end{definition}
\begin{definition}[Auxiliary-Input DistOWPuzzs]
    A uniform QPT algorithm $\Samp$ that takes an advice bit string $x\in\bit^*$ as input and outputs a pair $(\puzz,\ans)$ of bit strings
    is called an auxiliary-input $\alpha$-distributional one-way puzzle (auxiliary-input $\alpha$-DistOWPuzz) 
    if there exists a function $\alpha:\N\to[0,1]$ such that for any QPT adversary $\cA$,
    there exists an infinite subset $I\subseteq\bit^*$ such that for all $x\in I$,
    \begin{align}
        \SD\left( \{\puzz,\ans\}_{(\puzz,\ans)\gets\Samp(x)} , \{\puzz,\cA(x,\puzz)\}_{(\puzz,\ans)\gets\Samp(x)} \right) \ge \alpha(|x|).
    \end{align}
    If $\Samp$ is an auxiliary-input $\secp^{-c}$-DistOWPuzzs for some constant $c>0$, we simply say that $\Samp$ is an auxiliary-input DistOWPuzzs.
\end{definition}

The auxiliary-input version of \cref{lem:DistOWPuzz_OWPuzz} can be obtained by slightly modifying the proof of \cref{lem:DistOWPuzz_OWPuzz}.
\begin{lemma}[\cite{C:ChuGolGra24}]
    \label{lem:AI_DistOWPuzz_OWPuzz}
    Auxiliary-input OWPuzzs exist if and only if auxiliary-input DistOWPuzzs exist.
\end{lemma}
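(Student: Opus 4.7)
My plan is to adapt the proof of \cref{lem:DistOWPuzz_OWPuzz} by replacing every ``for all sufficiently large $\secp$'' with ``for all $x$ in a cofinite subset of $\bit^*$'', and every ``for infinitely many $\secp$'' with ``for $x$ in some infinite subset $I\subseteq\bit^*$''. Both implications then reduce to routine quantifier bookkeeping over the advice strings $x$, except for the non-trivial direction DistOWPuzz $\Rightarrow$ OWPuzz, for which I re-use the construction of Chung--Goldin--Gray~\cite{C:ChuGolGra24} in a black-box way.

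For the easy direction (auxiliary-input OWPuzz $\Rightarrow$ auxiliary-input DistOWPuzz), I work by contrapositive. Suppose $\Samp$ fails to be an auxiliary-input $(1/2)$-DistOWPuzz; then there is a QPT adversary $\cA$ such that for every infinite $I\subseteq\bit^*$ only finitely many $x\in I$ satisfy $\SD(\{\puzz,\ans\}_{(\puzz,\ans)\gets\Samp(x)},\{\puzz,\cA(x,\puzz)\}_{(\puzz,\ans)\gets\Samp(x)}) \geq 1/2$, i.e., all but finitely many $x$ give $\SD<1/2$. Combined with correctness and the data-processing inequality, this yields $\Pr[\top\gets\Ver(x,\puzz,\cA(x,\puzz))]\geq 1/2-\negl(|x|)$ for all but finitely many $x$, which contradicts the existence of the infinite security witness set required in the auxiliary-input OWPuzz definition.

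For the hard direction (auxiliary-input DistOWPuzz $\Rightarrow$ auxiliary-input OWPuzz), my plan is to instantiate the CGG transformation of \cite{C:ChuGolGra24} with $\Samp$ and feed the auxiliary input $x$ in place of $1^\secp$ throughout; concretely I would define $(\Samp',\Ver')(x,\cdot)$ as the CGG construction with security parameter $1^{|x|}$, threading $x$ through every call to $\Samp$ as auxiliary advice. Correctness transfers immediately. For security, I would port the CGG reduction verbatim: given a QPT $\cA'$ breaking the auxiliary-input OWPuzz, so that all but finitely many $x$ admit $\cA'$ succeeding with inverse-polynomial probability, the same reduction yields a QPT $\cB$ with $\SD<|x|^{-c}$ on that same cofinite set of $x$, contradicting the auxiliary-input DistOWPuzz hypothesis.

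The main technical point that needs checking is that the CGG reduction is strictly uniform and operates pointwise in $x$: it must convert the bound ``$\cA'$ succeeds with probability $\geq 1/p(|x|)$'' into ``the distinguishing distance for $\cB$ at advice $x$ is $<|x|^{-c}$'' for every individual $x$ in the cofinite set, rather than via an averaging over a range of security parameters. If any subroutine in CGG (such as sequential repetition, a Goldreich--Levin-type hardcore step, or a universal extrapolation argument) implicitly uses uniformity in $\secp$, I would rephrase it per-$x$; since these steps are all black-box in $\Samp$, this rearrangement should go through without substantive change, so I expect only bookkeeping difficulty rather than a conceptual obstacle.
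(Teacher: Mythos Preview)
Your proposal is correct and matches the paper's approach: the paper does not give a proof of this lemma at all, merely stating that ``the auxiliary-input version of \cref{lem:DistOWPuzz_OWPuzz} can be obtained by slightly modifying the proof of \cref{lem:DistOWPuzz_OWPuzz}'' and citing \cite{C:ChuGolGra24}. Your outline is therefore strictly more detailed than what the paper provides, and your quantifier bookkeeping (cofinite failure set $\Leftrightarrow$ no infinite hardness witness set) together with the per-$x$ black-box invocation of the CGG reduction is exactly the ``slight modification'' the paper has in mind.
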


\section{$\mathbf{SampPDQP}$}
\label{sec:sampling}
In this section, we introduce a sampling version of $\mathbf{PDQP}$, which we call $\mathbf{SampPDQP}$. 

\subsection{PDQP}
Before introducing $\mathbf{SampPDQP}$,
we review the definition of the decision class $\mathbf{PDQP}$ introduced in \cite{ITCS:ABFL16}. 
The complexity class $\mathbf{PDQP}$ is a class of decision problems
that can be solved with a polynomial-time classical deterministic algorithm that has a single query access 
to the non-collapsing measurement oracle. We first define
the non-collapsing measurement oracle.

\begin{definition}[Non-Collapsing Measurement Oracle \cite{ITCS:ABFL16}]
    A non-collapsing measurement oracle $\cQ$ is an oracle that behaves as follows:
    \begin{enumerate}
        \item Take (a classical description of) a quantum circuit $C=(U_1,M_1,...,U_T,M_T)$ and an integer $\ell>0$ as input. 
        Here each $U_i$ is a unitary operator on $\ell$ qubits and each $M_i$ is a computational-basis projective measurement on $m_i$ qubits such that $0\le m_i\le\ell$. (When $m_i=0$, this means that no measurement is done.)
        \item Let $|\psi_0\rangle:=|0^\ell\rangle$. 
        Run $C$ on input $|\psi_0\rangle$,
        and obtain $(u_1,...,u_T)$, where
        $u_t\in\bit^{m_t}$ is the outcome of the measurement $M_t$ for each $t\in[T]$.
        Let $\tau_t:=(u_1,...,u_t)$.
        For each $t\in [T]$, 
        let $|\psi_t^{\tau_t}\rangle$ be the (normalized) post-measurement state immediately after the measurement $M_t$, i.e., 
        \begin{align}
            |\psi_t^{\tau_t}\rangle := \frac{ (|u_t\rangle\langle u_t|\otimes I)U_t|\psi_{t-1}^{\tau_{t-1}}\rangle }{ \sqrt{ \langle \psi_{t-1}^{\tau_{t-1}}|U_t^\dagger (|u_t\rangle\langle u_t|\otimes I)U_t|\psi_{t-1}^{\tau_{t-1}}\rangle } }.
        \end{align}
        \item 
        For each $t\in[T]$, 
        sample 
        $v_t\in\bit^\ell$ with probability
        $|\langle v_t|\psi_t^{\tau_t}\rangle|^2$.
        \item Output $(v_1,...,v_T)$.
    \end{enumerate}
\end{definition}
\begin{remark}
Note that each non-collapsing measurement is done on {\it all} qubits including those that have been measured by the previous collapsing measurement. Therefore, 
the measurement result $v_i$ of the $i$th non-collapsing measurement is written as $v_i=u_i\|w_i$ with a bit string $w_i\in\bit^{\ell-m_i}$,
where $u_i$ is the measurement result of the collapsing measurement $M_i$.
For example, after the measurement of $M_i$, the entire state becomes $|u_i\rangle\otimes|\phi_i\rangle$ with a certain $(\ell-m_i)$-qubit state $|\phi_i\rangle$, where $u_i$ is the measurement result of $M_i$.
Then the non-collapsing measurement measures all qubits. The measurement result on the first register is always $u_i$, and
therefore the measurement result $v_i$ of the non-collapsing measurement is always written as $v_i=u_i\|w_i$, where $w_i\in\bit^{\ell-m_i}$ 
is the measurement result of the non-collapsing measurement on $|\phi_i\rangle$.
\end{remark}

With the non-collapsing measurement oracle, the class $\mathbf{PDQP}$ is defined as follows.
\begin{definition}[PDQP \cite{ITCS:ABFL16}]
    A language $L$ is in $\mathbf{PDQP}$ if there exists a polynomial-time classical deterministic Turing machine $R$ with a single query to a non-collapsing measurement oracle $\cQ$ such that
    \begin{itemize}
        \item For all $x\in L$, $\Pr[1\gets R^\cQ(x)]\ge \alpha(|x|)$,
        \item For all $x\notin L$, $\Pr[1\gets R^\cQ(x)]\le \beta(|x|)$,
    \end{itemize}
    where $\alpha,\beta$ are functions such that $\alpha(|x|)-\beta(|x|)\ge1/\poly(|x|)$.
\end{definition}

\begin{remark}
    Note that the error bound $(\alpha,\beta)$ can be amplified to $(1-\negl,\negl)$ by the repetition \cite{ITCS:ABFL16}.
\end{remark}

We also use the notion of average-case hardness.
\begin{definition}[Average-Case Hardness of $\mathbf{PDQP}$]
We say that $\mathbf{PDQP}$ is hard on average if the following is satisfied:
there exist a language $L\in\mathbf{PDQP}$, a polynomial $p$, and a QPT algorithm $\cE(1^\secp)\to\bit^\secp$
such that for any QPT algorithm $\cF$ and for all sufficiently large $\secp\in\N$,
\begin{align}
    \Pr_{x\gets\cE(1^\secp)} [\cF(x)\neq L(x)] \ge \frac{1}{p(\secp)},
\end{align} 
where
\begin{align}
    L(x) := 
    \begin{cases}
        1 & x\in L \\ 
        0 & x\notin L.
    \end{cases}
\end{align}
\end{definition}


\subsection{$\mathbf{SampPDQP}$}
Next we define $\mathbf{SampPDQP}$. $\mathbf{SampPDQP}$ is the class of sampling problems that are solved with a polynomial-time classical deterministic algorithm
that can make a single query to the non-collapsing measurement oracle.
Sampling problems are defined as follows.
\begin{definition}[Sampling Problems~\cite{Aar14,ITCS:AarBuhKre24}]
\label{def:Samplingproblems}
A (polynomially-bounded) sampling problem is a collection $\{D_x\}_{x\in\bit^*}$ 
of probability distributions, 
where $D_x$ is a distribution over $\bit^{p(|x|)}$, for some fixed polynomial $p$.
\end{definition}

The sampling complexity class, $\mathbf{SampBQP}$, is defined as follows.
\begin{definition}[{\bf SampBQP}~\cite{Aar14,ITCS:AarBuhKre24}] 
\label{def:SampBQP}
{\bf SampBQP} is the class of (polynomially-bounded) sampling problems 
$\{D_x\}_{x\in\bit^*}$ for which there exists a QPT algorithm $\cB$ such that for all $x$ and all $\epsilon>0$, 
$\SD(\cB(x,1^{\lfloor 1/\epsilon \rfloor}),D_x) \le\epsilon$, 
where $\cB(x,1^{\lfloor 1/\epsilon\rfloor})$ is the output probability distribution of 
$\cB$ on input $(x, 1^{\lfloor 1/\epsilon\rfloor})$. 
\end{definition}

We define $\mathbf{SampPDQP}$ as follows.
\begin{definition}[{\bf SampPDQP}] 
\label{def:SampBQP}
{\bf SampPDQP} is the class of (polynomially-bounded) sampling problems 
$\{D_x\}_{x\in\bit^*}$ for which there exists a classical deterministic polynomial-time 
algorithm $\cB$ that makes a single query to the non-collapsing measurement oracle $\cQ$ such that for all $x$ and all $\epsilon>0$, 
$\SD(\cB(x,1^{\lfloor 1/\epsilon \rfloor}),D_x) \le\epsilon$, 
where $\cB(x,1^{\lfloor 1/\epsilon\rfloor})$ is the output probability distribution of 
$\cB$ on input $(x, 1^{\lfloor 1/\epsilon\rfloor})$. 
\end{definition}

We also use the notion of average-case hardness.
\begin{definition}[Average-case Hardness of $\mathbf{SampPDQP}$]
\label{def:AHSampPDQP}
    We say that $\mathbf{SampPDQP}$ is hard on average if the following is satisfied:
    there exist a sampling problem $\{\cD_x\}_x\in\mathbf{SampPDQP}$, a polynomial $p$, and a QPT algorithm $\cE(1^\secp)\to\bit^\secp$ 
    such that for any QPT algorithm $\cF$ and for all sufficiently large $\secp$,
    \begin{align}
        \SD ( \{x,\cF(x)\}_{x\gets\cE(1^\secp)}, \{x,\cD_x\}_{x\gets\cE(1^\secp)} ) > \frac{1}{p(\secp)}.
    \end{align}
\end{definition}

We show the following lemma.
\begin{lemma}\label{lem:HoA_decision_samp}
    If $\mathbf{PDQP}$ is hard on average, then $\mathbf{SampPDQP}$ is hard on average.
\end{lemma}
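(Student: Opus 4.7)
The plan is to establish the contrapositive by reducing average-case hardness of $\mathbf{PDQP}$ to average-case hardness of $\mathbf{SampPDQP}$ via the natural encoding of a decision problem as a point-mass sampling problem. Fix a language $L \in \mathbf{PDQP}$, a polynomial $p$, and a QPT sampler $\cE(1^\secp)\to\bit^\secp$ witnessing the average-case hardness of $\mathbf{PDQP}$. Define the sampling problem $\{\cD_x\}_{x\in\bit^*}$ by letting $\cD_x$ be the point distribution supported on the single bit $L(x)$.

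The first step is to verify that $\{\cD_x\}_x \in \mathbf{SampPDQP}$. By the remark after the definition of $\mathbf{PDQP}$, the success probability of the underlying $\mathbf{PDQP}$ machine can be amplified to $(1-\negl,\negl)$, and this amplification can be performed using a single $\cQ$-query: one constructs a single circuit consisting of $k$ parallel independent copies of the underlying $\mathbf{PDQP}$ circuit acting on disjoint registers, lets $\cQ$ return the $k$ parallel transcripts, and classically takes a majority vote of the induced decisions. Taking $k=O(\log(1/\epsilon))$ and applying a Chernoff bound yields a classical deterministic polynomial-time algorithm $\cB(x,1^{\lfloor 1/\epsilon\rfloor})$ that makes a single $\cQ$-query and whose output is within statistical distance $\epsilon$ of $\cD_x$, matching \cref{def:SampBQP}.

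Next, I would show that this sampling problem is hard on average with respect to the same $\cE$ and the polynomial $2p$. Assume for contradiction that there is a QPT algorithm $\cF$ such that for infinitely many $\secp$,
\begin{align*}
\SD\bigl(\{x,\cF(x)\}_{x\gets\cE(1^\secp)},\{x,\cD_x\}_{x\gets\cE(1^\secp)}\bigr)\leq \frac{1}{2p(\secp)}.
\end{align*}
Without loss of generality $\cF(x)$ outputs a single bit, since $\cD_x$ is over $\bit^1$ and truncating $\cF(x)$ to its first bit can only decrease the statistical distance to $\{x,\cD_x\}$. Because $\cD_x$ is concentrated on $L(x)$, a direct calculation gives that the displayed statistical distance equals $\Pr_{x\gets\cE(1^\secp)}[\cF(x)\neq L(x)]$. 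Hence for those $\secp$ we obtain $\Pr_{x\gets\cE(1^\secp)}[\cF(x)\neq L(x)]\leq 1/(2p(\secp))<1/p(\secp)$, which contradicts the average-case hardness of $L$ (that hardness holds for all sufficiently large $\secp$, hence in particular at infinitely many $\secp$ on which our assumed $\cF$ already violates it).

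The main obstacle is the first step: we must ensure that amplification of a $\mathbf{PDQP}$ algorithm can be carried out using a single $\cQ$-query, so that the candidate $\{\cD_x\}$ genuinely lies in $\mathbf{SampPDQP}$. This is handled by the parallel-registers trick noted above, exploiting the fact that $\cQ$ accepts an arbitrary classical description of a quantum circuit we design from scratch: packing many independent sub-computations onto disjoint registers within one circuit lets $\cQ$ return all of their transcripts at once, after which the base machine performs only classical polynomial-time deterministic post-processing.
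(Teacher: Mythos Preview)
Your proof is correct but takes a slightly different route from the paper. The paper does not use the point-mass family $\{\delta_{L(x)}\}_x$; instead it takes the sampling problem $\{\cR^\cQ(x)\}_x$, i.e., the actual output distribution of the (already amplified to $(1-\negl,\negl)$) $\mathbf{PDQP}$ machine. That family is in $\mathbf{SampPDQP}$ \emph{trivially}---just run $\cR^\cQ(x)$ and ignore $\epsilon$---so the paper avoids your parallel-registers amplification argument entirely in the membership step. The price is paid in the reduction step: since $\cR^\cQ(x)$ is only $\negl$-close to $\delta_{L(x)}$, the paper needs a triangle inequality $\Pr[\cF^*(x)\neq L(x)]\le \Pr[\cR^\cQ(x)\neq L(x)]+\mathbb{E}_x[\SD(\cF^*(x),\cR^\cQ(x))]\le\negl+1/(2p)$. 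Your approach front-loads the work (amplification to show $\{\delta_{L(x)}\}\in\mathbf{SampPDQP}$) and then gets the reduction for free via the identity $\SD(\{x,\cF(x)\},\{x,\delta_{L(x)}\})=\Pr[\cF(x)\neq L(x)]$. Both are fine; the paper's version is a touch slicker because it never has to spell out the single-query parallel-repetition trick. One small imprecision in your writeup: with only an inverse-polynomial gap $\alpha-\beta\ge 1/q(|x|)$, the number of parallel copies should be $k=\Theta(q(|x|)^2\log(1/\epsilon))$ rather than $O(\log(1/\epsilon))$; this is still polynomial in $|x|+\lfloor 1/\epsilon\rfloor$, so the argument goes through.
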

\begin{proof}[Proof of \cref{lem:HoA_decision_samp}]
Assume that $\mathbf{PDQP}$ is hard on average.
Then there exists a language $L\in\mathbf{PDQP}$, a polynomial $p$, and a QPT algorithm $\cE(1^\secp)\to\bit^\secp$
such that for any QPT algorithm $\cF$ and for all sufficiently large $\secp\in\N$,
\begin{align}\label{eq:HoA_decision}
    \Pr_{x\gets\cE(1^\secp)} [\cF(x)\neq L(x)] \ge \frac{1}{p(\secp)}.
\end{align}
Because $L\in\mathbf{PDQP}$, there exists a classical deterministic polynomial-time Turing machine $\cR$ and a non-collapsing measurement oracle $\cQ$
such that for all $x\in\bit^*$,
\begin{align}\label{eq:in_PDQP_3}
    \Pr[\cR^\cQ(x) \neq L(x)] \le \negl(|x|).
\end{align}
Consider a sampling problem $\{\cR^\cQ(x)\}_{x\in\bit^*}$.
Clearly, $\{\cR^\cQ(x)\}_{x\in\bit^*}\in\mathbf{SampPDQP}$.
For the sake of contradiction, assume that $\mathbf{SampPDQP}$ is not hard on average.
Then, there exists a QPT algorithm $\cF^*$ such that for infinitely many $\secp\in\N$,
\begin{align}\label{eq:avg_in_SampPDQP}
    \SD ( \{x,\cF^*(x)\}_{x\gets\cE(1^\secp)} , \{x,\cR^\cQ(x)\}_{x\gets\cE(1^\secp)} ) \le \frac{1}{2p(\secp)}.
\end{align}
Our goal is to show that $\cF^*$ breaks \cref{eq:HoA_decision}.
By \cref{eq:in_PDQP_3,eq:avg_in_SampPDQP},
\begin{align}
    \Pr_{x\gets\cE(1^\secp)} [\cF^*(x)\neq L(x)] 
    &\le \Pr_{x\gets\cE(1^\secp)} [\cR^\cQ(x)\neq L(x)] + \underset{x\gets\cE(1^\secp)}{\mathbb{E}} \left[ \SD(\cF^*(x),\cR^\cQ(x)) \right] \\ 
    &\le \negl(\secp) + \SD( \{x,\cF^*(x)\}_{x\gets\cE(1^\secp)} , \{x,\cR^\cQ(x)\}_{x\gets\cE(1^\secp)} ) \\ 
    &\le \negl(\secp) + \frac{1}{2p(\secp)} \\ 
    &\le \frac{1}{p(\secp)},
\end{align}
holds for infinitely many $\secp\in\N$.
This contradicts \cref{eq:HoA_decision}.

\end{proof}

\section{One-Way Puzzles from Average-Case Hardness of $\mathbf{SampPDQP}$}

In this section, we construct OWPuzzs from the average-case hardness of $\mathbf{SampPDQP}$.

\begin{theorem}\label{thm:PDQP_OWPuzz}
    If $\mathbf{SampPDQP}$ is hard on average, then OWPuzzs exist.
\end{theorem}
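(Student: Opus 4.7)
The plan is to follow the technical overview and build a DistOWPuzz, which suffices by \cref{lem:DistOWPuzz_OWPuzz}. Let $\{\cD_x\}_x\in\mathbf{SampPDQP}$, the QPT sampler $\cE$, and the polynomial $p$ witness the average-case hardness of $\mathbf{SampPDQP}$, and let $\cR^\cQ$ be the classical deterministic polynomial-time base machine (with accuracy parameter $\epsilon$ to be fixed below) that on input $x$ issues a single query $C_x=(U_1,M_1,\dots,U_T,M_T)$ to $\cQ$ and returns a sample $\epsilon$-close to $\cD_x$. I would define $\Samp(1^\secp)$ as follows: sample $x\gets\cE(1^\secp)$, run $\cR$ on $x$ to extract its query $C_x$, pick $i\gets[T]$, honestly execute the prefix $(U_1,M_1,\dots,U_i,M_i)$ to obtain the collapsing transcript $(u_1,\dots,u_i)$ and the post-measurement state $|\psi_i^{\tau_i}\rangle$, measure $|\psi_i^{\tau_i}\rangle$ in the computational basis to obtain $v_i=u_i\|w_i$, and output $\puzz:=(x,i,u_1,\dots,u_i)$ and $\ans:=w_i$.

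For the contradiction, suppose $\Samp$ is not a DistOWPuzz, so for every polynomial $q$ there is a QPT $\cA_q$ with
\[
\SD\bigl(\{\puzz,w_i\}_{(\puzz,w_i)\gets\Samp(1^\secp)},\{\puzz,\cA_q(\puzz)\}_{(\puzz,w_i)\gets\Samp(1^\secp)}\bigr)\le\tfrac{1}{q(\secp)}
\]
for infinitely many $\secp$. I would use $\cA_q$ to build a QPT simulator $\cB$ of a single query to $\cQ$: on input $(x,C_x)$, run $C_x$ honestly to obtain all collapsing outcomes $(u_1,\dots,u_T)$, for each $i\in[T]$ set $w_i\gets\cA_q(x,i,u_1,\dots,u_i)$, and output $(u_1\|w_1,\dots,u_T\|w_T)$. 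Plugging $\cB$ into $\cR$ in place of its single non-collapsing query then yields a QPT algorithm $\cF$ that, on input $x$, produces a distribution intended to be close to $\cD_x$.

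The main obstacle is to control the error introduced by the simulation, and this is handled by a hybrid argument across the $T$ non-collapsing measurements. For $k\in\{0,1,\dots,T\}$, let $H_k$ be the distribution over $(u_1\|w_1,\dots,u_T\|w_T)$ in which the collapsing transcript $(u_1,\dots,u_T)$ is honest, $w_i$ for $i\le k$ is drawn from $\cA_q(x,i,u_1,\dots,u_i)$, and $w_i$ for $i>k$ is a genuine non-collapsing measurement outcome on $|\psi_i^{\tau_i}\rangle$; then $H_0$ matches the true output of $\cQ(C_x)$ and $H_T$ matches $\cB(x,C_x)$. The crucial point is that in $\cQ$ the non-collapsing measurements at different steps are independent conditional on the collapsing transcript, so $\SD(H_{k-1},H_k)$ equals the expected statistical distance between the honest $w_k$ and $\cA_q(x,k,u_1,\dots,u_k)$, the expectation being taken over exactly the joint distribution that arises in $\Samp$ at the fixed choice $i=k$. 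Averaging over $k\in[T]$ reproduces the DistOWPuzz advantage of $\cA_q$, and the triangle inequality yields $\SD(H_0,H_T)\le T/q(\secp)$.

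Finally, I would choose $\epsilon:=1/(4p(\secp))$ and $q(\secp):=4T(\secp)p(\secp)$, both polynomial in $\secp$, so that the resulting QPT algorithm $\cF$ satisfies
\[
\SD\bigl(\{x,\cF(x)\}_{x\gets\cE(1^\secp)},\{x,\cD_x\}_{x\gets\cE(1^\secp)}\bigr)\le\epsilon+T/q(\secp)<\tfrac{1}{p(\secp)}
\]
for infinitely many $\secp$, contradicting \cref{def:AHSampPDQP}. The non-trivial step is the hybrid analysis above; in particular, one must verify carefully that the conditional independence of $\cQ$'s non-collapsing measurements legitimately allows charging each hybrid hop to the DistOWPuzz bound under a uniformly random step index $i$, and that the classical deterministic, single-query nature of $\cR$ lets $\cB$ be substituted in a single shot without incurring additional hybrid losses.
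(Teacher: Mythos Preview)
Your proposal is correct and follows essentially the same approach as the paper: the same $\Samp$ construction, the same QPT simulator $\cB$ for $\cQ$, and the same hybrid argument exploiting the conditional independence of the non-collapsing outcomes given the collapsing transcript (this is exactly the content of the paper's \cref{eq:henkei}). The paper fixes the DistOWPuzz parameter to $\tfrac{1}{2p(\secp)T'(\secp)}$ upfront with $T'(\secp)=\max_{x\in\bit^\secp}T_x$ and carries the expectation over $x\gets\cE$ explicitly through the hybrid, whereas you choose $q$ at the end and are slightly informal about $T$ versus $T'$ and about where the average over $x$ enters, but these are cosmetic differences.
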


\begin{proof}[Proof of \cref{thm:PDQP_OWPuzz}]
Because of the equivalence of OWPuzzs and DistOWPuzzs (\cref{lem:DistOWPuzz_OWPuzz}), it suffices to construct DistOWPuzzs. 
Assume that $\mathbf{SampPDQP}$ is hard on average.
Then there exist a sampling problem $S=\{\cD_x\}_{x\in\bit^*}\in\mathbf{SampPDQP}$, a polynomial $p$, and a QPT algorithm $\cE(1^\secp)\to\bit^\secp$
such that for any QPT algorithm $\cF$ and for all sufficiently large $\secp\in\N$,

\begin{align}\label{eq:HoA_SampPDQP_condition}
    \SD (\{x,\cF(x)\}_{x\gets\cE(1^\secp)},\{x,\cD_x\}_{x\gets\cE(1^\secp)}) > \frac{1}{p(\secp)}.
\end{align}
By the definition of $\mathbf{SampPDQP}$, there exist a classical deterministic polynomial-time Turing machine $\cR$ and a non-collapsing measurement oracle $\cQ$
such that for all $x\in\bit^*$ and for all $\epsilon>0$, 
\begin{align}
    \label{eq:in_SampPDQP}
    \SD ( \cR^\cQ (x,1^{\lfloor 1/\epsilon\rfloor}) , \cD_x ) \le \epsilon.
\end{align}
In the following, we fix $\epsilon:=2p(|x|)$.
For each $x\in\bit^*$, let $C_x:=(U_1,M_1,...,U_{T_x},M_{T_x})$ be the quantum circuit that $\cR(x,1^{\lfloor 1/\epsilon\rfloor})$ queries to $\cQ$.
Let $C_x$ act on $\ell$ qubits.
Here, each $U_t$ be a unitary operator and each $M_t$ be a computational basis projective measurement on $m_i$ qubits such that $0\le m_i\le\ell$.\footnote{
    Note that all of $U_t$, $M_t$, $m_t$, and $\ell$ depend on $x$, but for the notational simplicity, we omit their dependence on $x$.
}
Let 
\begin{align}
    \label{eq:def_T_2}
    T'(\secp) := \max_{x\in\bit^\secp} \{T_x\}.
\end{align}
Note that $T'$ is a polynomial of $|x|$ because $\cR$ is a polynomial-time machine.

By using $\cR$ and $\cE$, we construct a $\frac{1}{2pT'}$-DistOWPuzz $\Samp$ as follows:
\begin{enumerate}
    \item Take $1^\secp$ as input.
    \item Sample $x\gets\cE(1^\secp)$.
    \item Let 
    $C_x:=(U_1,M_1,...,U_T,M_T)$ 
    be (a classical description of) a quantum circuit 
    that is queried to $\cQ$
    corresponding to the instance $x$.
    \item Sample $t\gets[T_x]$.
    \item Run $C_x=(U_1,M_1,...,U_t,M_t)$ on $|0^\ell\rangle$. Obtain $\tau_t:=(u_1,...,u_t)$ and the resulting state $|\psi_t^{\tau_t}\rangle$, where $u_i\in\bit^{m_i}$ is the measurement result of $M_i$ for each $i\in[t]$.
    \item Measure all qubits of $|\psi_t^{\tau_t}\rangle$ in the computational basis to obtain $v_t\in\bit^\ell$, where $v_t:=u_t\|w_t$ for some $w_t\in\bit^{\ell-m_t}$.
    \item Let $\puzz:=(x,t,\tau_t)$ and $\ans:=w_t$. Output $(\puzz,\ans)$. 
\end{enumerate}
For the sake of contradiction, we assume that $\Samp$ is not a $\frac{1}{2pT'}$-DistOWPuzz.
Then by the definition of DistOWPuzzs, there exist a QPT algorithm $\cA$ such that for infinitely many $\secp\in\N$,
\begin{align}\label{eq:break_OWPuzz}
    \SD\left( \{\puzz,\ans\}_{(\puzz,\ans)\gets\Samp(1^\secp)} , \{\puzz,\cA(1^\secp,\puzz)\}_{(\puzz,\ans)\gets\Samp(1^\secp)} \right) < \frac{1}{2p(\secp)T'(\secp)}.
\end{align}
Let $\Lambda\subseteq\N$ be the set of such $\secp$.
Note that 
\begin{align}
    \sum_{x\in\bit^\secp} \Pr[(x,t,\tau_t,w_t)\gets\Samp(1^\secp)] = \underset{x\gets\cE(1^\secp)}{\mathbb{E}} \left[ \frac{1}{T_x} \Pr[(\tau_t,w_t)\gets\cQ^t(C_x)] \right],
\end{align}
where $\cQ^t$ is the following algorithm:
\begin{enumerate}
    \item Take (a classical description of) a quantum circuit $C_x:=(U_1,M_1,...,U_{T_x},M_{T_x})$ acting on $\ell$ qubits as input.
    For each $i\in[T_x]$, $U_i$ is a unitary operator and $M_i$ is a computational basis projective measurement on $m_i$ qubits such that $0\le m_i\le\ell$.
    \item Sample $(v_1,...,v_{T_x})\gets\cQ(C_x)$, where $v_i=u_i\|w_i$ and $u_i$ is a measurement result of $M_i$ for each $i\in[T_x]$.
    \item Let $\tau_t=(u_1,...,u_t)$. Output $(\tau_t,w_t)$.
\end{enumerate} 
By \cref{eq:def_T_2,eq:break_OWPuzz}, for all $\secp\in\Lambda$,
\begin{align}
    \frac{1}{2p(\secp)T'(\secp)} 
    &> \SD\left( \{\puzz,\ans\}_{(\puzz,\ans)\gets\Samp(1^\secp)} , \{\puzz,\cA(1^\secp,\puzz)\}_{(\puzz,\ans)\gets\Samp(1^\secp)} \right) \\ 
    &= \underset{x\gets\cE(1^\secp)}{\mathbb{E}} \left[ \frac{1}{T_x} \sum_{t\in[T_x]} \SD \left( \{\tau_t,w_t\}_{(\tau_t,w_t)\gets\cQ^t(C_x)} , \{\tau_t,\cA(1^\secp,x,t,\tau_t)\}_{(\tau_t,w_t)\gets\cQ^t(C_x)} \right) \right] \\ 
    &\ge \frac{1}{T'(\secp)} \underset{x\gets\cE(1^\secp)}{\mathbb{E}} \left[ \sum_{t\in[T_x]} \SD \left( \{\tau_t,w_t\}_{(\tau_t,w_t)\gets\cQ^t(C_x)} , \{\tau_t,\cA(1^\secp,x,t,\tau_t)\}_{(\tau_t,w_t)\gets\cQ^t(C_x)} \right) \right].
\end{align}
Thus for all $\secp\in\Lambda$, 
\begin{align}
    \label{eq:avg_SD} 
    \underset{x\gets\cE(1^\secp)}{\mathbb{E}} \left[ \sum_{t\in[T_x]} \SD \left( \{\tau_t,w_t\}_{(\tau_t,w_t)\gets\cQ^t(C_x)} , \{\tau_t,\cA(1^\secp,x,t,\tau_t)\}_{(\tau_t,w_t)\gets\cQ^t(C_x)} \right) \right] < \frac{1}{2p(\secp)}.
\end{align}

Our goal is to construct a QPT algorithm $\cF$ that breaks \cref{eq:HoA_SampPDQP_condition}.
We define $\cF$ as follows:
\begin{enumerate}
    \item Take $x\in\bit^\secp$ as input.
    \item Run $\cR(x,1^{\lfloor 1/\epsilon\rfloor})$. Here instead of querying to $\cQ$, run $(v_1,...,v_{T_x})\gets\cQ^*(x,C_x)$ and use $(v_1,...,v_{T_x})$ as the outcome of $\cQ$, where $\cQ^*$ is the following QPT algorithm:
    \begin{itemize}
        \item Take $x$ and (a classical description of) a quantum circuit $C_x=(U_1,M_1,...,U_{T_x},M_{T_x})$ that acts on $\ell$ qubits as input.
        For each $t\in[T_x]$, $U_i$ is a unitary operator and $M_i$ is a computational basis projective measurement on $m_i$ qubits such that $0\le m_i\le\ell$.
        \item Run $(U_1,M_1,...,U_{T_x},M_{T_x})$ on $|0^\ell\rangle$. Obtain $(u_1,...,u_{T_x})$, where $u_i\in\bit^{m_i}$ is the measurement result of $M_i$ for each $i\in[T_x]$.
        \item For each $i\in[T_x]$, run $w_i\gets\cA(1^\secp,x,i,\tau_i)$, where $\tau_i:=(u_1,...,u_i)$. Let $v_i:=u_i\|w_i$.
        \item Output $(v_1,...,v_{T_x})$.
    \end{itemize}
\end{enumerate}

Later we will show that for all $\secp\in\Lambda$,
\begin{align}\label{eq:avg_SD_goal}
    \underset{x\gets\cE(1^\secp)}{\mathbb{E}} \left[ \SD (\cQ^*(x,C_x),\cQ(C_x)) \right] \le \frac{1}{2p(\secp)}.
\end{align}
Then by \cref{eq:in_SampPDQP,eq:avg_SD_goal},
\begin{align}
    &\SD (\{x,\cF(x)\}_{x\gets\cE(1^\secp)},\{x,\cD_x\}_{x\gets\cE(1^\secp)}) \\ 
    &\le \SD (\{x,\cR^{\cQ^*}(x,1^{\lfloor 1/\epsilon\rfloor})\}_{x\gets\cE(1^\secp)},\{x,\cR^\cQ(x,1^{\lfloor 1/\epsilon \rfloor})\}_{x\gets\cE(1^\secp)}) \\ 
    &\qquad + \SD (\{x,\cR^{\cQ}(x,1^{\lfloor 1/\epsilon\rfloor})\}_{x\gets\cE(1^\secp)},\{x,\cD_x\}_{x\gets\cE(1^\secp)}) \\ 
    &= \SD (\{x,\cR^{\cQ^*}(x,1^{\lfloor 1/\epsilon\rfloor})\}_{x\gets\cE(1^\secp)},\{x,\cR^\cQ(x,1^{\lfloor 1/\epsilon \rfloor})\}_{x\gets\cE(1^\secp)}) \\ 
    &\qquad + \underset{x\gets\cE(1^\secp)}{\mathbb{E}} \left[ \SD(\cR^\cQ(x,1^{\lfloor 1/\epsilon \rfloor}),\cD_x) \right] \\ 
    &\le \SD (\{x,\cR^{\cQ^*}(x,1^{\lfloor 1/\epsilon\rfloor})\}_{x\gets\cE(1^\secp)},\{x,\cR^\cQ(x,1^{\lfloor 1/\epsilon \rfloor})\}_{x\gets\cE(1^\secp)}) + \epsilon \\ 
    &= \underset{x\gets\cE(1^\secp)}{\mathbb{E}} \SD (\cR^{\cQ^*}(x,1^{\lfloor 1/\epsilon \rfloor}) , \cR^\cQ(x,1^{\lfloor 1/\epsilon \rfloor})) + \epsilon \\ 
    &\le \underset{x\gets\cE(1^\secp)}{\mathbb{E}} \left[ \SD (\cQ^*(x,C_x),\cQ(C_x)) \right] + \epsilon \\ 
    &\le \frac{1}{2p(\secp)} + \epsilon \\ 
    &= \frac{1}{p(\secp)}
\end{align}
for all $\secp\in\Lambda$. In the last equality, we use $\epsilon=\frac{1}{2p(\secp)}$. 
This contradicts \cref{eq:HoA_SampPDQP_condition}.

In the remaining part, we show \cref{eq:avg_SD_goal}.
To accomplish this, we define an unbounded-time algorithm $\cB$ as follows:
\begin{itemize}
    \item $\cB(k,x,C_x)\to(v_1,...,v_t,v'_{t+1},...,v'_{T_x})$:
    \begin{enumerate}
        \item Take an integer $k\in\{0,...,T_x\}$, a bit string $x\in\bit^*$, and (a classical description of) a quantum circuit $C_x=(U_1,M_1,...,U_{T_x},M_{T_x})$ that acts on $\ell$ qubits as input.
        \item Run $C_x$ on input $|0^\ell\rangle$ and let $u_t\in\bit^{m_t}$ be the outcome of the measurement $M_t$ for each $t\in[T_x]$.
        For each $t\in[T_x]$, 
        let $\tau_t:=(u_1,...,u_t)$ and let $|\psi_t^{\tau_t}\rangle$ be the (normalized) post-measurement state after the measurement $M_t$, i.e., 
        \begin{align}
            |\psi_t^{\tau_t}\rangle := \frac{ (|u_t\rangle\langle u_t|\otimes I)U_t|\psi_{t-1}^{\tau_{t-1}}\rangle }{ \sqrt{ \langle \psi_{t-1}^{\tau_{t-1}}|U_t^\dagger (|u_t\rangle\langle u_t|\otimes I)U_t|\psi_{t-1}^{\tau_{t-1}}\rangle } }.
        \end{align}
        \item For $1\le i\le k$, sample $v_i\in\bit^\ell$ with probability $|\langle v_i|\psi_i^{\tau_i}\rangle|^2$.
        \item For $k+1 \le i\le T_x$, run $w'_i\gets\cA(1^\secp,x,i,\tau_i)$ and let $v'_i:=u_i\|w'_i$.
        \item Output $(v_1,...,v_k,v'_{k+1},...,v'_{T_x})$.
    \end{enumerate}
\end{itemize}
Then, the distribution $\cB(T_x,x,C_x)$ is equivalent to the distribution $\cQ(C_x)$ and the distribution $\cB(0,x,C_x)$ is equivalent to the distribution $\cQ^*(x,C_x)$.
By the triangle inequality, 
\begin{align}
    \underset{x\gets\cE(1^\secp)}{\mathbb{E}} \left[ \SD (\cQ^*(x,C_x),\cQ(C_x)) \right] 
    &= \underset{x\gets\cE(1^\secp)}{\mathbb{E}} \left[ \SD (\cB(0,x,C_x),\cB(T_x,x,C_x)) \right] \\ 
    &\le \underset{x\gets\cE(1^\secp)}{\mathbb{E}} \left[ \sum_{t\in[T_x]} \SD (\cB(t-1,x,C_x),\cB(t,x,C_x)) \right].
\end{align}
Thus, it suffices to show that
\begin{align}
    \underset{x\gets\cE(1^\secp)}{\mathbb{E}} \left[ \sum_{t\in[T_x]} \SD (\cB(t-1,x,C_x),\cB(t,x,C_x)) \right] \le \frac{1}{2p(\secp)}
\end{align}
for all $\secp\in\Lambda$.


To show this, we define two (unbounded) algorithms as follows:
\begin{itemize}
    \item $\cQ_1(C_x,u_1,...,u_t)\to(u_{t+1},...,u_{T_x})$:
    \begin{enumerate}
        \item Take (a classical description of) a quantum circuit $C_x=(U_1,M_1,...,U_{T_x},M_{T_x})$ and bit strings $(u_1,...,u_t)$ such that $t\in[T_x]$ and $u_i\in\bit^{m_i}$ for each $i\in[T_x]$ as input.
        Here for each $i\in[T_x]$, $U_i$ is a unitary operator and $M_i$ is a computational basis projective measurement on $m_i$ qubits.
        \item Sample $(v'_1,...,v'_{T_x})\gets\cQ(C_x)$, where $v'_i=u'_i\|w'_i$ and $u'_i\in\bit^{m_i}$.
        \item If $u'_i=u_i$ for all $i\in[t]$, then output $(u_{t+1},...,u_{T_x}):=(u'_{t+1},...,u'_{T_x})$.
        Otherwise, go back to step 2.
    \end{enumerate}
    \item $\cQ_2(C_x,u_1,...,u_t)\to(w_1,...,w_t)$:
    \begin{enumerate}
        \item Take (a classical description of) a quantum circuit $C_x=(U_1,M_1,...,U_{T_x},M_{T_x})$ and bit strings $(u_1,...,u_t)$ such that $t\in[T_x]$ and $u_i\in\bit^{m_i}$ for each $i\in[T_x]$ as input.
        Here for each $i\in[T_x]$, $U_i$ is a unitary operator and $M_i$ is a computational basis projective measurement on $m_i$ qubits.
        \item Sample $(v'_1,...,v'_{T_x})\gets\cQ(C_x)$, where $v'_i=u'_i\|w'_i$ and $u'_i\in\bit^{m_i}$.
        \item If $u'_i=u_i$ for all $i\in[t]$, then output $(w_1,...,w_t):=(w'_1,...,w'_t)$.
        Otherwise, go back to step 2.
    \end{enumerate}
\end{itemize}

Then,
\begin{align}
    &\SD (\cB(t-1,x,C_x),\cB(t,x,C_x)) \\
    &= \SD ( \{v_1,...,v_{t-1},v'_t,...,v'_{T_x}\} , \{v_1,...,v_t,v'_{t+1},...,v'_{T_x}\} ) \\ 
    &= \sum_{w'_{t+1},...,w'_{T_x}} \prod_{i\in\{t+1,...,T_x\}} \Pr[w'_i\gets\cA(1^\secp,x,i,u_1,...,u_i)] \\ 
    &\qquad \times \SD ( \{v_1,...,v_{t-1},v'_t,u_{t+1}...,u_{T_x}\} , \{v_1,...,v_t,u_{t+1},...,u_{T_x}\} ) \\
    &= \SD ( \{v_1,...,v_{t-1},v'_t,u_{t+1}...,u_{T_x}\} , \{v_1,...,v_t,u_{t+1},...,u_{T_x}\} ) \\ 
    &= \sum_{u_{t+1},...,u_{T_x}} \Pr[(u_{t+1},...,u_{T_x})\gets\cQ_1(C_x,u_1,...,u_t)] \SD ( \{v_1,...,v_{t-1},v'_t\} , \{v_1,...,v_{t-1},v_t\}) \\
    &= \SD ( \{v_1,...,v_{t-1},v'_t\} , \{v_1,...,v_{t-1},v_t\}) \\ 
    &= \sum_{w_1,...,w_{t-1}} \Pr[(w_1,...,w_{t-1})\gets\cQ_2(C_x,u_1,...,u_{t-1})] \SD ( \{u_1,...,u_t,w'_t\} , \{u_1,...,u_t,w_t\} ) \\ 
    &= \SD ( \{u_1,...,u_t,w'_t\} , \{u_1,...,u_t,w_t\} ), \label{eq:henkei}
\end{align} 
where $v_i=u_i\|w_i$, $v'_i=u_i\|w'_i$, $(u_1\|w_1,...,u_{T_x}\|w_{T_x})\gets\cQ(C_x)$,  
and $w'_i\gets\cA(1^\secp,x,i,u_1,...,u_i)$.
By \cref{eq:avg_SD},
\begin{align}
    &\underset{x\gets\cE(1^\secp)}{\mathbb{E}} \left[ \sum_{t\in[T_x]} \SD (\cB(t-1,x,C_x),\cB(t,x,C_x)) \right] \\ 
    &= \underset{x\gets\cE(1^\secp)}{\mathbb{E}} \left[ \sum_{t\in[T_x]} \SD \left( \{\tau_t,w_t\}_{(\tau_t,w_t)\gets\cQ^t(C_x)} , \{\tau_t,\cA(1^\secp,x,t,\tau_t)\}_{(\tau_t,w_t)\gets\cQ^t(C_x)} \right) \right] \\ 
    &\le \frac{1}{2p(\secp)}
\end{align}
for all $\secp\in\Lambda$.

\end{proof}

\section{Adaptive PDQP and Auxiliary-Input One-Way Puzzles}
In this section we consider the adaptive queries to non-collapsing measurement oracle.
First, we define the class of decision problems that are solved with a polynomial-time deterministic algorithm that can make adaptive queries to the non-collapsing measurement oracle.
\begin{definition}[AdPDQP]
    \label{def:AdPDQP}
    A language $L$ is in $\mathbf{AdPDQP}$ if there exists a polynomial-time classical deterministic Turing machine $R$ that makes adaptive queries to a non-collapsing measurement oracle $\cQ$ such that
    \begin{itemize}
        \item For all $x\in L$, $\Pr[1\gets R^{\cQ}(x)]\ge \alpha(|x|)$,
        \item For all $x\notin L$, $\Pr[1\gets R^{\cQ}(x)]\le \beta(|x|)$,
    \end{itemize}
    where $\alpha,\beta$ are functions such that $\alpha(|x|)-\beta(|x|)\ge1/\poly(|x|)$.
\end{definition}

\begin{remark}
    As in the case of (non-adaptive) $\mathbf{PDQP}$, the error bound $(\alpha,\beta)$ in \cref{def:AdPDQP} can be amplified to $(1-\negl,\negl)$ by the repetition.
\end{remark}

Moreover, we define the class of sampling problems that are solved with a polynomial-time deterministic algorithm that can make adaptive queries to the non-collapsing measurement oracle.
\begin{definition}[SampAdPDQP]
    $\mathbf{SampAdPDQP}$ is the class of (polynomially-bounded) sampling problems $S=\{\cD_x\}_{x\in\bit^*}$ 
    for which there exists a classical deterministic polynomial-time algorithm $\cB$ that makes adaptive queries to a non-collapsing measurement oracle $\cQ$ 
    such that for all $x$ and for all $\epsilon>0$,
    $\SD(\cB^{\cQ}(x,1^{\lfloor 1/\epsilon\rfloor}),\cD_x)\le\epsilon$, 
    where $\cB^{\cQ}(x,1^{\lfloor 1/\epsilon\rfloor})$ is the output probability distribution of $\cB^{\cQ}$ on input $(x,1^{\lfloor 1/\epsilon\rfloor})$.
\end{definition}

We obtain the following lemma.
\begin{lemma}
    If $\mathbf{AdPDQP}\nsubseteq\mathbf{BQP}$, then $\mathbf{SampAdPDQP}\nsubseteq\mathbf{SampBQP}$.
\end{lemma}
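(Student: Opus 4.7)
The plan is to mirror the argument of \cref{lem:HoA_decision_samp} in the worst-case, adaptive setting. Starting from the hypothesis $\mathbf{AdPDQP}\nsubseteq\mathbf{BQP}$, fix any $L\in\mathbf{AdPDQP}\setminus\mathbf{BQP}$; by the amplification remark following \cref{def:AdPDQP}, fix a classical deterministic polynomial-time Turing machine $R$ with adaptive access to a non-collapsing measurement oracle $\cQ$ such that
\begin{align}
\Pr[R^{\cQ}(x) = L(x)] \ge 1 - \negl(|x|) \quad \text{for every } x\in\bit^*.
\end{align}

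Next, I would define the sampling problem $\{\cD_x\}_{x\in\bit^*}$ by letting $\cD_x$ be the (single-bit) output distribution of $R^{\cQ}(x)$. This family is polynomially bounded in the sense of \cref{def:Samplingproblems}, and the trivial sampler $\cB^{\cQ}(x,1^{\lfloor 1/\epsilon\rfloor}) := R^{\cQ}(x)$ produces $\cD_x$ exactly (independent of $\epsilon$), so $\{\cD_x\}_x\in\mathbf{SampAdPDQP}$. Moreover, by the amplified correctness of $R$ we have $\cD_x(L(x))\ge 1 - \negl(|x|)$.

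For the contradiction, suppose $\{\cD_x\}_x\in\mathbf{SampBQP}$, witnessed by a QPT algorithm $\cB$. Fixing $\epsilon := 1/4$ and running $\cB(x,1^4)$ yields a distribution within statistical distance $1/4$ of $\cD_x$, so for all sufficiently large $x$,
\begin{align}
\Pr[\cB(x,1^4) = L(x)] \;\ge\; \cD_x(L(x)) - 1/4 \;\ge\; 3/4 - \negl(|x|) \;>\; 2/3.
\end{align}
For the finitely many remaining inputs we hardwire $L(x)$ into the algorithm. Standard $\mathbf{BQP}$ amplification then places $L\in\mathbf{BQP}$, contradicting the choice of $L$ and completing the proof.

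The only delicate point is a bookkeeping one, namely that the sampling problem must be polynomially bounded as required by \cref{def:Samplingproblems}; this is automatic here because $R^{\cQ}(x)$ outputs only a decision bit. Otherwise the argument is strictly simpler than that of \cref{lem:HoA_decision_samp}: we are converting a worst-case decision-class separation into a worst-case sampling-class separation, so we do not need to simulate the non-collapsing measurement oracle using a distributional inverter, and no quantitative loss between average-case parameters has to be tracked. Hence no real obstacle is anticipated.
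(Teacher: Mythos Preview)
Your proposal is correct, and it is precisely the intended argument: the paper states this lemma without proof, evidently because it follows by the same reasoning as \cref{lem:HoA_decision_samp} transposed to the worst-case adaptive setting, which is exactly what you have written out. There is nothing to add.
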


Next, we show that the worst-case hardness of $\mathbf{SampAdPDQP}$ is equivalent to that of $\mathbf{SampPDQP}$.

\begin{lemma}
\label{lem:adapt_equivalence}
    $\mathbf{SampAdPDQP}\nsubseteq\mathbf{SampBQP}$ is and only if $\mathbf{SampPDQP}\nsubseteq\mathbf{SampBQP}$.
\end{lemma}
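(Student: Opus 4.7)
The ``if'' direction is immediate since $\mathbf{SampPDQP}\subseteq\mathbf{SampAdPDQP}$: a single query to $\cQ$ is the special case $k=1$ of adaptive querying, so any witnessing problem for $\mathbf{SampPDQP}\not\subseteq\mathbf{SampBQP}$ also witnesses $\mathbf{SampAdPDQP}\not\subseteq\mathbf{SampBQP}$. The main work is the ``only if'' direction, which I prove by contrapositive: assuming $\mathbf{SampPDQP}\subseteq\mathbf{SampBQP}$, I will construct, for every problem in $\mathbf{SampAdPDQP}$, a QPT sampler.

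The key preliminary observation is that the sampling problem $\{\cQ(C)\}_{C\in\bit^*}$, indexed by (classical descriptions of) quantum circuits $C$, lies in $\mathbf{SampPDQP}$: the base machine simply forwards $C$ as its single query to $\cQ$. Hence by the assumption $\mathbf{SampPDQP}\subseteq\mathbf{SampBQP}$, there is a QPT algorithm $\tilde{\cB}(C,1^{\lfloor1/\epsilon\rfloor})$ such that $\SD(\tilde{\cB}(C,1^{\lfloor1/\epsilon\rfloor}),\cQ(C))\le\epsilon$ for every $C$ and every $\epsilon>0$.

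Now fix any $\{\cD_x\}_{x\in\bit^*}\in\mathbf{SampAdPDQP}$, solved by a polynomial-time deterministic Turing machine $\cR$ making at most $k=k(|x|,1/\epsilon)$ adaptive queries to $\cQ$. To approximate $\cD_x$ within statistical distance $\delta$, I define a QPT algorithm $\cB'$ that runs $\cR(x,1^{\lfloor 2/\delta\rfloor})$ step by step, and whenever $\cR$ would query $\cQ$ on circuit $C_i$ (which is a deterministic function of $x$ and the previously returned answers, hence polynomial-time computable once those answers are in hand), it instead calls $\tilde{\cB}(C_i,1^{\lceil 2k/\delta\rceil})$ and passes the sampled result back to $\cR$.

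The correctness analysis is a standard hybrid argument. Define $H_0,\ldots,H_k$, where $H_i$ is the distribution obtained by answering the first $i$ queries using $\tilde{\cB}(\,\cdot\,,1^{\lceil 2k/\delta\rceil})$ and the remaining queries using the true oracle $\cQ$. Then $H_0=\cR^{\cQ}(x,1^{\lfloor 2/\delta\rfloor})$ (which is $\delta/2$-close to $\cD_x$ by the $\mathbf{SampAdPDQP}$ definition) and $H_k$ is the output distribution of $\cB'(x)$. For each $i\in[k]$, the joint distribution of $(C_1,a_1,\ldots,C_{i-1},a_{i-1},C_i)$ is identical in $H_{i-1}$ and $H_i$ because the query $C_i$ is a deterministic function of $x$ and the preceding answers, all of which are produced by the same procedure in both hybrids; conditioned on $C_i$, the $i$-th answer is drawn from $\cQ(C_i)$ in $H_{i-1}$ and from $\tilde{\cB}(C_i,1^{\lceil 2k/\delta\rceil})$ in $H_i$, which are $\delta/(2k)$-close, and the remaining queries use the same oracle in both hybrids. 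Thus $\SD(H_{i-1},H_i)\le\delta/(2k)$, giving $\SD(H_0,H_k)\le\delta/2$ by the triangle inequality, and $\SD(\cB'(x),\cD_x)\le\delta$ overall.

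The only nontrivial point is the coupling in the hybrid step: because the query $C_i$ depends on random prior answers, one must argue that replacing an oracle call by an $\epsilon$-close simulator on the \emph{same} (randomly determined) input string preserves $\epsilon$-closeness of the joint distribution. This is routine once one notes that $\tilde{\cB}$ and $\cQ$ are both invoked on the identically distributed string $C_i$ inside each hybrid. The remaining counts are routine: $k$ is polynomial in $|x|$ and $1/\delta$, so $\cB'$ runs in quantum polynomial time in $(|x|,1/\delta)$ and belongs to $\mathbf{SampBQP}$. This establishes $\mathbf{SampAdPDQP}\subseteq\mathbf{SampBQP}$ and completes the proof.
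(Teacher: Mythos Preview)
Your proof is correct and follows essentially the same approach as the paper: both identify that $\{\cQ(C)\}_C\in\mathbf{SampPDQP}\subseteq\mathbf{SampBQP}$ yields a QPT simulator for the oracle, and both replace the adaptive oracle calls one at a time via a hybrid argument with per-step error $O(\delta/k)$. The parameter bookkeeping differs only cosmetically.
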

\begin{proof}[Proof of \cref{lem:adapt_equivalence}]
    The ``if'' direction holds immediately because $\mathbf{SampPDQP}\subseteq\mathbf{SampAdPDQP}$.
    
    We show the ``only if'' direction.
    Assume that $\mathbf{SampAdPDQP}\nsubseteq\mathbf{SampBQP}$.
    Then, there exists a sampling problem $\{\cD_x\}_{x\in\bit^*}$ that is contained in $\mathbf{SampAdPDQP}$ but not in $\mathbf{SampBQP}$.
    By the definition of $\mathbf{SampAdPDQP}$, there exists a classical deterministic polynomial-time algorithm $\cB$ that makes adaptive queries to $\cQ$ such that for all $x$ and all $\epsilon>0$, 
\begin{align}
    \SD (\cB^\cQ(x,1^{\lfloor 1/\epsilon\rfloor}), \cD_x)\le\epsilon.
    \label{eq:sd_D}
\end{align}
Let $N=N(x,\epsilon)$ be the number of queries that $\cB(x,1^{\lfloor 1/\epsilon\rfloor})$ makes.
    
For the sake of contradiction, assume that $\mathbf{SampPDQP}\subseteq\mathbf{SampBQP}$.
Our goal is to show that $\{\cD_x\}_{x\in\bit^*}\in\mathbf{SampBQP}$.
Let us consider the sampling problem $\{\cQ_C\}_{C\in\bit^*}$, where $\cQ_C$ is the output distribution of the following procedure: If $C$ is a classical description of some quantum circuit that has the form $(U_1,M_1,...,U_T,M_T)$, where $U_i$ is $\ell$-qubit unitary and $M_i$ is the $m_i$-qubit computational basis measurement, then query the non-collapsing measurement oracle $\cQ$ on $C$. Otherwise, output $\bot$.
Hence, we have $\{\cQ_C\}_{C\in\bit^*}\in\mathbf{SampPDQP}$ and therefore $\{\cQ_C\}_{C\in\bit^*}\in\mathbf{SampBQP}$.
This means that there exists a QPT algorithm $\cA$ such that for all $C\in\bit^*$ and for all $\epsilon>0$,
\begin{align}
    \SD (\cA(C,1^{\lfloor 1/\epsilon \rfloor}), \cQ_C) \le \epsilon. 
    \label{eq:sd_Q}
\end{align}
For each $i\in[N]$, we define the distribution $\cB_i(x,1^{\lfloor 1/\epsilon\rfloor})$ as the output distribution of the following procedure: Given $(x,1^{\floor{1/\epsilon}})$ as input, run $\cB(x,1^{\floor{1/\epsilon}})$, where the first $i$ queries are made to $\cA(\cdot,1^{\lfloor N/\epsilon\rfloor})$ and the remaining $(N-i)$ queries are made to $\cQ$.
Let $C$ be the classical description of the quantum circuit that $\cB(x,1^{\lfloor 1/\epsilon \rfloor})$ queries on its first query.
Then, for all $x$ and for all $\epsilon>0$,
\begin{align}
    \SD( \cB^\cQ(x,1^{\lfloor 1/\epsilon\rfloor}) , \cB_1(x,1^{\lfloor 1/\epsilon \rfloor}) ) 
    \le \SD(\cQ_C, \cA(C,1^{\lfloor N/\epsilon \rfloor} )
    \le \frac{\epsilon}{N}, \label{eq:sd_0}
\end{align}
where the first inequality follows from the data processing inequality, and the second from \cref{eq:sd_Q}.
Similarly, for all $i\in[N-1]$, we have 
\begin{align}
    \SD( \cB_i(x,1^{\lfloor 1/\epsilon\rfloor}) , \cB_{i+1}(x,1^{\lfloor 1/\epsilon \rfloor}) ) 
    &\le \SD( \{\cQ_{C_i}\}_{C_i\gets\cB^{\cA}(x,1^{\lfloor 1/\epsilon \rfloor})} , \{\cA(C_i,1^{\lfloor N/\epsilon\rfloor})\}_{C_i\gets\cB^\cA(x,1^{\lfloor 1/\epsilon \rfloor})} ) \\ 
    &\le \mathbb{E}_{C_i\gets\cB^\cA(x,1^{\lfloor 1/\epsilon \rfloor})} \SD( \cQ_{C_i}, \cA(C_i,1^{\lfloor N/\epsilon\rfloor})) \\ 
    &\le \frac{\epsilon}{N}, \label{eq:sd_i}
\end{align}
where $C_i$ denotes the classical description of the quantum circuit that $\cB^{\cA(\cdot,1^{\lfloor N/\epsilon \rfloor})}(x,1^{\lfloor 1/\epsilon \rfloor})$ queries on the $i$th query.
By combining \cref{eq:sd_0,eq:sd_i} and using the triangle inequality, for all $x$ and for all $\epsilon>0$,
\begin{align}
    \SD( \cB^\cQ(x,1^{\lfloor 1/\epsilon\rfloor}),\cB_T(x,1^{\lfloor 1/\epsilon\rfloor}) ) \le \epsilon. \label{eq:sd_replace}
\end{align}
Note that $\cB_T(x,1^{\lfloor 1/\epsilon \rfloor})$ corresponds to the output distribution of QPT algorithm $\cB^{\cA(\cdot,1^{\lfloor N/\epsilon\rfloor})}(x,1^{\lfloor 1/\epsilon \rfloor})$.
We consider the QPT algorithm $\cC$ that on input $(x,1^{\lfloor 1/\epsilon \rfloor})$, runs $\cB^{\cA(\cdot,1^{\lfloor 2N/\epsilon\rfloor})}(x,1^{\lfloor 2/\epsilon \rfloor})$.
Then by \cref{eq:sd_D,eq:sd_replace}, for all $x$ and for all $\epsilon>0$, 
\begin{align}
    \SD(\cC(x,1^{\lfloor 1/\epsilon \rfloor}), \cD_x)
    &\le \SD(\cC(x,1^{\lfloor 1/\epsilon \rfloor}), \cB^\cQ(x,1^{\lfloor 2/\epsilon \rfloor})) + \SD(\cD_x, \cB^\cQ(x,1^{\lfloor 2/\epsilon \rfloor})) \\ 
    &\le \frac{\epsilon}{2} + \frac{\epsilon}{2} = \epsilon.
\end{align}
Therefore, we have $\{\cD_x\}_{x\in\bit^*}\in\mathbf{SampBQP}$ and complete the proof.
\end{proof}

Next, we show that the worst-case hardness of $\mathbf{SampPDQP}$ implies auxiliary-input OWPuzzs.
\begin{theorem}
\label{thm:AI-OWPuzz}
    If $\mathbf{SampPDQP}\nsubseteq\mathbf{BQP}$, then auxiliary-input OWPuzzs exist.
\end{theorem}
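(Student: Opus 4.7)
The plan is to adapt the argument of \cref{thm:PDQP_OWPuzz} to the worst-case, auxiliary-input setting. By the auxiliary-input equivalence of OWPuzzs and DistOWPuzzs (\cref{lem:AI_DistOWPuzz_OWPuzz}), it suffices to construct an auxiliary-input DistOWPuzz. Fix a sampling problem $\{\cD_x\}_{x\in\bit^*}\in\mathbf{SampPDQP}\setminus\mathbf{SampBQP}$ witnessed by a classical deterministic polynomial-time machine $\cR$ with a single non-collapsing query $C_x=(U_1,M_1,\ldots,U_{T_x},M_{T_x})$, so that $\SD(\cR^\cQ(x,1^{\lfloor 1/\epsilon\rfloor}),\cD_x)\le\epsilon$ for every $x$ and every $\epsilon>0$.

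The construction of $\Samp$ mirrors the one in the proof of \cref{thm:PDQP_OWPuzz}, except that the sampled instance $x\gets\cE(1^\secp)$ is replaced by the auxiliary input $x$. Concretely, on input $x$, $\Samp$ (i) computes the classical description of $C_x$ by running $\cR(x,1^{\lfloor 1/\epsilon(|x|)\rfloor})$ up to the oracle query, for a fixed inverse-polynomial $\epsilon(|x|)$ to be chosen; (ii) samples $t\gets[T_x]$; (iii) executes $(U_1,M_1,\ldots,U_t,M_t)$ on $|0^\ell\rangle$, recording $\tau_t=(u_1,\ldots,u_t)$ and the post-measurement state $|\psi_t^{\tau_t}\rangle$; (iv) measures $|\psi_t^{\tau_t}\rangle$ in the computational basis to obtain $v_t=u_t\|w_t$; and (v) outputs $\puzz\coloneqq(t,\tau_t)$ and $\ans\coloneqq w_t$. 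Note that unlike the average-case construction, $x$ itself is not placed inside $\puzz$: it is already available to the adversary as the auxiliary input.

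Suppose for contradiction that $\Samp$ is not an auxiliary-input DistOWPuzz. Then for every constant $c>0$ and every polynomial $q$ there exists a QPT adversary $\cA$ such that the set of $x$ on which $\cA$ fails to $1/q(|x|)$-approximate the conditional distribution of $w_t$ given $(t,\tau_t)$ is finite. Mimicking the simulator $\cQ^*$ from the proof of \cref{thm:PDQP_OWPuzz}, I define a QPT machine $\cF$ that on input $(x,1^{\lfloor 1/\epsilon(|x|)\rfloor})$ runs $\cR$ and replaces its single query to $\cQ(C_x)$ with the following simulation: honestly execute $C_x$ to obtain all collapsing outcomes $(u_1,\ldots,u_{T_x})$, and for each $i$ invoke $w'_i\gets\cA(x,i,u_1,\ldots,u_i)$ to produce $v'_i=u_i\|w'_i$. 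A hybrid argument identical in structure to equations leading up to \cref{eq:henkei} and \cref{eq:avg_SD} (only with the expectation over $x\gets\cE(1^\secp)$ removed and the bound instantiated pointwise in $x$) shows that for every $x$ outside the finite failure set of $\cA$,
\begin{align}
\SD\bigl(\cF(x,1^{\lfloor 1/\epsilon(|x|)\rfloor}),\cR^\cQ(x,1^{\lfloor 1/\epsilon(|x|)\rfloor})\bigr)\le T_x\cdot\tfrac{1}{q(|x|)}\le\tfrac{\epsilon(|x|)}{2},
\end{align}
once $q$ is chosen large enough relative to the polynomial upper bound $T_x\le T(|x|)$ on the number of measurements in $C_x$.

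Combining this with $\SD(\cR^\cQ(x,1^{\lfloor 1/\epsilon(|x|)\rfloor}),\cD_x)\le\epsilon(|x|)$ and using a standard finite-table patch to absorb the finitely many $x$ on which $\cA$ fails, I obtain a QPT machine that approximates $\cD_x$ within any inverse-polynomial error on every $x$, and then by a diagonal/padding argument extends to a full QPT sampler witnessing $\{\cD_x\}\in\mathbf{SampBQP}$, contradicting the choice of $\{\cD_x\}$. The main subtlety I expect is the last step: auxiliary-input security only yields a finite failure set for each fixed polynomial slackness $q$ and fixed $\epsilon$, so to recover the uniform $\mathbf{SampBQP}$ simulator I need to quantify carefully over $\epsilon$, choose $q(|x|)$ as a function growing faster than $2T(|x|)/\epsilon$, and verify that the hardcoded lookup table together with the simulation $\cF$ produces a single QPT algorithm that meets the $\mathbf{SampBQP}$ quality requirement for every $(x,\epsilon)$ simultaneously.
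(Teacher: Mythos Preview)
Your overall structure matches the paper's: build an auxiliary-input DistOWPuzz from the $\mathbf{SampPDQP}$ machine $\cR$, and use an adversary breaking it to simulate the non-collapsing oracle inside $\cR$, thereby placing $\{\cD_x\}$ in $\mathbf{SampBQP}$. The hybrid argument and the finite-table patch for the finitely many bad inputs are also the same ideas the paper uses.

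However, there is a genuine gap precisely where you flag a ``subtlety'': your $\Samp$ takes only $x$ as auxiliary input and hardwires a single inverse-polynomial precision $\epsilon(|x|)$ into the circuit $C_x$. Breaking this $\Samp$ as a $|x|^{-c}$-DistOWPuzz yields, for each fixed $c$, a \emph{different} QPT adversary $\cA_c$ and a simulator whose error on $\cD_x$ is at best $T_x/|x|^c + \epsilon(|x|)$. The additive $\epsilon(|x|)$ term is baked into the construction and cannot be driven below any target $\epsilon$ that is smaller than it; and even ignoring that term, you would need to non-uniformly select among the $\cA_c$ as $\epsilon$ varies. Your proposed ``diagonal/padding'' step does not work here, because padding $x$ changes the instance and hence $\cD_x$ itself, and there is no uniform way to fuse the family $\{\cA_c\}_c$ into a single QPT algorithm meeting the $\mathbf{SampBQP}$ requirement for all $(x,\epsilon)$.

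The paper's resolution is exactly the missing idea: make the precision part of the auxiliary input. Define $\Samp$ on inputs $z=(x,1^{\lfloor 1/\epsilon\rfloor})$, let it build the circuit $C_{x,\epsilon}$ that $\cR(x,1^{\lfloor 1/\epsilon\rfloor})$ would query, and take the DistOWPuzz threshold to be $1/(|z|\,T'(|z|))$ where $T'$ bounds the circuit length over all $(x,\epsilon)$ with $|x|+\lfloor 1/\epsilon\rfloor\le|z|$. Then the negation of auxiliary-input security yields a \emph{single} QPT adversary $\cA$ that succeeds on all but a finite set of pairs $(x,\epsilon)$; the finitely many exceptions (bounded by a constant $b$ on $|x|+\lfloor 1/\epsilon\rfloor$) are handled by a constant-time brute-force simulation of $\cQ$, and for all other inputs the simulator $\cQ^*$ built from $\cA$ (run at precision $\epsilon/2$) gives $\SD(\cF(x,1^{\lfloor 1/\epsilon\rfloor}),\cD_x)\le\epsilon$ uniformly. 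That is the step your sketch is missing.
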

\begin{proof}[Proof of \cref{thm:AI-OWPuzz}]
Because of the equivalence between auxiliary-input OWPuzzs and auxiliary-input DistOWPuzzs, it suffices to construct auxiliary-input DistOWPuzzs.
Assume that $\mathbf{SampPDQP}\nsubseteq\mathbf{SampBQP}$ and let $S=\{\cD_x\}_{x\in\bit^*}$ be a sampling problem in $\mathbf{SampPDQP}$ but not in $\mathbf{SampBQP}$.
Then, by the definition of $\mathbf{SampPDQP}$, there exist a classical deterministic polynomial-time algorithm $\cR$ that makes a single query to the non-collapsing measurement oracle $\cQ$
such that for all $x\in\bit^*$ and for all $\epsilon>0$,
\begin{align}
    \SD(\cR^{\cQ}(x,1^{\lfloor 1/\epsilon\rfloor}),\cD_x)\le\epsilon.
\end{align}
For each $x\in\bit^*$ and $\epsilon>0$, let $C_{x,\epsilon}:=(U_1,M_1,...,U_{T_{x,\epsilon}},M_{T_{x,\epsilon}})$ be the classical description of the quantum circuit that $\cR(x,1^{\floor{1/\epsilon}})$ queries to $\cQ$.
Here let $C_{x,\epsilon}$ act on $\ell$ qubits, $U_i$ be a unitary operator, and $M_i$ be a computational basis projective measurement on $m_i$ qubits such that $0\le m_i\le \ell$.
Note that all of $U_i$, $M_i$, $m_i$, and $\ell$ also depend on $x$ and $\epsilon$, but
we omit this dependence for notational simplicity.
Define 
\begin{align}\label{eq:def_K_T}
    T'(\secp) := \max_{\substack{ x\in\bit^{\le\secp}, \\ \epsilon>0:\floor{1/\epsilon}\le\secp }} \{T_{x,\epsilon}\}.
\end{align}
Here $\bit^{\le\secp}$ is a set of bit strings $x$ such that $|x|\le\secp$.
    
By using $\cR$, we construct an auxiliary-input $\frac{1}{\secp T'(\secp)}$-DistOWPuzz $\Samp$ as follows:
\begin{enumerate}
    \item Take $z\in\bit^*$ as input, where $z=(x,1^{\floor{1/\epsilon}})$ for some $x\in\bit^*$, $\epsilon>0$. 
    \item Let $C_{x,\epsilon}:=(U_1,M_1,...,U_{T_{x,\epsilon}},M_{T_{x,\epsilon}})$ be a classical description of a quantum circuit that $\cR(x,1^{\floor{1/\epsilon}})$ queries to $\cQ$.
    \item Sample $t\gets[T_{x,\epsilon}]$.
    \item Run $(U_1,M_1,...,U_t,M_t)$ on $|0^{\ell}\rangle$. Obtain $\tau_t:=(u_1,...,u_t)$ and the resulting state $|\psi^{\tau_t}_t\rangle$, where $u_i\in\bit^{m_i}$ be the measurement result of $M_i$.
    \item Measure all qubits of $|\psi^{\tau_t}_t\rangle$ in the computational basis and obtain $v_t$, where $v_t=u_i\|w_i$ for some $w_i\in\bit^{\ell-m_t}$.
    \item Let $\puzz:=(t,\tau_t)$ and $\ans:=w_t$. Output $(\puzz,\ans)$.
\end{enumerate}
For the sake of contradiction, we assume that $\Samp$ is not an auxiliary-input $\frac{1}{\secp T'(\secp)}$-DistOWPuzz.
Then by the definition of auxiliary-input DistOWPuzzs, there exists a QPT adversary $\cA$ such that for all but finitely many $z\in\bit^*$,
\begin{align}\label{eq:break_AI_OWPuzz}
    \SD\left( \{\puzz,\ans\}_{(\puzz,\ans)\gets\Samp(z)} , \{\puzz,\cA(z,\puzz)\}_{(\puzz,\ans)\gets\Samp(z)} \right) < \frac{1}{|z| T'(|z|)}.
\end{align}
Let $G:=\bit^*\setminus\Bad$ be a set of such $z\in\bit^*$, where $\Bad\subseteq\bit^*$ is a finite subset.
If $z=(x,1^{\floor{1/\epsilon}})$ for some $x\in\bit^*$, $\epsilon>0$, then
\begin{align}
    \Pr[(t,\tau_t,w_t)\gets\Samp(z)] = \frac{1}{T_{x,\epsilon}} \Pr[(\tau_t,w_t)\gets\cQ^t(C_{x,\epsilon})], 
\end{align}
where $\cQ^t$ is the following (unbounded) algorithm:
\begin{enumerate}
    \item Take (a classical description of) a quantum circuit $C_{x,\epsilon}=(U_1,M_1,...,U_{T_{x,\epsilon}},M_{T_{x,\epsilon}})$. 
    \item Sample $(v_1,...,v_{T_{x,\epsilon}})\gets\cQ(C_{x,\epsilon})$, where $v_i=u_i\|w_i$ and $u_i$ is a measurement result of $M_i$ for each $i\in[T_{x,\epsilon}]$.
    \item Let $\tau_t:=(u_1,...,u_t)$. Output $(\tau_t,w_t)$.
\end{enumerate}
    
By \cref{eq:break_AI_OWPuzz}, for all $z\in G$ such that $z=(x,1^{\floor{1/\epsilon}})$,
\begin{align}
    \frac{1}{|z| T'(|z|)}
    &> \SD\left( \{\puzz,\ans\}_{(\puzz,\ans)\gets\Samp(z)} , \{\puzz,\cA(z,\puzz)\}_{(\puzz,\ans)\gets\Samp(z)} \right) \\ 
    &= \frac{1}{T_{x,\epsilon}} \sum_{t\in[T_{x,\epsilon}]} \SD \left( \{\tau_t,w_t\} , \{\tau_t,\cA(z,t,\tau_t)\} \right),
\end{align}
where $(\tau_t,w_t)\gets\cQ^t(C_{x,\epsilon})$.
Thus by \cref{eq:def_K_T}, for all $z\in G$ such that $z=(x,1^{\floor{1/\epsilon}})$,
\begin{align}\label{eq:dist}
    \sum_{t\in[T_{x,\epsilon}]} \SD\left( \{\tau_t,w_t\}, \{\tau_t,\cA(z,t,\tau_t)\} \right) < \epsilon,
\end{align}
where $(\tau_t,w_t)\gets\cQ^t(C_{x,\epsilon})$.
    
Our goal is to construct a QPT algorithm $\cF$ on input $(x,1^{\floor{1/\epsilon}})$ 
such that for all $x\in\bit^*$ and for all $\epsilon>0$
\begin{align}\label{eq:final_goal}
    \SD (\cF(x,1^{\floor{1/\epsilon}}),\cD_x) \le \epsilon.
\end{align}
We define $\cF$ as follows:
\begin{enumerate}
    \item Take $x\in\bit^*$ and $1^{\floor{1/\epsilon}}$ as input. 
    \item Let $b:=\max\{|z|:z\in\Bad\}$. If $|x|+\floor{1/\epsilon} >b$, then do the following:
    Run $\cR(x,1^{\floor{2/\epsilon}})$. Let $C_{x,\epsilon/2}:=(U_1,M_1,...,U_{T_{x,\epsilon/2}},M_{T_{x,\epsilon/2}})$ be the query that $\cR$ makes to $\cQ$.
    Instead of the query to $\cQ$, run the following QPT algorithm $V\gets\cQ^*(x,1^{\floor{2/\epsilon}},C_{x,\epsilon/2})$ and use $V$ as the outcome of $\cQ$:
    \begin{enumerate}
        \item Take $x$, $1^{\floor{2/\epsilon}}$, and (a classical description of) a quantum circuit $C_{x,\epsilon/2}=(U_1,M_1,...,U_{T_{x,2/\epsilon}},M_{T_{x,2/\epsilon}})$ that acts on $\ell$ qubits as input.
        \item Run $C_{x,\epsilon/2}$ on $|0^{\ell}\rangle$. Obtain $(u_1,...,u_{T_{x,\epsilon/2}})$, where $u_i\in\bit^{m_i}$ is the measurement outcome of $M_i$. 
        \item For each $i\in[T_{x,\epsilon/2}]$, run $w_i\gets\cA((x,1^{\floor{2/\epsilon}}),i,u_1,...,u_i)$.
        Let $V:=(u_1\|w_i,...,u_{T_{x,\epsilon/2}}\|w_{T_{x,\epsilon/2}})$.
        \item Output $V$.
    \end{enumerate}
    \item If $|x|+\floor{1/\epsilon}\le b$, do the following: 
    Run $\cR(x,1^{\floor{1/\epsilon}})$. Let $C_{x,\epsilon}:=(U_1,M_1,...,U_{T_{x,\epsilon}},M_{T_{x,\epsilon}})$ be the query that $\cR$ makes to $\cQ$. 
    Instead of the query to $\cQ$, run the following algorithm: 
    \begin{enumerate}
        \item Take (a classical description of) a quantum circuit $C_{x,\epsilon}=(U_1,M_1,...,U_{T_{x,\epsilon}},M_{T_{x,\epsilon}})$ that acts on $\ell$ qubits as input.
        \item For each $i\in[T_{x,\epsilon}]$ do the following:
        \begin{enumerate}
            \item Run $(U_1,M_1,...,U_i,M_i)$ on $|0^\ell\rangle$. Obtain $\tau_i:=(u'_1,...,u'_i)$, where $u_t\in\bit^{m_t}$ is the measurement outcome of $M_t$. Let $|\psi^{\tau_i}_i\rangle$ be the resulting state.
            \item If $(u'_1,...,u'_{i-1})=(u_1,...,u_{i-1})$, then proceed to the next step. Otherwise, go back to the previous step.
            \item Let $u_i=u'_i$. Measure all qubits of $|\psi^{\tau_i}_i\rangle$ in the computational basis and obtain $v_i\in\bit^\ell$, where $v_i=u_i\|w_i$ for some $w_i\in\bit^{\ell-m_i}$.
        \end{enumerate}
        \item Output $(v_1,...,v_{T_{x,\epsilon}})$.
    \end{enumerate}
\end{enumerate}
    
Note that if $x$ and $1^{\floor{1/\epsilon}}$ satisfies $|x|+\floor{1/\epsilon}\le b$, then $\cF(x,1^{\floor{1/\epsilon}})$ runs in constant time.
This is because $\Bad$ is the finite set and therefore $b$ is constant that does not depend on $|x|$ and $\floor{1/\epsilon}$.
Moreover, if $|x|+\floor{1/\epsilon}\le b$, then the output distribution of $\cF(x,1^{\floor{1/\epsilon}})$ is equivalent to $\cR^\cQ(x,1^{\floor{1/\epsilon}})$.
Thus, for all $x\in\bit^*$ and $\epsilon>0$ that satisfies $|x|+\floor{1/\epsilon}\le b$, 
\begin{align}
    \SD (\cF(x,1^{\floor{1/\epsilon}}),\cD_x) 
    &= \SD (\cR^\cQ(x,1^{\floor{1/\epsilon}}),\cD_x) \le \epsilon.
\end{align}
    
Next, we consider the case where $|x|+\floor{1/\epsilon}>b$.
Later we will show that 
\begin{align}
    \label{eq:worst_goal}
    \SD ( \cQ^*(x,1^{\floor{1/\epsilon}},C_{x,\epsilon}) , \cQ(C_{x,\epsilon}) ) \le \epsilon
\end{align}
for all $x\in\bit^*$ and $\epsilon>0$ such that $|x|+\floor{1/\epsilon}>b$.
Then,
\begin{align}
    \SD (\cF(x,1^{\floor{1/\epsilon}}) , \cD_x) 
    &\le \SD (\cR^{\cQ^*}(x,1^{\floor{2/\epsilon}}) , \cR^\cQ(x,1^{\floor{2/\epsilon}}) ) + \frac{\epsilon}{2} \\ 
    &\le \SD ( \cQ^*(x,1^{\floor{2/\epsilon}},C_{x,\epsilon/2}) , \cQ(C_{x,\epsilon/2}) ) + \frac{\epsilon}{2} \\ 
    &\le \frac{\epsilon}{2} + \frac{\epsilon}{2} \le \epsilon.
\end{align}
Therefore, $\cF$ satisfies \cref{eq:final_goal}.
    
In the remaining part, we show \cref{eq:worst_goal}.
To accomplish this, we define the following (unbounded) algorithm $\cB$:
\begin{itemize}
    \item $\cB(k,z,C_{x,\epsilon})\to(v_1,...,v_k,v'_{k+1},...,v'_{T_{x,\epsilon}})$:
    \begin{enumerate}
        \item Take $k\in\{0,...,T_{x,\epsilon}\}$, $z=(x,1^{\floor{1/\epsilon}})$, and (a classical description of) a quantum circuit $C_{x,\epsilon}=(U_1,M_1,...,U_{T_{x,\epsilon}},M_{T_{x,\epsilon}})$ that acts on $\ell$ qubits as input.
        \item Run $C_{x,\epsilon}$ on input $|0^{\ell}\rangle$ and let $u_t\in\bit^{m_t}$ be the outcome of the measurement $M_t$ for each $t\in[T_{x,\epsilon}]$.
        For each $t\in[T_{x,\epsilon}]$, 
        let $\tau_t:=(u_1,...,u_t)$ and let $|\psi_t^{\tau_t}\rangle$ be the (normalized) post-measurement state after the measurement $M_t$, i.e., 
        \begin{align}
            |\psi_t^{\tau_t}\rangle := \frac{ (|u_t\rangle\langle u_t|\otimes I)U_t|\psi_{t-1}^{\tau_{t-1}}\rangle }{ \sqrt{ \langle \psi_{t-1}^{\tau_{t-1}}|U_t^{\dagger} (|u_t\rangle\langle u_t|\otimes I)U_t|\psi_{t-1}^{\tau_{t-1}}\rangle } }.
        \end{align}
        \item For $1\le i\le k$, sample $v_i\in\bit^\ell$ with probability $|\langle v_i|\psi_i^{\tau_i}\rangle|^2$.
        \item For $k+1 \le i\le T_{x,\epsilon}$, run $w'_i\gets\cA(z,i,\tau_i)$ and let $v'_i:=u_i\|w'_i$.
        \item Output $(v_1,...,v_k,v'_{k+1},...,v'_{T_{x,\epsilon}})$.
    \end{enumerate}
\end{itemize}
Then, the distribution $\cB(T_{x,\epsilon},z,C_{x,\epsilon})$ is equivalent to the distribution $\cQ(C_{x,\epsilon})$ and the distribution $\cB(0,z,C_{x,\epsilon})$ is equivalent to the distribution $\cQ^*(x,1^{\floor{1/\epsilon}},C_{x,\epsilon})$.
By the triangle inequality, it suffices to show 
\begin{align}
    \sum_{t\in[T_{x,\epsilon}]} \SD (\cB(t-1,z,C_{x,\epsilon}),\cB(t,z,C_{x,\epsilon})) \le \epsilon
\end{align}
for all $x\in\bit^*$ and $\epsilon>0$ such that $|x|+\floor{1/\epsilon}>b$.
Indeed, for all $x\in\bit^*$, $\epsilon>0$, and for all $t\in[T_{x,\epsilon}]$,
\begin{align}
    &\SD (\cB(t-1,z,C_{x,\epsilon}),\cB(t,z,C_{x,\epsilon})) \\ 
    &= \SD\left( \{\tau_t,w_t\}_{(\tau_t,w_t)\gets\cQ^t(C_{x,\epsilon})}, \{\tau_t,\cA(z,t,\tau_t)\}_{(\tau_t,w_t)\gets\cQ^t(C_{x,\epsilon})} \right),
\end{align}
where $z=(x,1^{\floor{1/\epsilon}})$.
We can obtain the above equality by the same way as \cref{eq:henkei}.
If $|x|+\floor{1/\epsilon}>b$, then $|z|>b$ and therefore $z\in G$.
By \cref{eq:dist}, for all $x\in\bit^*$ and  $\epsilon>0$ such that $|x|+\floor{1/\epsilon}>b$,
\begin{align}
    \sum_{t\in[T_{x,\epsilon}]} \SD (\cB(t-1,z,C_{x,\epsilon}),\cB(t,z,C_{x,\epsilon})) \le \epsilon.
\end{align}
\end{proof}

By combining \cref{lem:adapt_equivalence,thm:AI-OWPuzz}, we obtain the following corollary.
\begin{corollary}
    If $\mathbf{SampAdPDQP}\nsubseteq\mathbf{BQP}$, then auxiliary-input OWPuzzs exist.
\end{corollary}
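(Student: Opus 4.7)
The plan is simply to compose two results already proved in this section. First I would invoke \cref{lem:adapt_equivalence}, whose ``only if'' direction says that $\mathbf{SampAdPDQP}\nsubseteq\mathbf{SampBQP}$ implies $\mathbf{SampPDQP}\nsubseteq\mathbf{SampBQP}$. Thus the hypothesis of the corollary transfers directly to the hypothesis of \cref{thm:AI-OWPuzz}. Applying \cref{thm:AI-OWPuzz} then immediately yields the existence of auxiliary-input OWPuzzs, which is precisely the conclusion of the corollary. (Here I read the $\mathbf{BQP}$ appearing in both statements as shorthand for $\mathbf{SampBQP}$, consistent with the sampling-class setting of the preceding lemma and of the proof of \cref{thm:AI-OWPuzz}.)

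Expanded a bit: assume for the hypothesis that there is a sampling problem $\{\cD_x\}_{x\in\bit^*} \in \mathbf{SampAdPDQP}$ which lies outside $\mathbf{SampBQP}$. By \cref{lem:adapt_equivalence}, there is also a sampling problem $\{\cD'_x\}_{x\in\bit^*}\in \mathbf{SampPDQP}\setminus\mathbf{SampBQP}$ (not necessarily the same one). Feeding this witness into the construction inside the proof of \cref{thm:AI-OWPuzz}---namely, the auxiliary-input sampler that, on input $(x,1^{\lfloor 1/\epsilon\rfloor})$, runs the $\mathbf{SampPDQP}$ machine partway through its single non-collapsing query and outputs the collapsing transcript together with the non-collapsing tail as answer---produces an auxiliary-input DistOWPuzz, and hence (by \cref{lem:AI_DistOWPuzz_OWPuzz}) an auxiliary-input OWPuzz.

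There is essentially no new obstacle at this step: all of the conceptual content lives in the two ingredients being combined. The nontrivial work in \cref{lem:adapt_equivalence} is a polynomial-length hybrid argument that replaces each adaptive query to $\cQ$ one at a time by an invocation of the hypothetical $\mathbf{SampBQP}$ simulator for the single-query problem $\{\cQ_C\}_{C\in\bit^*}$, with inverse-polynomial accuracy budget at each step. The nontrivial work in \cref{thm:AI-OWPuzz} is the distributional-one-wayness reduction that, given an adversary breaking the puzzle for every sufficiently long input, reconstructs the joint distribution of the non-collapsing measurement outcomes by conditional sampling across all cut points $t\in[T_{x,\epsilon}]$. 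Once those two ingredients are in hand, the corollary is a one-line concatenation, so the entire forward-looking ``plan'' reduces to verifying that the classes and quantifiers line up, which they do.
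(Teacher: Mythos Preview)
Your proposal is correct and matches the paper's own approach exactly: the paper states the corollary immediately after \cref{thm:AI-OWPuzz} with the remark ``By combining \cref{lem:adapt_equivalence,thm:AI-OWPuzz}, we obtain the following corollary,'' and gives no further argument. Your reading of $\mathbf{BQP}$ as $\mathbf{SampBQP}$ is also the intended one.
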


\section{Distributional Collision-Resistant Puzzles}
\label{sec:CROWPuzz}
In this section, we introduce a quantum analogue of dCRH, namely,
distributional collision-resistant puzzles (dCRPuzzs).

\subsection{Definition of classical dCRH}
Before introducing the quantum analogue,
we first remind the definition of classical dCRH for the convenience of readers.

\if0
\begin{definition}[Collision-Resistant Hashing (CRH)]
Let $\{\cH_\secp : \bit^{n(\secp)} \to \bit^{m(\secp)}\}_{\secp\in\mathbb{N}}$
be an efficient function family ensemble. Here $m$ and $n$ are polynomials such that
$m(\secp)\le n(\secp)$ for all $\secp\in\mathbb{N}$.
We say that it is a collision-resistant hash (dCRH) function family 
if for any QPT algorithm $\cA$, 
\begin{align}
\Pr[h(x)=h(x')\wedge x\neq x':
h\gets\cH_\secp, (x,x')\gets\cA(h)]
\le\negl(\secp).
\end{align}
\end{definition}
\fi

\begin{definition}[Distributional Collision-Resistant Hashing (dCRH)~\cite{STOC:DubIsh06,EC:BHKY19}]
Let $\{\cH_\secp : \bit^{n(\secp)} \to \bit^{m(\secp)}\}_{\secp\in\mathbb{N}}$
be an efficient function family ensemble. Here $n$ and $m$ are polynomials.
We say that it is a distributional collision-resistant hash (dCRH) function family 
if there exists a polynomial $p$ such that for any QPT algorithm $\cA$, 
\begin{align}
\SD(\{h, \cA(h)\}_{h\gets \cH_\secp}, \{h, \mathsf{Col}(h)\}_{h\gets \cH_\secp}) \ge\frac{1}{p(\secp)}
\end{align}
for all sufficiently large $\secp\in\mathbb{N}$.
Here $\mathsf{Col}(h)$ is the following distribution.
\begin{enumerate}
    \item 
    Sample $x\gets\bit^{n(\secp)}$.
    \item 
    Sample $x'\gets h^{-1}(h(x))$.
    \item 
    Output $(x,x')$.
\end{enumerate}
\end{definition}

dCRH imply distributional OWFs~\cite{EC:BHKY19}, and therefore OWFs.
Average-case hardness of $\mathbf{SZK}$ imply dCRH~\cite{C:KomYog18,EC:BHKY19}.
dCRH will not be constructed from one-way permutations (OWPs) in a black-box way~\cite{EC:Simon98,STOC:DubIsh06}.
dCRH will not be constructed from iO plus OWPs in a black-box way~\cite{SIAM:AshSeg16}.
dCRH implies constant-round statistically-hiding commitments~\cite{EC:BHKY19}.
Two-message statistically-hiding commitments imply dCRH~\cite{EC:BHKY19}.

\subsection{Definition of dCRPuzzs}
Next we introduce dCRPuzzs.
\if0
\begin{definition}[Collision Resistant OWPuzzs (CR-OWPuzzs)]
A collision resistant OWPuzz (CR-OWPuzz)
is a set $(\Setup,\Samp,\Ver)$ of algorithms such that
\begin{itemize}
\item 
$\Setup(1^\secp)\to\pp:$
It is a QPT algorithm that, on input the security parameter $\secp$,
outputs a classical public parameter $\pp$.
\item 
$\Samp(\pp)\to(\puzz,\ans):$
It is a QPT algorithm that, on input $\pp$, outputs two bit strings $(\puzz,\ans)$. 
\item 
$\Ver(\pp,\puzz,\ans)\to\top/\bot:$
It is a (not necessarily efficient) algorithm that, on input $\pp$, $\puzz$, and $\ans$,
outputs $\top/\bot$.
\end{itemize}
We require that the length of $\puzz$ is shorter than that of $\ans$ to avoid a trivial construction.
Moreover, we require the following correctness and security.
\paragraph{Correctness.}
\begin{align}
\Pr[\top\gets\Ver(\pp,\ans,\puzz):\pp\gets\Setup(1^\secp),(\puzz,\ans)\gets\Samp(\pp)]    
\ge1-\negl(\secp).
\end{align}
\paragraph{Security.}
For any QPT adversary $\cA$,
\begin{align}
\Pr\left[
\begin{array}{c}
\ans\neq\ans'\\
\wedge\\
\top\gets\Ver(\pp,\puzz,\ans)\\
\wedge\\
\top\gets\Ver(\pp,\puzz,\ans')\\
\end{array}
:
\begin{array}{r}
\pp\gets\Setup(1^\secp)\\
(\puzz,\ans,\ans')\gets\cA(\pp)\\
\end{array}
\right]
\le\negl(\secp).
\end{align}
\end{definition}
\fi

\begin{definition}[Distributional Collision-Resistant Puzzles (dCRPuzzs)]
A distributional collision-resistant puzzle (dCRPuzz)
is a pair $(\Setup,\Samp)$ of algorithms such that
\begin{itemize}
\item 
$\Setup(1^\secp)\to\pp:$
It is a QPT algorithm that, on input the security parameter $\secp$,
outputs a classical public parameter $\pp$.
\item 
$\Samp(\pp)\to(\puzz,\ans):$
It is a QPT algorithm that, on input $\pp$, outputs two bit strings $(\puzz,\ans)$. 
\end{itemize}
We require the following property:
there exists a polynomial $p$ such that for any QPT adversary $\cA$ 
\begin{align}
\SD(\{\pp,\cA(\pp)\}_{\pp\gets\Setup(1^\secp)},\{\pp,\mathsf{Col}(\pp)\}_{\pp\gets\Setup(1^\secp)} )\ge\frac{1}{p(\secp)}   
\end{align}
for all sufficiently large $\secp\in\mathbb{N}$, where $\mathsf{Col}(\pp)$ is the following distribution:
\begin{enumerate}
    \item 
    Run $(\puzz,\ans)\gets\Samp(\pp)$.
    \item 
    Sample $\ans'$ with the conditional probability $\Pr[\ans'|\puzz]\coloneqq\frac{\Pr[(\ans',\puzz)\gets\Samp(\pp)]}{\Pr[\puzz\gets\Samp(\pp)]}$. 
    \item 
    Output $(\puzz,\ans,\ans')$.
\end{enumerate}
\end{definition}

The following lemma is easy to show.
\begin{lemma}
If (quantumly-secure) dCRH exists, then dCRPuzzs exist.  
\end{lemma}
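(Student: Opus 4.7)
The plan is to give the natural construction. Let $\Setup(1^\secp)$ sample $h \gets \cH_\secp$ and output $\pp \coloneqq h$, and let $\Samp(\pp)$ parse $\pp = h$, sample $x \gets \bit^{n(\secp)}$ uniformly, and output $\puzz \coloneqq h(x)$ and $\ans \coloneqq x$. Both algorithms are clearly QPT (in fact, classical probabilistic polynomial-time) since $\cH_\secp$ is an efficient function family.

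Next, I would identify the collision distribution $\mathsf{Col}(\pp)$ for this scheme. By the construction, for any hash value $y$ in the image of $h$, we have
\begin{align}
\Pr[\puzz = y \gets \Samp(\pp)] = \frac{|h^{-1}(y)|}{2^{n(\secp)}},
\end{align}
and the conditional probability $\Pr[\ans' = x' \mid \puzz = y]$ equals $1/|h^{-1}(y)|$ for $x' \in h^{-1}(y)$ and $0$ otherwise. Hence $\mathsf{Col}(\pp)$ outputs $(y, x, x')$ where $y = h(x)$ for a uniform $x \gets \bit^{n(\secp)}$ and $x'$ is an independent uniform sample from $h^{-1}(y)$. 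This is exactly the dCRH collision distribution $\mathsf{Col}(h)$ augmented by the redundant component $y = h(x)$.

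The main step is the reduction. Suppose for contradiction that $(\Setup,\Samp)$ is not a dCRPuzz. Then for every polynomial $p$ there exists a QPT adversary $\cA$ such that
\begin{align}
\SD\bigl(\{h,\cA(h)\}_{h\gets\cH_\secp},\{h,\mathsf{Col}(\pp)\}_{\pp\gets\Setup(1^\secp)}\bigr) < \frac{1}{p(\secp)}
\end{align}
for infinitely many $\secp$. Define a QPT algorithm $\cA'$ that runs $\cA(h)$, parses its output as $(y,x,x')$, and returns $(x,x')$. By the data processing inequality applied to the projection that drops the $y$ component,
\begin{align}
\SD\bigl(\{h,\cA'(h)\}_{h\gets\cH_\secp},\{h,\mathsf{Col}(h)\}_{h\gets\cH_\secp}\bigr) < \frac{1}{p(\secp)}
\end{align}
for the same infinitely many $\secp$, which contradicts the dCRH security of $\{\cH_\secp\}_{\secp\in\N}$.

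There is no significant obstacle; the only care needed is in verifying that the conditional distribution arising from $\Samp$ coincides with picking a uniform preimage, which follows immediately from the fact that $\Samp$ samples $x$ uniformly. The reduction is then a one-line data processing argument, so the whole proof is essentially a bookkeeping calculation.
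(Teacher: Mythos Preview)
Your proof is correct and is precisely the natural construction one would expect; the paper itself does not spell out a proof, merely stating the lemma is ``easy to show.'' Your identification of $\mathsf{Col}(\pp)$ with the dCRH collision distribution augmented by the redundant $y=h(x)$, followed by the one-line data processing reduction, is exactly the intended argument.
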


\subsection{dCRPuzzs imply average-case hardness of $\mathbf{SampPDQP}$}
\begin{theorem}
If dCRPuzzs exist, then $\mathbf{SampPDQP}$ is hard on average.
\end{theorem}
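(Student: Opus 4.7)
The plan is to realize the collision distribution $\mathsf{Col}(\pp)$ itself as a sampling problem in $\mathbf{SampPDQP}$, taking the QPT instance ensemble to be $\cE(1^\secp)\coloneqq\Setup(1^\secp)$. Since by definition of dCRPuzz no QPT algorithm can sample close to $\mathsf{Col}(\pp)$ on average over $\pp\gets\Setup(1^\secp)$, this immediately yields the average-case hardness of $\mathbf{SampPDQP}$.

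To place $\{\mathsf{Col}(\pp)\}_\pp$ in $\mathbf{SampPDQP}$, I first purify $\Samp$: without loss of generality, $\Samp(\pp)$ applies a unitary $V_\pp$ to $|0\cdots 0\rangle$, producing $V_\pp|0\cdots 0\rangle=\sum_{\puzz,\ans}c_{\puzz,\ans}|\puzz\rangle|\ans\rangle$, and then measures both registers in the computational basis. I then define the polynomial-time classical deterministic machine as follows: on input $\pp$, it constructs (the classical description of) the circuit $C_\pp\coloneqq(U_1,M_1,U_2,M_2)$ where $U_1\coloneqq V_\pp$, $M_1$ is the computational-basis measurement on the $\puzz$-register, $U_2\coloneqq I$, and $M_2$ performs no measurement ($m_2=0$); it makes a single query to $\cQ$ on $C_\pp$, receives $(v_1,v_2)$ with $v_i=\puzz_i\|\ans_i$, and outputs $(\puzz_1,\ans_1,\ans_2)$.

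The key observation is that the outcome distribution exactly equals $\mathsf{Col}(\pp)$. Indeed, after $M_1$ returns outcome $\puzz$ with probability $p_\puzz\coloneqq\sum_\ans|c_{\puzz,\ans}|^2$, the post-measurement state is $p_\puzz^{-1/2}\sum_\ans c_{\puzz,\ans}|\puzz\rangle|\ans\rangle$. The first non-collapsing measurement on all qubits therefore yields $(\puzz,\ans)$ with $\ans$ drawn from the conditional distribution $\Pr[\ans\mid\puzz]=|c_{\puzz,\ans}|^2/p_\puzz$, which coincides with the conditional distribution induced by $\Samp(\pp)$. Because $U_2=I$ and $M_2$ is trivial, the state is unchanged, and the second non-collapsing measurement yields an independent sample $(\puzz,\ans')$ from the same conditional distribution. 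Hence the triple returned by the machine is distributed exactly as $\mathsf{Col}(\pp)$, and the sampling problem $\{\mathsf{Col}(\pp)\}_\pp$ lies in $\mathbf{SampPDQP}$ (the approximation parameter $\epsilon$ plays no role here since the distribution is reproduced exactly).

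The hardness step is now immediate: the dCRPuzz security provides a polynomial $p$ such that for every QPT algorithm $\cA$, $\SD(\{\pp,\cA(\pp)\}_{\pp\gets\Setup(1^\secp)},\{\pp,\mathsf{Col}(\pp)\}_{\pp\gets\Setup(1^\secp)})\ge 1/p(\secp)$ for all sufficiently large $\secp$. Setting $\cA\coloneqq\cF$ for any candidate QPT sampler shows that $\{\mathsf{Col}(\pp)\}_\pp$ is hard on average for $\mathbf{SampPDQP}$ under $\cE$. The only subtlety is that the definition of average-case hardness formally demands $\cE(1^\secp)$ output strings of length exactly $\secp$; this is a cosmetic issue handled by padding $\pp$ (or equivalently by running $\Setup$ at an appropriate security parameter), and there is no real obstacle beyond verifying the conditional-distribution calculation above.
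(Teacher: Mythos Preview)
Your proposal is correct and follows essentially the same approach as the paper: realize $\mathsf{Col}(\pp)$ as a $\mathbf{SampPDQP}$ problem via the circuit $(V_\pp,\,M_1,\,I,\,\emptyset)$ and two non-collapsing measurements, then invoke the dCRPuzz security directly with $\cE\coloneqq\Setup$. One small technical slip: the purification of an arbitrary QPT sampler generally requires a junk register, i.e., $V_\pp|0\cdots0\rangle=\sum_{\puzz,\ans}c_{\puzz,\ans}|\puzz\rangle|\ans\rangle|junk_{\puzz,\ans}\rangle$ (as the paper writes it), since the output distribution need not arise from a pure state on the $(\puzz,\ans)$ registers alone; this changes nothing in your argument---the non-collapsing measurements also produce junk bits which you simply discard before outputting $(\puzz,\ans,\ans')$.
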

\begin{proof}
Let $(\Setup,\Samp)$ be a dCRPuzz.
Without loss of generality, we can assume that $\Samp(\pp)\to(\puzz,\ans)$ runs as follows.
\begin{enumerate}
    \item 
    Apply a unitary $V_\pp$ on $|0....0\rangle$ to generate a state, 
    \begin{align}
    V_\pp|0...0\rangle=\sum_{\puzz,\ans}c_{\puzz,\ans}|\puzz\rangle_\regA|\ans\rangle_\regB|junk_{\puzz,\ans}\rangle,
    \end{align}
    where $c_{\puzz,\ans}$ is a complex coefficient, and $|junk_{\puzz,\ans}\rangle$ is a ``junk'' state.
    \item 
    Measure the register $\regA$ to get $\puzz$.
    \item 
    Measure the register $\regB$ to get $\ans$.
    \item 
    Output $(\puzz,\ans)$.
\end{enumerate}
Define the following distribution $\cD_{\pp}$.
    \begin{enumerate}
        \item 
        Apply $V_\pp$ on $|0...0\rangle$ to generate 
        \begin{align}
        V_\pp|0...0\rangle=\sum_{\puzz,\ans} c_{\puzz,\ans} |\puzz\rangle_\regA|\ans\rangle_\regB|junk_{\puzz,\ans}\rangle_\regC.
        \end{align}
        \item 
        Measure the register $\regA$. The state is collapsed to
        \begin{align}
        |\puzz\rangle\otimes\left(\sum_{\ans} c_{\puzz,\ans} |\ans\rangle_\regB|junk_{\puzz,\ans}\rangle_\regC\right) 
        \end{align}
        up to the normalization.
        \item 
        Perform the non-collapsing measurement on this state to sample $v_1\coloneqq(\puzz,\ans,junk)$.
        \item 
        Perform the non-collapsing measurement on this state to sample $v_2\coloneqq(\puzz,\ans',junk')$.
        \item 
        Output $(\puzz,\ans,\ans')$.
    \end{enumerate}
    It is clear that the sampling problem $\{\cD_{\pp}\}_{\pp}$ is in $\mathbf{SampPDQP}$.
    (The classical polynomial-time deterministic base algorithm has only to query $(U_1,M_1,U_2,M_2)$ to the non-collapsing measurement oracle,
    where $U_1=V_\pp$, $M_1$ is the measurement after the application of $V_\pp$, $U_2$ is the identity, and $M_2$ does not do any measurement.)
    
Assume that $\mathbf{SampPDQP}$ is not hard on average.
Then, from the definition of average-case hardness of $\mathbf{SampPDQP}$ (\cref{def:AHSampPDQP}), we have that for any polynomial $p$ 
    there exists a QPT algorithm $\cF$ such that
    \begin{align}
        \SD ( \{\pp,\cF(\pp)\}_{\pp\gets\Setup(1^\secp)}, \{\pp,\cD_\pp\}_{\pp\gets\Setup(1^\secp)} ) \le \frac{1}{p(\secp)}
    \end{align}
    for infinitely-many $\secp\in\mathbb{N}$.

From such $\cF$, we construct a QPT adversary $\cA$ that breaks the dCRPuzz as follows.
\begin{enumerate}
\item 
Receive $\pp$ as input. 
   \item 
    Run $(\puzz,\ans,\ans')\gets\cF(\pp)$.
    \item 
    Output $(\puzz,\ans,\ans')$.
\end{enumerate}
Then,
\begin{align}
&\SD(\{\pp,\cA(\pp)\}_{\pp\gets\Setup(1^\secp)},\{\pp,\mathsf{Col}(\pp)\}_{\pp\gets\Setup(1^\secp)})    \\
&=
\SD(\{\pp,\cF(\pp)\}_{\pp\gets\Setup(1^\secp)},\{\pp,\mathsf{Col}(\pp)\}_{\pp\gets\Setup(1^\secp)})   \\
&=
\SD(\{\pp,\cF(\pp)\}_{\pp\gets\Setup(1^\secp)},\{\pp,\cD_\pp\}_{\pp\gets\Setup(1^\secp)})   \\
&\le
\frac{1}{p(\secp)}
\end{align}
for infinitely-many $\secp\in\mathbb{N}$, which
means that $\cA$ breaks the dCRPuzz, but it is the contradiction.
\end{proof}

\section{One-Shot Signatures and MACs}
\label{sec:money}
\subsection{Definitions}
We first remind the definition of one-shot signatures.
\begin{definition}[One-Shot Signatures~\cite{STOC:AGKZ20}]\label{def:OSSDS}
A one-shot signature scheme
is a set $(\Setup,\Gen,\Sign,\Ver)$ of algorithms such that
\begin{itemize}
\item 
$\Setup(1^\secp)\to \pp:$
It is a QPT algorithm that, on input the security parameter $\secp$, outputs a public parameter $\pp$.
    \item 
    $\Gen(\pp)\to (\vk,\sigk):$
    It is a QPT algorithm that, on input $\pp$, outputs
    a quantum signing key $\sigk$ and a classical verification key $\vk$.
    \item 
    $\Sign(\sigk,m)\to\sigma:$
    It is a QPT algorithm that, on input $\sigk$ and a message $m$, outputs a classical signature $\sigma$.
    \item 
    $\Ver(\pp,\vk,\sigma,m)\to\top/\bot:$
    It is a QPT algorithm that, on input $\pp$, $\vk$, $\sigma$, and $m$, outputs $\top/\bot$.
\end{itemize}
We require the following properties.

\paragraph{Correctness:}
For any $m$,
\begin{align}
\Pr\left[
\top\gets\Ver(\pp,\vk,\sigma,m):
\begin{array}{rr}
\pp\gets\Setup(1^\secp)\\
(\vk,\sigk)\gets\Gen(\pp)\\
\sigma\gets\Sign(\sigk,m)
\end{array}
\right]\ge1-\negl(\secp).
\end{align}

\paragraph{Security:}
For any QPT adversary $\cA$,
\begin{align}
\Pr\left[
\begin{array}{c}
m_0\neq m_1\\
\wedge\\
\top\gets\Ver(\pp,\vk,\sigma_0,m_0)\\
\wedge\\
\top\gets\Ver(\pp,\vk,\sigma_1,m_1)\\
\end{array}
:
\begin{array}{rr}
\pp\gets\Setup(1^\secp)\\
(\vk,m_0,m_1,\sigma_0,\sigma_1)\gets\cA(\pp)
\end{array}
\right]\le\negl(\secp).
\end{align}
\end{definition}

One-shot MACs are a relaxation of one-shot signatures and two-tier one-shot signatures~\cite{cryptoeprint:2023/1937}, which have partial public verification.
One-shot MACs can be constructed from the LWE assumption~\cite{cryptoeprint:2023/1937,untele}.
\begin{definition}[One-Shot MACs~\cite{untele}]\label{def:OSSMAC}
A one-shot message authentication code (MAC) scheme
is a set $(\Setup,\Gen,\Sign,\Ver)$ of algorithms such that
\begin{itemize}
\item 
$\Setup(1^\secp)\to (\pp,\mvk):$
It is a QPT algorithm that, on input the security parameter $\secp$, outputs a public parameter $\pp$ and a master verification key $\mvk$.
    \item 
    $\Gen(\pp)\to (\vk,\sigk):$
    It is a QPT algorithm that, on input $\pp$, outputs
    a quantum signing key $\sigk$ and a classical verification key $\vk$.
    \item 
    $\Sign(\sigk,m)\to\sigma:$
    It is a QPT algorithm that, on input $\sigk$ and a message $m$, outputs a classical signature $\sigma$.
    \item 
    $\Ver(\pp,\mvk,\vk,\sigma,m)\to\top/\bot:$
    It is a QPT algorithm that, on input $\pp$, $\mvk$, $\vk$, $\sigma$, and $m$, outputs $\top/\bot$.
\end{itemize}
We require the following properties.

\paragraph{Correctness:}
For any $m$,
\begin{align}
\Pr\left[
\top\gets\Ver(\pp,\mvk,\vk,\sigma,m):
\begin{array}{rr}
(\pp,\mvk)\gets\Setup(1^\secp)\\
(\sigk,\vk)\gets\Gen(\pp)\\
\sigma\gets\Sign(\sigk,m)
\end{array}
\right]\ge1-\negl(\secp).
\end{align}

\paragraph{Security:}
For any QPT adversary $\cA$,
\begin{align}
\Pr\left[
\begin{array}{c}
m_0\neq m_1\\
\wedge\\
\top\gets\Ver(\pp,\mvk,\vk,\sigma_0,m_0)\\
\wedge\\
\top\gets\Ver(\pp,\mvk,\vk,\sigma_1,m_1)\\
\end{array}
:
\begin{array}{rr}
(\pp,\mvk)\gets\Setup(1^\secp)\\
(\vk,m_0,m_1,\sigma_0,\sigma_1)\gets\cA(\pp)
\end{array}
\right]\le\negl(\secp).
\end{align}
\end{definition}

\subsection{One-shot MACs imply dCRPuzzs}
\begin{theorem}\label{thm:MAC_to_dCRP}
If one-shot MACs exist, then dCRPuzzs exist.
\end{theorem}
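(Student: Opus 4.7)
The plan is to construct a dCRPuzz $(\Setup',\Samp')$ directly from the one-shot MAC scheme $(\Setup,\Gen,\Sign,\Ver)$. Fix any two distinct messages $m_0\neq m_1$ in the message space. Define $\Setup'(1^\secp)$ to run $(\pp,\mvk)\gets\Setup(1^\secp)$ and output just $\pp$, discarding $\mvk$. Define $\Samp'(\pp)$ to sample $b\gets\bit$, run $(\vk,\sigk)\gets\Gen(\pp)$ and $\sigma\gets\Sign(\sigk,m_b)$, and then output $\puzz:=\vk$ and $\ans:=(m_b,\sigma)$. Because $b$ is drawn independently of $\pp$ and of the randomness used by $\Gen$, the random variables $b$ and $\vk$ are independent, a fact that will drive the analysis.

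The main idea, as outlined in the technical overview, is that $\mathsf{Col}(\pp)$ samples two answers under the same $\vk$, and with probability at least $1/2$ these answers correspond to distinct messages, yielding a one-shot MAC forgery. Suppose for contradiction that $(\Setup',\Samp')$ is not a dCRPuzz: for every polynomial $p$ there exists a QPT adversary $\cA$ with
\begin{align}
\SD\!\left(\{\pp,\cA(\pp)\}_{\pp\gets\Setup'(1^\secp)},\{\pp,\mathsf{Col}(\pp)\}_{\pp\gets\Setup'(1^\secp)}\right)<\frac{1}{p(\secp)}
\end{align}
for infinitely many $\secp$. Under $\mathsf{Col}(\pp)$, the output has the form $(\vk,(m_b,\sigma_b),(m_{b'},\sigma_{b'}))$ where $b$ and $b'$ are independent uniform bits (the independence of $b$ and $\vk$ in $\Samp'$ carries over to the conditional sampling of $\ans'$ given $\puzz$), and both $\sigma_b,\sigma_{b'}$ are valid signatures under $\vk$ except with negligible probability by one-shot MAC correctness. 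Hence the event that the triple constitutes a valid one-shot MAC forgery (two verifying signatures on distinct messages under the same $\vk$) occurs with probability at least $\tfrac{1}{2}-\negl(\secp)$ under $\mathsf{Col}$, and therefore with probability at least $\tfrac{1}{2}-\negl(\secp)-1/p(\secp)$ under $\cA$. Taking $p:=4$, the QPT adversary $\cB$ that on input $\pp$ runs $(\vk,(m_0^*,\sigma_0^*),(m_1^*,\sigma_1^*))\gets\cA(\pp)$ and returns $(\vk,m_0^*,m_1^*,\sigma_0^*,\sigma_1^*)$ wins the one-shot MAC game with probability at least $1/8$ for infinitely many $\secp$, contradicting \cref{def:OSSMAC}.

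There is essentially no nontrivial technical obstacle. The only point worth double-checking is that the conditional distribution used to define $\mathsf{Col}(\pp)$ preserves the uniformity of $b'$ given $\vk$; this follows because $b$ was drawn independently of $(\pp,\vk)$ in $\Samp'$, so the marginal of $b$ conditioned on any fixed $\vk$ remains uniform on $\bit$. Once this observation is in place, the remainder of the argument is routine bookkeeping with statistical distance, one-shot MAC correctness, and the security definition, and does not require any hybrid argument or auxiliary primitive.
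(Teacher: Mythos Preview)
Your proposal is correct and follows essentially the same approach as the paper: set $\puzz=\vk$, $\ans=(m,\sigma)$, and observe that $\mathsf{Col}$ produces two valid signatures on (with constant probability) distinct messages under the same $\vk$, which breaks one-shot MAC security. The only cosmetic differences are that the paper samples $m\gets\bit^\ell$ uniformly (so the message-collision probability is $2^{-\ell}$ rather than $1/2$) and carries out the statistical-distance transfer via averaging-set arguments instead of your more direct bound; neither changes the substance of the proof.
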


\begin{proof}
Let $(\Setup,\Gen,\Sign,\Ver)$ be a one-shot MAC.
\if0
Without loss of generality, we can assume that $\Gen(\pp)\to(\vk,\sigk)$ runs as follows.
\begin{enumerate}
    \item 
    Apply a unitary $V_\pp$ on $|0....0\rangle$ to generate a state\footnote{In general, $V_\pp|0...0\rangle$ 
    should be written as $\sum_\vk c_\vk|\vk\rangle_\regA|\psi_\vk\rangle_{\regB,\regC}$, and $\sigk$ is contained in the $\regB$ register. However, without changing correctness and security,
    we can use the entire $|\psi_\vk\rangle_{\regB,\regC}$ as $\sigk$.}
    \begin{align}
    V_\pp|0...0\rangle=\sum_{\vk}c_{\vk}|\vk\rangle_\regA|\sigk_\vk\rangle_\regB,
    \end{align}
    where $c_\vk$ is a complex coefficient.
    \item 
    Measure the register $\regA$ to get $\vk$.
    \item 
    Output $\vk$ as the verification key and the register $\regB$ as the signing key.
\end{enumerate}
Moreover, without loss of generality, we can assume that $\Sign(\sigk,m)\to\sigma$ runs as follows.
\begin{enumerate}
    \item 
    Apply a unitary $W$ on $|m\rangle|\sigk\rangle|0....0\rangle$ to generate $|m\rangle|\psi_m\rangle$.
    \item 
    Measure some qubits of $|\psi_m\rangle$ in the computational basis to obtain the result $\sigma$.
    \item 
    Output $\sigma$.
\end{enumerate}
\fi
Define the algorithm $\cC$ as follows:
\begin{enumerate}
    \item 
    Get $\pp$ as input.
    \item 
    Run $(\vk,\sigk)\gets\Gen(\pp)$.
    \item 
    Choose $m_0\gets\bit^\ell$.
    Run $\sigma_0\gets\Sign(\sigk,m_0)$.
    \item 
    Run $\Gen(\pp)$ until $\vk$ is obtained.
    \item
    Choose $m_1\gets\bit^\ell$.
    Run $\sigma_1\gets\Sign(\sigk,m_1)$.
    \item 
    Output $(\vk,m_0,\sigma_0,m_1,\sigma_1)$.
\end{enumerate}

Let $\Pi$ be a POVM element corresponding to the event
that the challenger of the security game of one-shot MACs accepts.
From the correctness of the one-shot MAC, we have
\begin{align}
\sum_{\pp,\mvk}\Pr[(\pp,\mvk)\gets\Setup(1^\secp)]    
\Tr[\Pi(|\mvk\rangle\langle\mvk|\otimes|\pp\rangle\langle\pp|\otimes \cC(\pp))]
\ge1-\negl(\secp).
\end{align}
Let $q$ be a polynomial. Define the set $G$ as
\begin{align}
G\coloneqq\left\{(\pp,\mvk):
\Tr[\Pi(|\mvk\rangle\langle\mvk|\otimes|\pp\rangle\langle\pp|\otimes \cC(\pp))]
\ge1-\frac{1}{q(\secp)}
\right\}.
\end{align}
Then, from the standard average argument,
\begin{align}
\sum_{(\pp,\mvk)\in G}\Pr[(\pp,\mvk)\gets\Setup(1^\secp)]    
\ge1-\negl(\secp).
\end{align}

\if0
From the correctness of the one-shot MAC, we have
\begin{align}
\Pr
\left[
\begin{array}{c}
\top\gets\Ver(\pp,\mvk,\vk,\sigma_0,0^\ell)\\
\wedge\\
\top\gets\Ver(\pp,\mvk,\vk,\sigma_1,1^\ell)\\
\end{array}
:
\begin{array}{ll}
(\pp,\msk)\gets\Setup(1^\secp)\\
(\vk,\sigk)\gets\Gen(\pp)\\
\sigma_0\gets\Sign(\sigk,0^\ell)\\
\sigma_1\gets\Sign(\sigk,1^\ell)\\
\end{array}
\right]
\ge1-\negl(\secp).
\end{align}
Let $q$ be any polynomial.
Let $G$ be the set of $\mvk$ such that
\begin{align}
\Pr
\left[
\begin{array}{c}
\top\gets\Ver(\pp,\mvk,\vk,\sigma_0,0^\ell)\\
\wedge\\
\top\gets\Ver(\pp,\mvk,\vk,\sigma_1,1^\ell)\\
\end{array}
:
\begin{array}{ll}
\pp\gets\Pr_{\Setup(1^\secp)}[\pp|\msk]\\
(\vk,\sigk)\gets\Gen(\pp)\\
\sigma_0\gets\Sign(\sigk,0^\ell)\\
\sigma_1\gets\Sign(\sigk,1^\ell)\\
\end{array}
\right]
\ge\frac{1}{q(\secp)}.
\end{align}
Then from the standard average argument,
we have $\sum_{\mvk\in G}\Pr_{\Setup(1^\secp)}[\mvk]\ge 1-\frac{1}{q(\secp)}-\negl(\secp)$.
\fi

We construct a dCRPuzz $(\mathsf{d}.\Setup,\mathsf{d}.\Samp)$
as follows.
\begin{itemize}
    \item 
    $\mathsf{d}.\Setup(1^\secp)\to\mathsf{d}.\pp:$
    Run $(\pp,\mvk)\gets\Setup(1^\secp)$.
    Output $\mathsf{d}.\pp\coloneqq \pp$.
    \item 
    $\mathsf{d}.\Samp(\mathsf{d}.\pp)\to(\puzz,\ans):$
    Parse $\mathsf{d}.\pp=\pp$.
    Run $(\vk,\sigk)\gets\Gen(\pp)$.
    Choose $m\gets\bit^\ell$.
    Run $\sigma\gets\Sign(\sigk,m)$.
    Output $\puzz\coloneqq \vk$ and $\ans\coloneqq(m,\sigma)$.
    \if0
    Apply the unitary $V_\pp$ on $|0...0\rangle$ to generate
    \begin{align}
       V_\pp|0...0\rangle=\sum_\vk c_\vk |\vk\rangle_\regA|\sigk_\vk\rangle_\regB. 
    \end{align}
    Add the state $\frac{1}{\sqrt{2}}(|0\rangle|m_0\rangle+|1\rangle|m_1\rangle)$ and an ancilla state $|0...0\rangle$ to get
    \begin{align}
       \frac{1}{\sqrt{2}}\sum_\vk c_\vk |\vk\rangle_\regA(|0\rangle_\regC|m_0\rangle_\regD+|1\rangle_\regC|m_1\rangle_\regD)|\sigk_\vk\rangle_\regB|0...0\rangle_\regE,
    \end{align}
    where $m_0$ and $m_1$ are any distinct bit strings (such as $m_0=0...0$ and $m_1=1...1$).
    Apply $W$ on the registers $\regD,\regB,\regE$ to generate
    \begin{align}
       \frac{1}{\sqrt{2}}\sum_\vk c_\vk |\vk\rangle_\regA(|0\rangle_\regC|m_0\rangle_\regD|\psi_{m_0}\rangle_\regE+|1\rangle_\regC|m_1\rangle_\regD|\psi_{m_1}\rangle_\regE).
    \end{align}
    Measure the register $\regA$ to get $\vk$.
    Measure registers $\regC$, $\regD$, and $\regE$ to get $b\in\bit$, $m$ and $\sigma$.
    Output $\puzz\coloneqq \vk$ and $\ans\coloneqq(m,\sigma)$.
    \fi
\end{itemize}
For the sake of contradiction, assume that this is not a dCRPuzz.
Then, for any polynomial $p$, there exists a QPT adversary $\cA$ such that
\begin{align}
\SD((\mathsf{d}.\pp, \cA(\mathsf{d}.\pp))_{\mathsf{d}.\pp\gets \mathsf{d}.\Setup(1^\secp)}, (\mathsf{d}.\pp, \mathsf{Col}(\mathsf{d}.\pp))_{\mathsf{d}.\pp\gets \mathsf{d}.\Setup(1^\secp)}) \le\frac{1}{p(\secp)}
\label{OSSeq}
\end{align}
for infinitely many $\secp\in\mathbb{N}$.
Here $\mathsf{Col}(\mathsf{d}.\pp)$ is the following distribution.
\begin{enumerate}
    \item 
    Run $(\puzz,\ans)\gets\mathsf{d}.\Samp(\mathsf{d}.\pp)$.
    \item 
    Sample $\ans'$ with the conditional probability $\Pr[\ans'|\puzz]\coloneqq\frac{\Pr[(\ans',\puzz)\gets\mathsf{d}.\Samp(\mathsf{d}.\pp)]}{\Pr[\puzz\gets\mathsf{d}.\Samp(\mathsf{d}.\pp)]}$. 
    \item 
    Output $(\puzz,\ans,\ans')$.
\end{enumerate}

From \cref{OSSeq}, we have
\begin{align}
\frac{1}{p(\secp)}
&\ge
\SD((\mathsf{d}.\pp, \cA(\mathsf{d}.\pp))_{\mathsf{d}.\pp\gets \mathsf{d}.\Setup(1^\secp)}, (\mathsf{d}.\pp, \mathsf{Col}(\mathsf{d}.\pp))_{\mathsf{d}.\pp\gets \mathsf{d}.\Setup(1^\secp)}) \\
&\ge
\sum_{\mathsf{d}.\pp}\Pr[\mathsf{d}.\pp\gets\mathsf{d}.\Setup(1^\secp)]
\TD(|\mathsf{d}.\pp\rangle\langle \mathsf{d}.\pp|\otimes\cA(\mathsf{d}.\pp),
|\mathsf{d}.\pp\rangle\langle\mathsf{d}.\pp|\otimes \mathsf{Col}(\mathsf{d}.\pp))\\
&=
\sum_{\pp}\Pr[\pp\gets\Setup(1^\secp)]
\TD(|\pp\rangle\langle \pp|\otimes\cA(\pp),
|\pp\rangle\langle\pp|\otimes \cC(\pp)).
\end{align}
If we define the set $S$ as
\begin{align}
S\coloneqq\left\{
\pp:
\TD(|\pp\rangle\langle \pp|\otimes\cA(\pp),|\pp\rangle\langle\pp|\otimes \cC(\pp))\le\frac{1}{\sqrt{p(\secp)}}
\right\},    
\end{align}
we have
\begin{align}
\sum_{\pp\in S}\Pr[\pp\gets\Setup(1^\secp)]\ge1-\frac{1}{\sqrt{p(\secp)}}
\end{align}
from the standard average argument.

From $\cA$, we can construct a QPT adversary $\cB$ that breaks the security of the one-shot MAC as follows.
\begin{enumerate}
    \item 
    Receive $\pp$ as input.
    \item 
    Run $(\vk,m_0,\sigma_0,m_1,\sigma_1)\gets\cA(\pp)$.
    \item 
    Output
   $(\vk,m_0,\sigma_0,m_1,\sigma_1)$.
\end{enumerate}
The probability that $\cB$ wins is
\begin{align}
&\sum_{\pp,\mvk}\Pr[(\pp,\mvk)\gets\Setup(1^\secp)]
\mbox{Tr}[\Pi(|\mvk\rangle\langle\mvk|\otimes|\pp\rangle\langle\pp|\otimes\cB(\pp))]    \\
&\ge\sum_{(\pp,\mvk)\in G\wedge \pp\in S}\Pr[(\pp,\mvk)\gets\Setup(1^\secp)]\mbox{Tr}[\Pi(|\mvk\rangle\langle\mvk|\otimes|\pp\rangle\langle\pp|\otimes\cB(\pp))]    \\
&=\sum_{(\pp,\mvk)\in G\wedge \pp\in S}\Pr[(\pp,\mvk)\gets\Setup(1^\secp)]\mbox{Tr}[\Pi(|\mvk\rangle\langle\mvk|\otimes|\pp\rangle\langle\pp|\otimes\cA(\pp))]    \\
&\ge\sum_{(\pp,\mvk)\in G\wedge \pp\in S}\Pr[(\pp,\mvk)\gets\Setup(1^\secp)]\mbox{Tr}[\Pi(|\mvk\rangle\langle\mvk|\otimes|\pp\rangle\langle\pp|\otimes\cC(\pp))]    
-\frac{1}{\sqrt{p(\secp)}}\\
&\ge \left(1-\frac{1}{\sqrt{p(\secp)}}\right)\left(1-\frac{1}{q(\secp)}\right)-\frac{1}{\sqrt{p(\secp)}}
\end{align}
for infinitely many $\secp\in\mathbb{N}$.

\if0
Define the following distribution $\cD_{\pp,m_0,m_1}$.
    Here,
    \begin{enumerate}
        \item 
        Apply $V_\pp$ on $|0...0\rangle$ to generate $\sum_\vk c_\vk |\vk\rangle_\regA|\sigk_\vk\rangle_\regB$.
        \item 
        Measure $\regA$. The state becomes $|\vk\rangle|\sigk_\vk\rangle$.
        \item 
        Perform the non-collapsing measurement on this state to sample $v_1$.
        \item 
        Apply $W$ on $(|0\rangle|m_0\rangle+|1\rangle|m_1\rangle)|\sigk_\vk\rangle|0...0\rangle$ to generate
        $|0\rangle|m_0\rangle|\psi_{m_0}\rangle+|1\rangle|m_1\rangle|\psi_{m_1}\rangle$.
        \item 
        Perform the non-collapsing measurement on this state to sample $v_2$.
        \item
        Perform the non-collapsing measurement on this state to sample $v_3$.
        \item 
        Output $(v_1,v_2,v_3)$.
    \end{enumerate}
    It is clear that the sampling problem defined with this distribution is in $\mathbf{SampPDQP}$.
We construct a QPT adversary $\cA^\mathbf{SampPDQP}$ that breaks the one-shot signature scheme
as follows.
\begin{enumerate}
\item 
Receive $\pp$ as input. Choose any $m_0$ and $m_1$. (For example, $m_0$ is the all-zero string and $m_1$ is the all-one string.)
    \item 
    Query $\cD_{\pp,m_0,m_1}$ to the $\mathbf{SampPDQP}$ oracle.
   \item 
    The $\mathbf{SampPDQP}$ oracle outputs $(v_1,v_2,v_3)$.
\end{enumerate}
    It is easy that $v_1$ contains $\vk$. Moreover, 
    $v_2$ and $v_3$ contains valid signatures for $m_0$ and $m_1$. 
    With probability $1/2$, one of them corresponds to $m_0$ and the other corresponds to
    $m_1$, which breaks the security of the one-shot signature scheme.
    If $\mathbf{SampPDQP}\subseteq\mathbf{SampBQP}$, the one-shot signature scheme is broken in QPT.
    \fi
\end{proof}

\section{Commitments}

\subsection{Definitions}
We first remind the definition of commitments we consider.
\begin{definition}[Two-Message Honest-Statistically-Hiding Computationally-Binding Bit Commitments with Classical Communication]
A two-message honest-statistically-hiding and computationally-binding bit commitment scheme with classical communication is a set 
$(\mathsf{S}_1,\mathsf{S}_2,\mathsf{R}_1,\mathsf{R}_2)$ of algorithms such that
\begin{enumerate}
    \item 
    $\mathsf{R}_1(1^\secp)\to (r_1,\psi_R):$
    It is a QPT algorithm that, on input the security parameter $\secp$,
    outputs a classical bit string $r_1$ and an internal quantum state $\psi_R$.
    \item 
    $\mathsf{S}_1(r_1,b)\to (s_1,\psi_S):$
    It is a QPT algorithm that, on input $r_1$ and a bit $b\in\bit$, outputs a bit string $s_1$ and an internal state $\psi_S$.
    \item 
    $\mathsf{S}_2(b,\psi_S)\to s_2:$
    It is a QPT algorithm that, on input $b$ and $\psi_S$, outputs a bit string $s_2$.
    \item 
    $\mathsf{R}_2(\psi_R,s_1,s_2,b)\to \top/\bot:$
    It is a QPT algorithm that, on input $\psi_R$, $s_1$, $s_2$, and $b$, outputs $\top/\bot$.
\end{enumerate}
We require the following properties.
\paragraph{Correctness.}
For all $b\in\bit$,
\begin{align}
\Pr[\top\gets\mathsf{R}_2(\psi_R,s_1,s_2,b):(r_1,\psi_R)\gets\mathsf{R}_1(1^\secp),(s_1,\psi_S)\gets\mathsf{S}_1(r_1,b),s_2\gets\mathsf{S}_2(b,\psi_S)]    
\ge1-\negl(\secp).
\end{align}
\paragraph{Honest statistical hiding.}
For all $b\in\bit$ and for any (not-necessarily-efficient) algorithm $\cA$,
\begin{align}
\Pr[b\gets\cA(\psi_R,s_1):(r_1,\psi_R)\gets\cR_1(1^\secp),s_1\gets\mathsf{S}_1(r_1,b)]    
\le\frac{1}{2}+
\negl(\secp).
\end{align}
\paragraph{Computational binding.}
For any QPT algorithm $\cA$,
\begin{align}
\Pr\left[
\begin{array}{c}
\top\gets\mathsf{R}_2(\psi_R,s_1,s_2,0)\\
\wedge\\
\top\gets\mathsf{R}_2(\psi_R,s_1,s_2',1)\\
\end{array}
:
\begin{array}{r}
(r_1,\psi_R)\gets\mathsf{R}_1(1^\secp)\\
(s_1,s_2,s_2')\gets\cA(r_1)
\end{array}
\right]    
\le
\negl(\secp).
\end{align}

\end{definition}

\subsection{Commitments imply dCRPuzzs}
\begin{theorem}\label{thm:com_to_dCRP}
If two-message honest-statistically-hiding computationally-binding bit commitments with classical communication exist,
then dCRPuzzs exist.
\end{theorem}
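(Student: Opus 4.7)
The plan is to follow the sketch in the technical overview: commit to $0$ and $1$ in superposition, then treat the commitment string as the puzzle and the revealed $(b, \mathsf{decom}_b)$ as the answer. Concretely, I would define $\Setup(1^\secp)$ to run $(r_1, \psi_R) \gets \sR_1(1^\secp)$ and output $\pp \seteq r_1$, and define $\Samp(\pp)$ as follows: prepare $|+\rangle_\regA \otimes |0 \cdots 0\rangle$, run the unitary dilation of $\sS_1(r_1, \cdot)$ controlled on $\regA$ to obtain $\frac{1}{\sqrt{2}} \sum_{b} |b\rangle_\regA |s_1^{(b)}\rangle_\regB |\psi_S^{(b)}\rangle_\regC$ (treating $\sS_1$'s output register coherently before its measurement), measure $\regB$ to get $s_1$, then measure $\regA$ to obtain $b$, then run $s_2 \gets \sS_2(b, \psi_S)$ on $\regC$, and finally output $\puzz \seteq s_1$ and $\ans \seteq (b, s_2)$. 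Note that the joint distribution of $(r_1, s_1, b, s_2)$ generated by $\Samp$ matches the honest execution of the commitment protocol with $b$ drawn uniformly, which is the key structural property that my analysis will exploit.

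Next I would argue by contradiction: assume $(\Setup, \Samp)$ is not a dCRPuzz, so for every polynomial $p$ there is a QPT $\cA$ with $\SD(\{\pp, \cA(\pp)\}, \{\pp, \mathsf{Col}(\pp)\}) \le 1/p(\secp)$ for infinitely many $\secp$. Write $\cA(\pp) = (s_1, (b, s_2), (b', s_2'))$; then $\cA$ approximately samples two independent answers $(b, s_2)$ and $(b', s_2')$ from the $\Samp$-conditional distribution given $(r_1, s_1)$. From the adversary $\cA$ I would construct $\cB$ that, on input $r_1$, runs $\cA$, and if $b \neq b'$ outputs $(s_1, s_2, s_2')$ after possibly swapping $s_2, s_2'$ so that the first opens to $0$ and the second to $1$ (aborting otherwise). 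The reduction breaks computational binding provided (i) $b \neq b'$ happens with noticeable probability and (ii) both $s_2$ and $s_2'$ are accepting openings.

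The correctness of the commitment immediately gives (ii) except with negligible loss: in the true distribution $\mathsf{Col}(\pp)$, both $(s_1, b, s_2)$ and $(s_1, b', s_2')$ are marginally distributed as honest transcripts, so $\sR_2$ accepts each with overwhelming probability. For (i), I would use honest statistical hiding: the distribution of $b$ conditioned on $(r_1, s_1)$ in an honest commit is statistically $\negl(\secp)$-close to uniform on $\{0,1\}$, so in $\mathsf{Col}(\pp)$ the bits $b$ and $b'$ are nearly independent uniform bits conditioned on $(r_1, s_1)$, giving $\Pr[b \neq b'] \ge \tfrac{1}{2} - \negl(\secp)$. Transferring these estimates from $\mathsf{Col}(\pp)$ to $\cA(\pp)$ via the assumed statistical distance bound (chosen with $p$ a sufficiently large polynomial) yields that $\cB$ succeeds with probability at least $\tfrac{1}{2} - 1/p(\secp) - \negl(\secp)$, contradicting computational binding.

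The main obstacle I expect is a careful bookkeeping of the three sources of error, namely the distance between $\cA(\pp)$ and $\mathsf{Col}(\pp)$, the statistical hiding slack controlling $\Pr[b \neq b']$, and the correctness slack for each of $s_2, s_2'$, and verifying that the argument still works in the standard average-over-$\pp$ sense rather than pointwise. A related subtlety is justifying, rigorously, that applying $\sS_1$ coherently on $|+\rangle_\regA$ and then measuring $\regB$ and $\regA$ produces exactly the honest joint distribution of $(r_1, s_1, b, \psi_S)$; this relies on the standard fact that the order of commuting measurements does not matter and on using the unitary dilation of $\sS_1$, and I would spell it out explicitly since the security reduction depends on this equality of distributions.
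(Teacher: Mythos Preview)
Your proposal is correct and follows the route sketched in the paper's technical overview. The paper's formal proof, however, makes a different design choice for $\Samp$: instead of committing to a uniformly random bit $b$ (in superposition) and opening honestly to that $b$, the paper's $\Samp$ always runs $(s_1,\psi_S)\gets\sS_1(r_1,0)$ (committing to $0$) and then sets $\ans=(b,s_2)$ with $b\gets\{0,1\}$ and $s_2\gets\sS_2(b,\psi_S)$. In your construction every pair $(\puzz,\ans)$ is a genuine honest transcript, so correctness alone guarantees that both $\ans$ and $\ans'$ in $\mathsf{Col}(\pp)$ are accepted with overwhelming probability, and hiding is invoked only once, to show $\Pr[b\neq b']\ge\tfrac12-\negl(\secp)$ via the posterior-on-$b$ argument you outline. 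In the paper's construction the ``open to $1$'' step uses the sender state $\psi_S$ produced when committing to $0$, which is not directly covered by correctness; the paper therefore routes through hiding in a more involved way (defining good sets $G_0,G_1$ and transferring probability mass between the $b=0$ and $b=1$ commitment distributions of $s_1$) to argue that the second opening is accepted. Your approach trades the coherent-superposition bookkeeping---which, as you correctly note, reduces to the commutativity of computational-basis measurements on disjoint registers---for a shorter and more transparent acceptance analysis; both yield an $\Omega(1)$ attack on binding after subtracting the $1/p(\secp)$ slack.
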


\begin{proof}
Let $(\mathsf{R}_1,\mathsf{R}_2,\mathsf{S}_1,\mathsf{S}_2)$ be
a two-message honest-statistically-hiding computationally-binding bit commitment scheme with classical communication.

Define the following algorithm $\cC$:
\begin{enumerate}
    \item 
    Get $r_1$ as input.
    \item 
    Run $(s_1,\psi_S)\gets\mathsf{S}_1(r_1,0)$.
    Run $s_2\gets\mathsf{S}_2(0,\psi_S)$.
    \item 
    Generate $\psi_S$.
    Run $s_2'\gets\mathsf{S}_2(1,\psi_S)$.
   \item 
   Output $(s_1,s_2,s_2')$.
\end{enumerate}
Let $\Pi$ be a POVM element corresponding to the event that the challenger of the security game of binding accepts.
Then, from the correctness and statistical hiding of the commitment scheme,
\begin{align}
\label{correctnessandhiding}
\sum_{r_1}\Pr[r_1\gets\mathsf{R}_1(1^\secp)]    
\Tr[\Pi(\psi_R^{\otimes 2}\otimes \cC(r_1))]\ge1-\frac{1}{q(\secp)}
\end{align}
for a certain polynomial $q$.
We will show it later.
If we define the set 
\begin{align}
V\coloneqq\left\{r_1:
\Tr[\Pi(\psi_R^{\otimes 2}\otimes \cC(r_1))]\ge1-\frac{1}{\sqrt{q(\secp)}}
\right\},
\end{align}
we have
\begin{align}
\label{correctnessandhiding}
\sum_{r_1\in V}\Pr[r_1\gets\mathsf{R}_1(1^\secp)]    
\ge 1-\frac{1}{\sqrt{q(\secp)}}
\end{align}
from the standard average argument.

\if0
Without loss of generality, we can assume that $\mathsf{S}_1$ runs as follows.
\begin{enumerate}
    \item 
    Apply a unitary $U_{r_1}$ on $|b\rangle|0...0\rangle$
    to generate a state $|b\rangle\otimes\left(\sum_{s_1}c_{s_1}|s_1\rangle_\regA\otimes|\phi_{s_1}\rangle_\regB\right)$,
    where $c_{s_1}$ is a complex coefficient and $|\phi_{s_1}\rangle$ is a certain state.
    \item 
    Measure the register $\regA$ to get $s_1$.
    \item 
    The state $|\phi_{s_1}\rangle_\regB$ is kept as the internal state $\psi_S$.
\end{enumerate}
Moreover, without loss of generality, we can assume that $\mathsf{S}_2$ runs as follows.
\begin{enumerate}
    \item 
    Apply a unitary $V$ on $|b\rangle|\psi_S\rangle|0...0\rangle$
    and measure some qubits of the last register to get $s_2$.
    \item 
    Output $s_2$.
\end{enumerate}
\fi

\if0
Define the following algorithm $\cC$:
 \begin{enumerate}
 \item
    Get $r_1$ as input.
   \item 
    Apply $U_{r_1}$ on the state $\frac{1}{\sqrt{2}}(|0\rangle+|1\rangle)|0...0\rangle$
    to generate 
    \begin{align}
    \frac{1}{\sqrt{2}}\left(|0\rangle_\regA\sum_{s_1}c_{s_1}^0|s_1\rangle_\regB|\phi_{s_1}^0\rangle_\regC+|1\rangle_\regA\sum_{s_1}c_{s_1}^1|s_1\rangle_\regB|\phi_{s_1}^1\rangle_\regC\right).
    \end{align}
    \item 
    Add an ancilla register $|0...0\rangle_\regD$
    and
    apply $V$ on the registers $\regA$, $\regC$, and $\regD$.
    \item 
    Measure the register $\regB$ to get $s_1$.
    \item 
    Postselect the register $\regA$ onto 0. Measure the register $\regC$ to get $s_2$.
    \item 
    Postselect the register $\regA$ onto 0. Measure the register $\regC$ to get $s_2$.
    \item 
    Output $\puzz\coloneqq s_1$ and $\ans\coloneqq (b,s_2)$.
\end{enumerate}
Let $\Pi$ be a POVM element corresponding to the event that the challenger of the security game of binding accepts.
Then, from the correctness of the commitment scheme,
\begin{align}
\sum_{r_1}\Pr[r_1\gets\mathsf{R}_1(1^\secp)]    
\Tr[\Pi(\psi_R^{\otimes 2}\otimes \cC(r_1))]\ge1-\negl(\secp).
\end{align}
\fi

From the commitment scheme, we construct a dCRPuzz $(\Setup,\Samp)$ as follows.
\begin{itemize}
    \item 
    $\Setup(1^\secp)\to\pp:$
    Run $(r_1,\psi_R)\gets\mathsf{R}_1(1^\secp)$.
    Output $\pp\coloneqq r_1$.
    \item 
    $\Samp(\pp)\to(\puzz,\ans):$
 \begin{enumerate}
 \item
    Parse $\pp=r_1$.
    \item 
    Run $(s_1,\psi_S)\gets\mathsf{S}_1(r_1,0)$.
    \item 
    Choose $b\gets\bit$.
    \item 
    Run $s_2\gets\mathsf{S}_2(b,\psi_S)$.
    
\if0    
   \item 
    Apply $U_{r_1}$ on the state $\frac{1}{\sqrt{2}}(|0\rangle+|1\rangle)|0...0\rangle$
    to generate 
    \begin{align}
    \frac{1}{\sqrt{2}}\left(|0\rangle_\regA\sum_{s_1}c_{s_1}^0|s_1\rangle_\regB|\phi_{s_1}^0\rangle_\regC+|1\rangle_\regA\sum_{s_1}c_{s_1}^1|s_1\rangle_\regB|\phi_{s_1}^1\rangle_\regC\right).
    \end{align}
    \item 
    Add an ancilla register $|0...0\rangle_\regD$
    and
    apply $V$ on the registers $\regA$, $\regC$, and $\regD$.
    \item 
    Measure the register $\regB$ to get $s_1$.
    \item 
    Measure the registers $\regA$ and $\regD$ to get $b\in\bit$ and $s_2$.
    \fi
    
    \item 
    Output $\puzz\coloneqq s_1$ and $\ans\coloneqq (b,s_2)$.
\end{enumerate}
   
    \if0
    Choose $b\gets\bit$.
    Run $(s_1,\psi_S)\gets\mathsf{S}_1(r_1,b)$.
    Run $s_2\gets\mathsf{S}_2(\psi_S,b)$.
    Output $\puzz\coloneqq s_1$ and $\ans\coloneqq s_2$.
    \fi
\end{itemize}
For the sake of contradiction, assume that it is not a dCRPuzz.
Then for any polynomial $p$, there exists a QPT algorithm $\cA$ such that
\begin{align}
\SD((\pp,\cA(\pp))_{\pp\gets\Samp(1^\secp)},(\pp,\mathsf{Col}(\pp))_{\pp\gets\Samp(1^\secp)})    
\le\frac{1}{p(\secp)}
\end{align}
for infinitely-many $\secp\in\mathbb{N}$.
This means that
\begin{align}
\frac{1}{p(\secp)}    
&\ge \sum_{r_1}\Pr[r_1\gets\mathsf{R}_1(1^\secp)]\TD[\cA(r_1),\mathsf{Col}(r_1)].
\end{align}
If we define the set
\begin{align}
G\coloneqq\{r_1:
\TD[\cA(r_1),\mathsf{Col}(r_1)]\le\frac{1}{\sqrt{p(\secp)}}
\},    
\end{align}
we have
\begin{align}
\sum_{r_1\in G}\Pr[r_1\gets\mathsf{R}_1(1^\secp)]
\ge1-\frac{1}{\sqrt{p(\secp)}}
\end{align}
from the standard average argument. 

From $\cA$, we construct a QPT adversary $\cB$ that breaks the binding of the commitment scheme as follows.
\begin{enumerate}
    \item 
    Get $r_1$ as input.
    \item 
    Run $(\puzz,\ans,\ans')\gets\cA(r_1)$.
    \item 
    Parse $\puzz=s_1$, $\ans=b\|s_2$, and $\ans'=b'\|s_2'$.
    \item 
    Output $(s_1,s_2,s_2')$.
\end{enumerate}
Let $\Pi$ be a POVM element corresponding to the event that the challenger of the security game of the binding accepts.
The probability that $\cB$ wins is
\begin{align}
&\sum_{r_1}\Pr[r_1\gets\mathsf{R}_1(1^\secp)]\Tr[\Pi (\psi_R^{\otimes 2}\otimes \cB(r_1))]   \\
&=\sum_{r_1}\Pr[r_1\gets\mathsf{R}_1(1^\secp)]\Tr[\Pi (\psi_R^{\otimes 2} \otimes \cA(r_1))]\\    
&\ge\sum_{r_1\in G}\Pr[r_1\gets\mathsf{R}_1(1^\secp)]\Tr[\Pi (\psi_R^{\otimes 2} \otimes \cA(r_1))]\\    
&\ge\sum_{r_1\in G}\Pr[r_1\gets\mathsf{R}_1(1^\secp)]\Tr[\Pi (\psi_R^{\otimes 2} \otimes \mathsf{Col}(r_1))]
-\frac{1}{\sqrt{p(\secp)}}\\    
&\ge\frac{1}{4}\sum_{r_1\in G}\Pr[r_1\gets\mathsf{R}_1(1^\secp)]\Tr[\Pi (\psi_R^{\otimes 2} \otimes \cC(r_1))]
-\frac{1}{\sqrt{p(\secp)}}\\    
&\ge\frac{1}{4}\sum_{r_1\in G\cap V}\Pr[r_1\gets\mathsf{R}_1(1^\secp)]\Tr[\Pi (\psi_R^{\otimes 2} \otimes \cC(r_1))]
-\frac{1}{\sqrt{p(\secp)}}\\    
&\ge \frac{1}{4}\left(1-\frac{1}{\sqrt{q(\secp)}}-\frac{1}{\sqrt{p(\secp)}}\right)\left(1-\frac{1}{\sqrt{q(\secp)}}\right)
-\frac{1}{\sqrt{p(\secp)}}\\
&\ge\frac{1}{\poly(\secp)}.
\end{align}

Let us show \cref{correctnessandhiding}.
For each $b\in\bit$, 
because of the correctness,
\begin{align}
\sum_{r_1}\Pr[r_1\gets\mathsf{R}_1(1^\secp)]\sum_{s_1}\Pr[s_1\gets\mathsf{S}_1(r_1,b)]
\sum_{s_2}\Pr[s_2\gets\mathsf{S}_2(\psi_S,b)]\Pr[\top\gets\mathsf{R}_2(\psi_R,s_1,s_2,b)]
\ge1-\negl(\secp).
\end{align}
Let $p$ be a polynomial.
For each $b\in\bit$,
if we define the set 
\begin{align}
G_b\coloneqq\left\{(r_1,s_1):
\sum_{s_2}\Pr[s_2\gets\mathsf{S}_2(\psi_S,b)]\Pr[\top\gets\mathsf{R}_2(\psi_R,s_1,s_2,b)]
\ge1-\frac{1}{p(\secp)}
\right\},    
\end{align}
we have
\begin{align}
\sum_{(r_1,s_1)\in G_b}\Pr[r_1\gets\mathsf{R}_1(1^\secp)]\Pr[s_1\gets\mathsf{S}_1(r_1,b)]
\ge 1-\negl(\secp)
\label{ave}
\end{align}
from the standard average argument.
Because of the statistical hiding,
\begin{align}
\negl(\secp)
&\ge\sum_{(r_1,s_1)}\left|
\Pr[r_1\gets\mathsf{R}_1(1^\secp)]\Pr[s_1\gets\mathsf{S}_1(r_1,0)]
-\Pr[r_1\gets\mathsf{R}_1(1^\secp)]\Pr[s_1\gets\mathsf{S}_1(r_1,1)]\right|\\
&\ge\sum_{(r_1,s_1)\in G_1}\left|
\Pr[r_1\gets\mathsf{R}_1(1^\secp)]\Pr[s_1\gets\mathsf{S}_1(r_1,0)]
-\Pr[r_1\gets\mathsf{R}_1(1^\secp)]\Pr[s_1\gets\mathsf{S}_1(r_1,1)]\right|\\
&\ge\left|
\sum_{(r_1,s_1)\in G_1}
\Pr[r_1\gets\mathsf{R}_1(1^\secp)]\Pr[s_1\gets\mathsf{S}_1(r_1,0)]
-
\sum_{(r_1,s_1)\in G_1}
\Pr[r_1\gets\mathsf{R}_1(1^\secp)]\Pr[s_1\gets\mathsf{S}_1(r_1,1)]\right|.
\end{align}
Therefore from the last inquality and \cref{ave} with $b=1$, we have
\begin{align}
\sum_{(r_1,s_1)\in G_1}\Pr[r_1\gets\mathsf{R}_1(1^\secp)]\Pr[s_1\gets\mathsf{S}_1(r_1,0)]
\ge 1-\negl(\secp).
\end{align}

Hence
\begin{align}
&\sum_{r_1}\Pr[r_1\gets\mathsf{R}_1(1^\secp)]    
\Tr[\Pi(\psi_R^{\otimes 2}\otimes \cC(r_1))]\\
&=
\sum_{(r_1,s_1)}
\Pr[r_1\gets\mathsf{R}_1(1^\secp)]\Pr[s_1\gets\mathsf{S}_1(r_1,0)]
\sum_{s_2}\Pr[s_2\gets\mathsf{S}_2(\psi_S,0)]\Pr[\top\gets\mathsf{R}_2(\psi_R,s_1,s_2,0)]\\
&\times\sum_{s_2'}\Pr[s_2'\gets\mathsf{S}_2(\psi_S,1)]\Pr[\top\gets\mathsf{R}_2(\psi_R,s_1,s_2',1)]\\
&\ge
\sum_{(r_1,s_1)\in G_0\cap G_1}
\Pr[r_1\gets\mathsf{R}_1(1^\secp)]\Pr[s_1\gets\mathsf{S}_1(r_1,0)]
\sum_{s_2}\Pr[s_2\gets\mathsf{S}_2(\psi_S,0)]\Pr[\top\gets\mathsf{R}_2(\psi_R,s_1,s_2,0)]\\
&\times\sum_{s_2'}\Pr[s_2'\gets\mathsf{S}_2(\psi_S,1)]\Pr[\top\gets\mathsf{R}_2(\psi_R,s_1,s_2',1)]\\
&\ge
(1-\negl(\secp))\left(1-\frac{1}{p(\secp)}\right)^2.
\end{align}
\end{proof}

\ifnum\anonymous=1
\else
{\bf Acknowledgements.}
TM is supported by
JST CREST JPMJCR23I3,
JST Moonshot R\verb|&|D JPMJMS2061-5-1-1, 
JST FOREST, 
MEXT QLEAP, 
the Grant-in Aid for Transformative Research Areas (A) 21H05183,
and 
the Grant-in-Aid for Scientific Research (A) No.22H00522.
YS is supported by JST SPRING, Grant Number JPMJSP2110.
\fi

\ifnum\submission=0
\bibliographystyle{alpha} 
\else
\bibliographystyle{splncs04}
\fi
\bibliography{abbrev3,crypto,reference,text}


\end{document}